\documentclass[12pt]{article}

\usepackage{amsmath,amssymb,amsthm}
\usepackage{geometry}
\usepackage{tikz,tkz-graph,tikz-cd,tikz-qtree,youngtab}
\usepackage{array,booktabs}

\newtheorem{theorem}{Theorem}[section]
\newtheorem{lemma}[theorem]{Lemma}
\newtheorem{corollary}[theorem]{Corollary}

\newtheorem{proposition}[theorem]{Proposition}

\theoremstyle{remark}
\newtheorem{remark}[theorem]{Remark}
\theoremstyle{definition}
\newtheorem{example}[theorem]{Example}
\theoremstyle{definition}
\newtheorem{definition}[theorem]{Definition}

\begin{document}
	
\bibliographystyle{abbrv}
	
\title{The Arithmetic Singleton Bound on the Hamming Distances of Simple-root Constacyclic Codes over Finite Fields}
\author{Li Zhu$^{1}$ and Hongfeng Wu$^{2}$\footnote{Corresponding author.}
\setcounter{footnote}{-1}
\footnote{E-Mail addresses:
lizhumath@pku.edu.cn (L. Zhu), whfmath@gmail.com (H. Wu)}
\\
{1.~School of Mathematical Sciences, Guizhou Normal University, Guiyang, China}
\\
{2.~College of Science, North China University of technology, Beijing, China}}
	
\date{}
\maketitle
	
\thispagestyle{plain}
\setcounter{page}{1}	
	
\begin{abstract}
In this work, We introduce a new upper bound on the Hamming distance of simple-root constacyclic codes over finite fields, which we call the arithmetic Singleton bound. The main technical tool is the notion of a multiple equal-difference (MED) representation. Via the MED representations of the defining set of the generator polynomial of a simple-root constacyclic code, we obtain a family of upper bounds on its Hamming distance, among which the weakest one coincides with the Singleton bound, while the strongest one is defined to be the arithmetic Singleton bound for this code. Consequently, the arithmetic Singleton bound is always at least as strong as the classical Singleton bound, and is in fact strictly stronger in numerous nontrivial cases. The arithmetic Singleton bound partially measures the restriction on the Hamming distance of a simple-root constacyclic code imposed by its arithmetic structure. In particular, for an irreducible constacyclic code, the MED representations of the defining set of its generator polynomial are completely determined, via which the arithmetic Singleton bound is computed concretely. Finally for any simple-root cyclic code the arithmetic Singleton bound and the BCH bound are compared.
\end{abstract}

\section{Introduction}\label{sec 1}
Let $\mathbb{F}_{q}$ be a finite field with $q$ elements. A linear $[m,k,d]$-code over $\mathbb{F}_{q}$ is a $k$-dimensional 
subspace of $\mathbb{F}_{q}^{m}$ with Hamming distance $d$. The Hamming distance is of great significance in coding theory as 
it controls the ability of the code to correct errors. The Singleton bound is one of the most well-known upper bound on 
the Hamming distance of a linear code, which asserts that the parameters $m$, $k$ and $d$ should satisfy the inequality
$$d \leq m-k+1.$$

Constacyclic codes are a subclass of linear codes which are equipped with additional structure of an ideal in a quotient ring of polynomials. Concretely, for a nonzero element $\lambda$ in $\mathbb{F}_{q}$ and a positive integer $m$, a $\lambda$-constacyclic code of length $m$ over $\mathbb{F}_{q}$ can be naturally identified with an ideal in the ring $\mathbb{F}_{q}[X]/(X^{m}-\lambda)$. Compared with the linear structure alone, we will refer to the structure of an ideal in a polynomial quotient ring of constacyclic codes as their arithmetic structure. The arithmetic structure, on one hand, endows constacyclic codes with advantages in both the theoretical and the applied aspects; for instance constacyclic codes have efficient encoding and decoding algorithms and they can be applied to construct various other kinds of codes, while on the other hand, there is common belief that the arithmetic structure imposes stronger restriction on the Hamming distances of constacyclic codes than the linear structure purely does. Much empirical evidence supports this belief(See for instance \cite{Alderson}, \cite{Huffman}, \cite{MacWilliams} et al..). However, there is currently no rigorous and systematic theory which measures such restriction explicitly and also demonstrate the mechanism through which the restriction happens.

This paper aims to partially establish this theoretical framework. The central tool which to some extent captures the constraint on the Hamming distance of a constacyclic code imposed by its arithmetic structure is the notion of a multiple equal-difference representation (a MED representation) of the defining set of a squarefree polynomial. Let $f(X)$ be a squarefree polynomial over $\mathbb{F}_{q}$ with order $n$ and defining set $T_{f}$. A MED representation of $T_{f}$ is a partition
$$T_{f} = \bigsqcup_{i=1}^{s}E_{i}$$
of $T_{f}$ into some cosets $E_{i}$ with respect to an additive subgroup $d\mathbb{Z}/n\mathbb{Z} \subseteq \mathbb{Z}/n\mathbb{Z}$. The MED representations of $T_{f}$ describe the symmetry of $T_{f}$, or equivalently, the symmetry of the set of roots of $f(X)$. By Proposition \ref{prop 1}, which gives a one-to-one correspondence between the equal-difference subsets of $\mathbb{Z}/n\mathbb{Z}$ with the binomial factors of $f(X)$, the symmetry of $T_{f}$ can be translated into results on the coefficients of $f(X)$, in particular, on the number of nonzero coefficients of $f(X)$. Consider the simple-root $\lambda$-constacyclic code of length $m$
$$\mathcal{C} = (f(X)) \subseteq \mathbb{F}_{q}[X]/(X^{m}-\lambda),$$
where $X^{m}-\lambda$ is a multiple of $f(X)$. The MED representations of $T_{f}$ give rise to a family of upper bounds on the Hamming distance of $\mathcal{C}$, among which the weakest one coincides with the Singleton bound. As the Singleton bound can be obtained from the linear structure of $\mathcal{C}$, the strongest bound lying in this family can be viewed to partially measure the constraint on the Hamming distance of $\mathcal{C}$ imposed by its arithmetic structure. For this reason we refer to it as the arithmetic Singleton bound.

By its definition the arithmetic Singleton bound is always stronger than the Singleton bound. Moreover, the criterion for these two bounds coinciding, and hence for the arithmetic Singleton bound being strictly stronger than the Singleton bound, is obtained, from which one sees that there are infinitely many nontrivial families of constacyclic codes for which the arithmetic Singleton bound is strictly stronger than the Singleton bound.

In particular, for any irreducible constacyclic code the arithmetic Singleton bound, along with its difference from the Singleton bound, can be obtained concretely. Let $\mathcal{C}$ be a constacyclic code generated by an irreducible polynomial $f(X)$ over $\mathbb{F}_{q}$, and let $n$ be the order of $f(X)$. We prove that the arithmetic Singleton bound is determined by $\mathrm{ord}_{\mathrm{rad}(n)}(q)$, up to a constant factor which is either $1$ or $2$. Therefore if the prime divisors of $n$ are fixed, the arithmetic Singleton bound is bounded, and consequently the difference of the arithmetic Singleton bound and the Singleton bound grows dramatically when $n$ grows.

The paper is organized as follows. Section \ref{sec 2} is devoted to recalling some basic definitions and results, and to 
fixing notations. In Section \ref{sec 3} we generalize the definition of a MED representation of a cyclotomic coset, which is 
introduced in \cite{Zhu3}, to the general case of the defining set $T_{f}$ of a squarefree polynomial $f(X)$ over 
$\mathbb{F}_{q}$ with a nonzero constant term. The structure of the space $\mathcal{MER}(T_{f})$ of the MED representations of 
$T_{f}$ is characterized. In Section \ref{sec 4} we establish the arithmetic Singleton bound. We prove that each MED 
representation of the defining set of the generator polynomial $f(X)$ of a simple-root constacyclic code $\mathcal{C}$ induces 
an upper bound on the Hamming distance of $\mathcal{C}$, among which the weakest one coincides with the Singleton bound and the 
strongest one is defined to be the arithmetic Singleton bound.
The criterion for the arithmetic Singleton bound being equal to the classical Singleton bound is obtained. In the case where 
$\mathcal{C}$ is irreducible, the arithmetic Singleton bound for $\mathcal{C}$ is computed explicitly. In particular, we show that 
the arithmetic Singleton bound for $\mathcal{C}$ is completely determined by $q$ and the order of $f(X)$. Finally in Section \ref{sec 5} the arithmetic Singleton bound and the BCH bound for simple-root cyclic codes are compared. An equivalent condition on that the arithmetic Singleton bound coincides with the BCH bound, in which case the Hamming distance can be obtained directly, is presented. As corollaries, an equivalent condition for an irreducible cyclic code to be of Hamming distance $2$, and a sufficient condition for it to be of Hamming distance $3$ are given.
	
\section{Preliminaries}\label{sec 2}
\subsection{Basic number theory}
Throughout this paper, it is assumed that $p$ is a prime number, $q=p^{e}$ is a power of $p$, and $n$ is a positive integer not divided by $p$. Denote by $\mathbb{Z}/n\mathbb{Z}$ the ring of residue classes modulo $n$. As there is a canonical surjection
$$\mathbb{Z} \rightarrow \mathbb{Z}/n\mathbb{Z}: \quad \gamma \mapsto \overline{\gamma} = \gamma+n\mathbb{Z},$$
if making no confusion, we identify $\gamma$ with its image $\overline{\gamma}$. Therefore the notation $\gamma \in \mathbb{Z}/n\mathbb{Z}$ actually stands for the image of $\gamma$ in $\mathbb{Z}/n\mathbb{Z}$.

Assume that $n$ is factorized into a product of prime powers as $n = p_{1}^{e_{1}}\cdots p_{s}^{e_{s}}$, where $p_{1},\cdots,p_{s}$ are distinct primes, and $e_{1},\cdots,e_{s}$ are positive integers. The radical of $n$ is defined by $\mathrm{rad}(n) = p_{1}\cdots p_{s}$.

If $m$ and $n$ are coprime integers, we denote by $\mathrm{ord}_{n}(m)$ the order of $m$ in the multiplicative group $(\mathbb{Z}/n\mathbb{Z})^{\ast}$, i.e., the smallest positive integer such that 
$$m^{\mathrm{ord}_{n}(m)} \equiv 1 \pmod{n}.$$
It is clear that $\mathrm{ord}_{n}(m)$ divides the order $\phi(n)$ of $(\mathbb{Z}/n\mathbb{Z})^{\ast}$.

Let $\ell$ be a prime. Denote by $v_{\ell}(n)$ the $\ell$-adic valuation of $n$, i.e., the maximal integer such that $\ell^{v_{\ell}(n)} \mid n$. The following lift-the-exponent lemmas are well-known.

\begin{lemma}{\cite{Nezami}}\label{lem 3}
	Let $\ell$ be an odd prime number, and $m$ be an integer such that $\ell \mid m-1$. Then $v_{\ell}(m^{d}-1) = v_{\ell}(m-1) + v_{\ell}(d)$ for any positive integer $d$.
\end{lemma}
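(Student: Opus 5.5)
We will prove the statement in two stages: first the case where $\ell \nmid d$, then the case $d = \ell$. The general case then follows by induction on $v_{\ell}(d)$, writing $d = \ell^{k} d'$ with $\ell \nmid d'$.

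\emph{Stage 1: $\ell \nmid d$.} Factor
$$m^{d}-1 = (m-1)\bigl(m^{d-1}+m^{d-2}+\cdots+m+1\bigr).$$
Since $m \equiv 1 \pmod{\ell}$, each term of the second factor is congruent to $1$ modulo $\ell$, so the second factor is congruent to $d \not\equiv 0 \pmod{\ell}$. Hence $v_{\ell}(m^{d}-1) = v_{\ell}(m-1) = v_{\ell}(m-1)+v_{\ell}(d)$. If $m = 1$, both sides are infinite, which is the degenerate convention; we implicitly assume $m \neq 1$.

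\emph{Stage 2: $d = \ell$.} Use the same factorization, now with second factor $S = \sum_{j=0}^{\ell-1} m^{j}$. Write $m = 1 + \ell t$ with $t \in \mathbb{Z}$. Expanding binomially modulo $\ell^{2}$ gives $m^{j} \equiv 1 + j\ell t \pmod{\ell^{2}}$, so
$$S \equiv \ell + \ell t \cdot \frac{\ell(\ell-1)}{2} \pmod{\ell^{2}}.$$
This is the one place where oddness of $\ell$ enters: because $\ell$ is odd, $(\ell-1)/2$ is an integer, so the second term is divisible by $\ell^{2}$. Therefore $S \equiv \ell \pmod{\ell^{2}}$, i.e.\ $v_{\ell}(S) = 1$. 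This gives $v_{\ell}(m^{\ell}-1) = v_{\ell}(m-1)+1$. I expect this congruence computation to be the only real obstacle; for $\ell = 2$ it genuinely fails, e.g.\ $m = 3$.

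\emph{Induction.} Apply Stage 2 repeatedly to $m, m^{\ell}, m^{\ell^{2}}, \dots$, each of which is still $\equiv 1 \pmod{\ell}$. This gives $v_{\ell}(m^{\ell^{k}}-1) = v_{\ell}(m-1)+k$. Then apply Stage 1 to $m^{\ell^{k}}$ with exponent $d'$ to obtain
$$v_{\ell}(m^{d}-1) = v_{\ell}(m-1) + k = v_{\ell}(m-1) + v_{\ell}(d),$$
as claimed.
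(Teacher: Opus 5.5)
Your argument is correct. The coprime case uses $\sum_{j<d} m^{j} \equiv d \pmod{\ell}$, the case $d=\ell$ uses $\sum_{j<\ell} m^{j} \equiv \ell \pmod{\ell^{2}}$ (this is where oddness of $\ell$ enters), and induction on $v_{\ell}(d)$ finishes; the paper gives no proof of its own and simply cites the lifting-the-exponent note \cite{Nezami}, whose standard proof is exactly this two-step reduction.
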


\begin{lemma}{\cite{Nezami}}\label{lem 2}
	Let $m$ be an odd integer, and $d$ be a positive integer.
	\begin{description}
		\item[(1)] If $m \equiv 1 \pmod{4}$, then
		$$v_{2}(m^{d}-1) = v_{2}(m-1) + v_{2}(d), \  v_{2}(m^{d}+1) = 1.$$
		\item[(2)] If $m \equiv 3 \pmod{4}$ and $d$ is odd, then
		$$v_{2}(m^{d}-1) = 1, \  v_{2}(m^{d}+1) = v_{2}(m+1).$$
		\item[(3)] If $m \equiv 3 \pmod{4}$ and $d$ is even, then
		$$v_{2}(m^{d}-1) = v_{2}(m+1) + v_{2}(d), \  v_{2}(m^{d}+1) = 1.$$
	\end{description}
\end{lemma}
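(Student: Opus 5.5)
The plan is to prove Lemma \ref{lem 2} by direct factorization, handling the three cases through the parity of $d$ and the residue of $m$ modulo $4$.

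\textbf{The value $v_{2}(m^{d}+1)$.} I would compute this first, since it is short in every case. If $d$ is even, then $m^{d}=(m^{d/2})^{2}\equiv 1 \pmod{8}$, so $m^{d}+1\equiv 2\pmod 8$ and the valuation is $1$. If $d$ is odd and $m\equiv 1\pmod 4$, then $m^{d}\equiv 1\pmod 4$, so again $v_{2}(m^{d}+1)=1$. If $d$ is odd and $m\equiv 3\pmod 4$, I would factor
$$m^{d}+1=(m+1)(m^{d-1}-m^{d-2}+\cdots+1).$$
The second factor is a sum of $d$ odd terms, hence odd, so $v_{2}(m^{d}+1)=v_{2}(m+1)$. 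This handles the $+1$ claims in (1), (2) and (3).

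\textbf{The value $v_{2}(m^{d}-1)$ for odd $d$.} Here I would use
$$m^{d}-1=(m-1)(m^{d-1}+\cdots+1),$$
where the second factor is odd for the same reason. So $v_{2}(m^{d}-1)=v_{2}(m-1)$. This equals $1$ when $m\equiv 3\pmod 4$, which gives (2), and it agrees with (1) because $v_{2}(d)=0$.

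\textbf{The value $v_{2}(m^{d}-1)$ for even $d$.} Write $d=2^{k}t$ with $t$ odd and $k\ge 1$. By the odd case applied to $m^{2^{k}}$, we get $v_{2}(m^{d}-1)=v_{2}(m^{2^{k}}-1)$. Then I would telescope:
$$m^{2^{k}}-1=(m-1)(m+1)(m^{2}+1)\cdots(m^{2^{k-1}}+1).$$
Each factor $m^{2^{j}}+1$ with $j\ge 1$ has valuation exactly $1$, by the first step. Hence
$$v_{2}(m^{2^{k}}-1)=v_{2}(m-1)+v_{2}(m+1)+(k-1).$$
\begin{itemize}
\item If $m\equiv 1\pmod 4$, then $v_{2}(m+1)=1$, so the valuation is $v_{2}(m-1)+k$, which completes (1).
\item If $m\equiv 3\pmod 4$, then $v_{2}(m-1)=1$, so the valuation is $v_{2}(m+1)+k$, which gives (3).
\end{itemize}

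There is no real obstacle here. The only point needing care is the bookkeeping of the telescoping product: the factor $m-1$ together with $m+1$ contributes one more than $k-1$ ones, and the $+1$ factors must each be checked to have valuation exactly $1$.
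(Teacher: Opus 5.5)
Your proof is correct and complete; the degenerate cases $m=\pm 1$ are covered by the usual convention $v_{2}(0)=\infty$. The paper gives no proof of its own and only cites the lifting-the-exponent lemma from \cite{Nezami}. Your argument is the standard proof of the $p=2$ case: you discard the odd part of $d$, then telescope $m^{2^{k}}-1=(m-1)(m+1)\prod_{j=1}^{k-1}(m^{2^{j}}+1)$, where each factor $m^{2^{j}}+1$ has valuation exactly $1$.
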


\subsection{Finite fields}
Let $\mathbb{F}_{q}$ be a finite field containing $q$ elements, and $\mathbb{F}_{q}^{\ast}$ be the multiplicative group  of nonzero elements in $\mathbb{F}_{q}$. For any $\lambda \in \mathbb{F}_{q}^{\ast}$, the order $\mathrm{ord}(\lambda)$ of $\lambda$ is defined to be the smallest positive integer $n$ such that $\lambda^{n}=1$. It is clear that $\mathrm{ord}(\lambda)$ divides $q-1$.

Given a positive integer $n$ not divisible by $p$, there are $n$ roots of $X^{n}-1$ lying in some finite extension of $\mathbb{F}_{q}$. A root $\zeta_{n}$ of $X^{n}-1$ that is not a root of $X^{m}-1$ for any $m < n$ is called a primitive $n$-th root of unity. Throughout this paper we fix a family 
$$\{\zeta_{n} \ | \ \mathrm{gcd}(n,q)=1\}$$
of primitive roots of unity, which is compatible in the sense that for any $m,n \in \mathbb{N}^{+}$ such that $m \mid n$ and $\mathrm{gcd}(n,q)=1$ it holds that 
$$\zeta_{n}^{\frac{n}{m}} = \zeta_{m}.$$

In this paper, unless stating otherwise, $f(X)$ always denotes a nonconstant squarefree monic polynomial over $\mathbb{F}_{q}$, with a nonzero constant term. There is a smallest positive integer $n$ such that $f(X) \mid X^{n}-1$. The integer $n$ is called the order of $f(X)$ and is denoted by $n=\mathrm{ord}(f)$. If $f(X)$ is irreducible over $\mathbb{F}_{q}$, then $\mathrm{ord}(f) = \mathrm{ord}(\delta)$ for any root $\delta$ of $f(X)$ lying in the finite extension field $\mathbb{F}_{q}[X]/(f(X))$ of $\mathbb{F}_{q}$.

Assume that the order of $f(X)$ is $n$. Then $f(X)$ can be written as 
$$f(X) = \prod_{\gamma \in T_{f}}(X - \zeta_{n}^{\gamma})$$
for some $T_{f} \subseteq \mathbb{Z}/n\mathbb{Z}$. This set $T_{f}$ is referred to as the defining set of $f(X)$.

Let $n$ be a positive integer coprime to the prime power $q$. For any $\gamma \in \mathbb{Z}/n\mathbb{Z}$, the $q$-cyclotomic coset modulo $n$ containing $\gamma$ is defined to be 
$$c_{n/q}(\gamma) = \{\gamma,\gamma q,\cdots,\gamma q^{\tau-1}\} \subseteq \mathbb{Z}/n\mathbb{Z},$$
where $\tau$ is the smallest positive integer such that $\gamma q^{\tau}\equiv\gamma\pmod n$. 

It is well known that there is a one-to-one correspondence between the $q$-cyclotomic cosets modulo $n$ and the irreducible factors of $X^{n}-1$ over $\mathbb{F}_{q}$. Fixing a primitive $n$-th root of unity $\zeta_{n}$, we assigns to $c_{n/q}(\gamma) \in \mathcal{C}_{n/q}$ the irreducible polynomial
$$M_{\gamma}(X) = (X-\zeta_{n}^{\gamma})(X-\zeta_{n}^{\gamma q})\cdots(X-\zeta_{n}^{\gamma q^{\tau-1}}).$$
It is straightforward to verify if $\gamma$ is coprime to $n$ then $c_{n/q}(\gamma)$ is the defining set of $f(X)$; otherwise, setting $n_{\gamma} = \frac{n}{\mathrm{gcd}(n,\gamma)}$ and $\widetilde{\gamma} = \frac{\gamma}{\mathrm{gcd}(n,\gamma)}$, then $c_{n_{\gamma}/q}(\widetilde{\gamma})$ is the defining set of $f(X)$. We call $c_{n_{\gamma}/q}(\widetilde{\gamma})$ the primitive form of $c_{n/q}(\gamma)$.

\subsection{Constacyclic codes over finite fields}
Let $m$ be a positive integer. A linear $[m,k,d]$-code over $\mathbb{F}_{q}$ is a $k$-dimensional subspace $\mathcal{C}$ of $\mathbb{F}_{q}^{m}$, with the Hamming distance $d$. Here for a codeword $c = (c_{0},c_{1},\cdots,c_{m-1}) \in \mathcal{C}$, the Hamming weight $\mathrm{wt}_{H}(c)$ of $c$ is the number of the nonzero terms among $c_{0},c_{1},\cdots,c_{m-1}$, and the Hamming distance $d_{H}(\mathcal{C})$ of $\mathcal{C}$ is the minimal Hamming weight of any nonzero codeword in $\mathcal{C}$. The Hamming distance of $\mathcal{C}$ plays a fundamental role as it captures the ability of $\mathcal{C}$ to correct errors. The Singleton bound is one of the most classical upper bound for the Hamming distance of a linear code, which is given below.

\begin{theorem}
	Let $\mathcal{C}$ be a linear $[m,k,d]$-code. Then
	$$d \leq m-k+1.$$
\end{theorem}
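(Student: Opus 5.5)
The plan is the standard projection (puncturing) argument. Let $\mathcal{C} \subseteq \mathbb{F}_{q}^{m}$ be a linear $[m,k,d]$-code. The Singleton bound holds trivially when $k=0$, so we assume $k \geq 1$, and we compare $\mathcal{C}$ with a shorter space by deleting coordinates.

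Define the linear map
$$\pi: \mathcal{C} \rightarrow \mathbb{F}_{q}^{m-d+1}, \quad (c_{0},c_{1},\cdots,c_{m-1}) \mapsto (c_{0},c_{1},\cdots,c_{m-d}).$$
It keeps the first $m-d+1$ coordinates and deletes the last $d-1$. Since $d \leq m$, this is well defined.

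The key step is to show that $\pi$ is injective. Suppose $c \in \ker \pi$. Then every nonzero entry of $c$ lies among the last $d-1$ positions, so $\mathrm{wt}_{H}(c) \leq d-1$. But every nonzero codeword of $\mathcal{C}$ has weight at least $d = d_{H}(\mathcal{C})$, so $c$ must be $0$. By linearity, $\pi$ is therefore injective.

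Comparing dimensions then gives
$$k = \dim \mathcal{C} \leq m-d+1,$$
which is exactly $d \leq m-k+1$. An equivalent counting version avoids linear maps: $\mathcal{C}$ has $q^{k}$ codewords, and two of them agreeing on the first $m-d+1$ coordinates would have a nonzero difference of weight at most $d-1$. Hence $q^{k} \leq q^{m-d+1}$.

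There is no real obstacle here. The only care needed is the degenerate case $k=0$, set aside above, and checking that $m-d+1 \geq 1$ so that the projection makes sense.
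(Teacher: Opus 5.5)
Your proof is correct. It is the standard puncturing argument, and its key point is right: a nonzero codeword vanishing on the first $m-d+1$ coordinates would have weight at most $d-1$, so the projection is injective and $k\le m-d+1$. The paper gives no proof of this statement; it quotes the Singleton bound as a classical fact about arbitrary linear codes.

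The closest the paper comes is Corollary~\ref{coro 4}, which rederives the bound only for a constacyclic code $\mathcal{C}=(f(X))$. There the generator polynomial $f(X)$ is itself a codeword with at most $\deg(f)+1$ nonzero coefficients. Together with $\dim\mathcal{C}=m-\deg(f)$, this gives $d_H(\mathcal{C})\le \deg(f)+1=m-k+1$. That route needs the ideal structure, but it exhibits an explicit low-weight codeword. The MED machinery sharpens exactly this step by writing $f(X)=f_0(X^{n/d})$ to get fewer possible nonzero coefficients. Your route uses only linearity and so covers every linear code, but it does not produce a witness codeword.

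One small correction: under the paper's definition (minimal weight of a nonzero codeword), the case $k=0$ is not ``trivially true.'' It lies outside the statement, because $d$ is undefined for the zero code.
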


For any $\lambda \in \mathbb{F}_{q}^{\ast}$, the $\lambda$-constacyclic shift on $\mathbb{F}_{q}^{m}$ is given by
$$\tau_{\lambda}(c_{0},c_{1},\cdots,c_{m-1}) = (\lambda c_{m-1},c_{0},\cdots,c_{m-2}).$$

\begin{definition}
	A linear $[m,k,d]$-code $\mathcal{C}$ is said to be a $\lambda$-constacycic code if $\tau_{\lambda}(\mathcal{C}) = \mathcal{C}$. In particular, $\mathcal{C}$ is called cyclic if $\lambda = 1$, and is called negacyclic if $\lambda = -1$.
\end{definition}

As the $\mathbb{F}_{q}$-spaces $\mathbb{F}_{q}^{m}$ and $\mathbb{F}_{q}[X]/ (X^{m}- \lambda)$ are isomorphic, each word $c = (c_{0},c_{1},\cdots,c_{m-1})$ in a linear code $\mathcal{C}$ can be identified with a unique polynomial
$$c(X) = c_{0} + c_{1}X + \cdots + c_{m-1}X^{m-1} \in \mathbb{F}_{q}[X]/ (X^{m}- \lambda),$$
which is called the polynomial representation of $c$. Under this identification, the linear code $\mathcal{C}$ is $\lambda$-constacyclic if and only if it is an ideal of $\mathbb{F}_{q}[X]/ (X^{m}- \lambda)$. A $\lambda$-constacyclic code $\mathcal{C}$ of length $m$ is said to be simple-root if $p \nmid m$, while is said to be repeated-root if $p \mid m$.

Under the above identification, the Hamming weight of a codeword $c \in \mathcal{C}$ is exactly the number of nonzero coefficients of $c(X)$. Following this fact, we call the number of nonzero coefficients of any polynomial $f(X) \in \mathbb{F}_{q}[X]$ the Hamming weight of $f(X)$, and denote it by $\mathrm{wt}_{H}(f)$.

Let $\mathcal{C}$ be a simple-root cyclic code of length $n$ over $\mathbb{F}_{q}$, with generator polynomial $f(X)$. Assume that the order of $f(X)$ is $n$, and denote by $T_{f} \subseteq \mathbb{Z}/n\mathbb{Z}$ the defining set of $f(X)$. We say that $T_{f}$ contains $\delta$ consecutive integers for some positive integer $\delta$, if there is some integer $b$ such that $\{b,b+1,\cdots,b+\delta-1\} \subseteq T_{f}$. The BCH bound for $\mathcal{C}$ can be stated as follows.

\begin{theorem}
	Let the notations be given as above. If $T_{f}$ contains $\delta$ consecutive integers, then $d_{H}(\mathcal{C}) \geq \delta+1$.
\end{theorem}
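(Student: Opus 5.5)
This is the classical BCH bound. I will argue through the parity-check matrix given by the consecutive roots together with a Vandermonde determinant, following the standard route.

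Let $c(X)=\sum_{i=0}^{n-1}c_iX^i$ be a nonzero codeword. Since $\mathcal{C}=(f(X))$ in $\mathbb{F}_q[X]/(X^n-1)$ and $f(X)\mid X^n-1$, the polynomial $f(X)$ divides $c(X)$ in $\mathbb{F}_q[X]$. Hence $c(\zeta_n^{\gamma})=0$ for every $\gamma\in T_f$, and in particular
$$c(\zeta_n^{b+j})=0,\qquad j=0,1,\dots,\delta-1.$$

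Suppose for contradiction that $\mathrm{wt}_H(c)=w\le\delta$, with support $\{i_1,\dots,i_w\}$. The $\delta$ equations above include the $w$ equations
$$\sum_{k=1}^{w}c_{i_k}\,\zeta_n^{(b+j)i_k}=0,\qquad j=0,\dots,w-1.$$
These form a homogeneous linear system in the unknowns $(c_{i_1},\dots,c_{i_w})$. Its coefficient matrix is $\bigl(\zeta_n^{(b+j)i_k}\bigr)_{j,k}$. Factoring $\zeta_n^{b i_k}$ out of column $k$ leaves the Vandermonde matrix in the nodes $x_k=\zeta_n^{i_k}$, so the determinant equals
$$\prod_{k}\zeta_n^{b i_k}\prod_{k<l}\bigl(\zeta_n^{i_l}-\zeta_n^{i_k}\bigr).$$

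The main point to verify is that this determinant is nonzero. The factors $\zeta_n^{b i_k}$ are nonzero. The $x_k$ are pairwise distinct because $0\le i_k<n$ are distinct and $\zeta_n$ is a primitive $n$-th root of unity; simple roots are guaranteed by $\gcd(n,q)=1$. So the matrix is invertible, which forces all $c_{i_k}=0$ and contradicts $c\neq0$. Therefore every nonzero codeword has weight at least $\delta+1$, which gives $d_H(\mathcal{C})\ge\delta+1$.
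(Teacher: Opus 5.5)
Your proof is correct: it is the standard Vandermonde-determinant argument for the BCH bound. The paper gives no proof of its own here---it only recalls the BCH bound as a classical fact in the preliminaries for use in Section~5---so your argument is the textbook proof the paper implicitly relies on.
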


\section{Multiple equal-difference representations of the defining set of a polynomial}\label{sec 3}
This section is devoted to defining a multiple equal-difference representation of the defining set of a squarefree polynomial with a nonzero constant term and investigating its basic properties, which plays a central role in the construction of arithmetic Singleton bound. This definition is a natural generalization of the notion of a multiple equal-difference representation of a cyclotomic coset introduced in \cite{Zhu3}. Therefore we will begin by recalling some facts in the case of a cyclotomic coset, especially the link of the multiple equal-difference representations of a cyclotomic coset $c_{n/q}(\gamma)$ with the irreducible factorizations of the induced polynomial $M_{\gamma}(X)$ into a product of binomials of the same degree over extension fields of $\mathbb{F}_{q}$, and then generalize these results to the general case. For detailed proofs of the results listed in Section \ref{sec 3.1} we refer the readers to \cite{Zhu3} and \cite{Zhu2}. From now on, we will abbreviate a multiple equal-difference representation to a MED representation.

\subsection{A reminder on MED representations of a cyclotomic coset}\label{sec 3.1}
\begin{definition}
	A subset $E \subseteq \mathbb{Z}/n\mathbb{Z}$ is called an equal-difference set if it has the form
	$$E = \{\gamma,\gamma+d,\cdots,\gamma+(\frac{n}{d}-1)d\},$$
	where $d$ is a positive divisor of $n$ and is called the common difference of $E$. In particular, an equal-difference $q$-cyclotomic coset modulo $n$ is a $q$-cyclotomic coset modulo $n$ that is an equal-difference set.
\end{definition}

For any $\gamma \in \mathbb{Z}/n\mathbb{Z}$ we set $n_{\gamma} = \frac{n}{\mathrm{gcd}(\gamma,n)}$. It is readily seen that for any $\gamma$ and $\gamma^{\prime}$ lying in the same $q$-cyclotomic coset modulo $n$ it holds that $n_{\gamma} = n_{\gamma^{\prime}}$. The criterion for a $q$-cyclotomic coset modulo $n$ being of equal difference is given below.

\begin{proposition}
	The cyclotomic coset $c_{n/q}(\gamma)$ is of equal difference if and only if the following two conditions are satisfied:
	\begin{description}
		\item[(\romannumeral1)] $\mathrm{rad}(n_{\gamma}) \mid q-1$;
		\item[(\romannumeral2)] $q \equiv 1 \pmod{4}$ if $8 \mid n_{\gamma}$.
	\end{description}
\end{proposition}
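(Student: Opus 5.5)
First reduce to the primitive case. Multiplication by $\gamma$ identifies $c_{n_{\gamma}/q}(\widetilde{\gamma})$ with $c_{n/q}(\gamma)$: the map $\mathbb{Z}/n_{\gamma}\mathbb{Z}\to \mathbb{Z}/n\mathbb{Z}$, $x\mapsto \gcd(\gamma,n)x$, is injective and additive. It sends arithmetic progressions of common difference $d'\mid n_\gamma$ to progressions of difference $\gcd(\gamma,n)d'$. Moreover, an equal-difference subset of $\mathbb{Z}/n\mathbb{Z}$ lying inside the image subgroup $\gcd(\gamma,n)\mathbb{Z}/n\mathbb{Z}$ has common difference divisible by $\gcd(\gamma,n)$, so it pulls back. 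Hence we may assume $\gcd(\gamma,n)=1$, i.e. $n_\gamma=n$ and $C=c_{n/q}(\gamma)=\gamma\langle q\rangle$. Multiplying by the unit $\gamma^{-1}$ preserves equal-difference sets, so we may even take $\gamma=1$. Then $C$ is the subgroup $H=\langle q\rangle\subseteq(\mathbb{Z}/n\mathbb{Z})^\ast$, of size $\tau=\mathrm{ord}_n(q)$.

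Next, characterize when $H$ is an equal-difference set $\{1+kd : 0\le k<n/d\}$ with $d\mid n$. Such an $H$ is the full coset $1+d\mathbb{Z}/n\mathbb{Z}$. Every element of this coset must be a unit, so every prime $\ell\mid n$ divides $d$, giving $\mathrm{rad}(n)\mid d$. Conversely, $U_d=\{x\equiv 1 \pmod d\}$ is a subgroup of $(\mathbb{Z}/n\mathbb{Z})^\ast$ of order $n/d$ whenever $\mathrm{rad}(n)\mid d$. So $C$ is equal-difference iff $\langle q\rangle=U_d$ for some $d$ with $\mathrm{rad}(n)\mid d\mid n$. Since $q\in U_d$, this forces $d\mid q-1$ and hence $\mathrm{rad}(n)\mid q-1$, which is condition (i). The natural candidate is $d=\gcd(q-1,n)$. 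Then $\langle q\rangle\subseteq U_d$ automatically, and equality holds iff $\mathrm{ord}_n(q)=n/\gcd(q-1,n)$.

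Now use the lifting-the-exponent lemmas prime by prime, via CRT. Assume $\mathrm{rad}(n)\mid q-1$ and set $d=\gcd(q-1,n)$. Write $\mathrm{ord}_n(q)=\mathrm{lcm}_\ell\, \mathrm{ord}_{\ell^{e_\ell}}(q)$. For an odd prime $\ell$, Lemma \ref{lem 3} gives $v_\ell(q^t-1)=v_\ell(q-1)+v_\ell(t)$. Hence $\mathrm{ord}_{\ell^{e}}(q)=\ell^{\max(0,e-v_\ell(q-1))}$, which is exactly the $\ell$-part of $n/d$. For $\ell=2$ with $q\equiv1\pmod 4$, Lemma \ref{lem 2}(1) gives the same formula. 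For $\ell=2$ with $q\equiv 3\pmod 4$, we have $v_2(q-1)=1$. If $e\le 2$ the formula still holds trivially: $q$ has order $1$ mod $2$ and order $2$ mod $4$. If $e\ge 3$, Lemma \ref{lem 2}(3) gives $v_2(q^{2^j}-1)=v_2(q+1)+j\ge 2+j$. So $\mathrm{ord}_{2^e}(q)<2^{e-1}=|U_2|$ in $(\mathbb{Z}/2^e\mathbb{Z})^\ast$, and $\langle q\rangle\ne U_d$.

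It remains to combine the local orders. Under the conditions, all local orders are prime powers for distinct primes, so their lcm is their product, which equals $n/d=|U_d|$. This gives sufficiency. For necessity, condition (i) was shown above. For (ii), suppose $8\mid n$ and $q\equiv 3\pmod 4$. Then $2\,\|\,d$ and the $2$-part of the order is too small. Since the order of $\langle q\rangle$ is the lcm of local orders, it falls strictly below $n/d$, and $d=\gcd(q-1,n)$ is the only possible common difference. The main obstacle I expect is precisely this $2$-adic case, $q\equiv 3\pmod 4$ with $8\mid n$: one must check carefully that no other choice of $d$ rescues the equal-difference property. That is settled by noting that any valid $d$ must divide $q-1$ and satisfy $|U_d|=\mathrm{ord}_n(q)$, which forces $d=\gcd(q-1,n)$.
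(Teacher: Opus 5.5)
The paper does not prove this proposition. It is one of the facts in Section~\ref{sec 3.1} quoted from \cite{Zhu3}, so there is no in-paper argument to compare yours with. Your proof is correct and complete.

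Points checked:
\begin{itemize}
\item The reduction is sound. The identification of $c_{n_\gamma/q}(\widetilde{\gamma})$ with $c_{n/q}(\gamma)$ is multiplication by $\gcd(\gamma,n)$, not by $\gamma$ as your first sentence says; your displayed map has it right.
\item The reformulation as $\langle q\rangle=U_d$ with $\mathrm{rad}(n)\mid d\mid\gcd(q-1,n)$ is right.
\item The chain $\langle q\rangle\subseteq U_D\subseteq U_d$, with $D=\gcd(q-1,n)$, is the one-line reason the common difference is forced to be $D$. Spell it out in that form rather than ``which forces''.
\item The $2$-adic step is fine. Lemma~\ref{lem 2}(3) needs an even exponent; applying it to $2^{e-2}$ with $e\ge 3$ gives $\mathrm{ord}_{2^e}(q)\mid 2^{e-2}$. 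That is at most half of the $2$-part $2^{e-1}$ of $n/D$. Under (i) the odd local orders are powers of odd primes, so this deficit survives in the lcm.
\end{itemize}

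Within this paper your route is the natural one. It uses only Lemmas~\ref{lem 3} and~\ref{lem 2}, and your local computation is exactly the case $\omega=1$ of the formula $\tau=\omega\,p_1^{m_1}\cdots p_r^{m_r}$ in the proof of Theorem~\ref{thm 5}. It also gives slightly more than the statement:
\begin{itemize}
\item the common difference must be $\gcd(\gamma,n)\gcd(q-1,n_\gamma)=n/\tau$;
\item conditions (i) and (ii) together are precisely $\omega_\gamma=1$ in Lemma~\ref{lem 4}, i.e.\ the coarsest MED representation has a single block. This is a good consistency check with the framework.
\end{itemize}

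There is a shorter necessity argument for (ii) if you want one. Any admissible $d$ satisfies $v_2(d)=1$, since $2\mid\mathrm{rad}(n)\mid d$ and $d\mid q-1$ with $q\equiv 3\pmod 4$. By CRT the image of $U_d$ in $(\mathbb{Z}/2^e\mathbb{Z})^\ast$ is then the whole group, which is not cyclic for $e\ge 3$. The image of the cyclic group $\langle q\rangle$ is cyclic, so $\langle q\rangle\neq U_d$ for every admissible $d$ at once.
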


In general, a cyclotomic coset is not necessarily of equal difference, however it turns out that it can always be decomposed into a disjoint union of equal-difference subsets. We are particularly interested in those decomposition whose equal-difference components have the same common difference, which leads to the definition below.

\begin{definition}\label{def 1}
	Let $c_{n/q}(\gamma)$ be a $q$-cyclotomic coset modulo $n$. A multiple equal-difference representation (MED representation) of $c_{n/q}(\gamma)$ is a partition
	$$c_{n/q}(\gamma) = \bigsqcup_{i \in I}E_{i},$$
	where $E_{i}$, $i \in I$, are equal-difference sets with the same common difference. We denote by $\mathcal{MER}(c_{n/q}(\gamma))$ the set of all MED representations of $c_{n/q}(\gamma)$.
\end{definition}

There is a natural order on the set $\mathcal{MER}(c_{n/q}(\gamma))$. Let
$$c_{n/q}(\gamma) = \bigsqcup_{i \in I}E_{i} = \bigsqcup_{j \in J}E_{j}^{\prime}$$
be two MED representations of $c_{n/q}(\gamma)$. If the index set $J$ can be partitioned as $J = \bigsqcup\limits_{i \in I}J_{i}$, and for each $i \in I$ the equal-difference set $E_{i}$ can be further decomposed as
$$E_{i} = \bigsqcup_{j \in J_{i}}E_{j}^{\prime},$$
then we say that $\bigsqcup\limits_{i \in I}E_{i}$ is coarser than $\bigsqcup\limits_{j \in J}E_{j}^{\prime}$ (or $\bigsqcup\limits_{j \in J}E_{j}^{\prime}$ is finer than $\bigsqcup\limits_{i \in I}E_{i}$), and write $\bigsqcup\limits_{i \in I}E_{i} \geq \bigsqcup\limits_{j \in J}E_{j}^{\prime}$.

As any one-element subset of $\mathbb{Z}/n\mathbb{Z}$ is of equal-difference, the coset $c_{n/q}(\gamma)$ always admits the trivial MED representation
$$c_{n/q}(\gamma) = \bigsqcup_{i=0}^{\tau-1}\{\gamma q^{i}\},$$
where $\tau = |c_{n/q}(\gamma)|$. Clearly it is the unique finest MED representation of $c_{n/q}(\gamma)$. On the other hand, there is also a unique coarsest MED representation of $c_{n/q}(\gamma)$, given by the Lemma below.

\begin{lemma}\label{lem 4}
	Let
	\begin{equation*}
		\omega_{\gamma} = \left\{
		\begin{array}{lcl}
			2\mathrm{ord}_{\mathrm{rad}(n_{\gamma})}(q), \quad \mathrm{if} \ q^{\mathrm{ord}_{\mathrm{rad}(n_{\gamma})}(q)} \equiv 3 \pmod{4} \ \mathrm{and} \ 8 \mid n_{\gamma};\\
			\mathrm{ord}_{\mathrm{rad}(n_{\gamma})}(q), \quad \mathrm{otherwise}.
		\end{array} \right.
	\end{equation*}
	Then 
	$$c_{n/q}(\gamma) = \bigsqcup_{i=0}^{\omega_{\gamma}-1}c_{n/q^{\omega_{\gamma}}}(\gamma q^{i})$$
	is the unique coarsest MED representation of $c_{n/q}(\gamma)$.
\end{lemma}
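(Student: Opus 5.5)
We reduce to the primitive case and then use the equal-difference criterion stated just above. First, multiplication by $\gcd(\gamma,n)$ is an injective additive map $\mathbb{Z}/n_{\gamma}\mathbb{Z}\to\mathbb{Z}/n\mathbb{Z}$. It carries $c_{n_\gamma/q}(\widetilde\gamma)$ onto $c_{n/q}(\gamma)$ and carries equal-difference sets with common difference $d$ to equal-difference sets with common difference $d\gcd(\gamma,n)$. Any equal-difference subset of $c_{n/q}(\gamma)$ lies in the image, since its elements all have the same gcd with $n$. Hence MED representations correspond bijectively, and we may assume $\gcd(\gamma,n)=1$, so $n_\gamma=n$. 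Write $\tau=|c_{n/q}(\gamma)|=\mathrm{ord}_n(q)$.

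\textbf{Step 1: the displayed decomposition is an MED representation.} We have $c_{n/q}(\gamma)=\bigsqcup_{i=0}^{\omega-1}c_{n/q^{\omega}}(\gamma q^{i})$, since the cosets of $\langle q^{\omega}\rangle$ in $\langle q\rangle$ are represented by $q^i$, $0\le i<\omega$. This needs $\omega\mid\tau$, which holds: $\mathrm{ord}_{\mathrm{rad}(n)}(q)\mid\tau$, and in the exceptional case $q^{\tau}\equiv1\pmod 8$ forces $\tau$ to be an even multiple of $\mathrm{ord}_{\mathrm{rad}(n)}(q)$. Now apply the proposition with $q^{\omega}$ in place of $q$. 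Condition (i) is $\mathrm{rad}(n)\mid q^{\omega}-1$, which is true by the choice of $\omega$. Condition (ii) asks that $q^{\omega}\equiv1\pmod4$ when $8\mid n$, and the doubling in the exceptional case ensures exactly this. So each piece is equal-difference. By Lemma~\ref{lem 3} and Lemma~\ref{lem 2}, the size of each piece is $\tau/\omega=\prod_{\ell}\ell^{\,v_\ell(n)-v_\ell(q^\omega-1)}$ (positive parts). Hence the pieces share the common difference $d=n\omega/\tau$.

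\textbf{Step 2: any MED representation is finer.} Let $c_{n/q}(\gamma)=\bigsqcup E_i$ be an MED representation with common difference $d$. Its pieces are cosets of the subgroup $d\mathbb{Z}/n\mathbb{Z}$, so $c_{n/q}(\gamma)$ is a union of such cosets. The key claim is that $d$ must be a multiple of the difference $d_0=n\omega/\tau$ found in Step 1; then every coset of $d\mathbb{Z}/n\mathbb{Z}$ lies inside a single coset of $d_0\mathbb{Z}/n\mathbb{Z}$, which gives refinement.

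To prove the claim, set $H=1+d\mathbb{Z}/n\mathbb{Z}\cap(\mathbb{Z}/n\mathbb{Z})^*$. Invariance of the union under translation by $d$, together with $\gamma$ being a unit, means that $\gamma H$-type sets lie in $\gamma\langle q\rangle$. Concretely, $\gamma+d\mathbb{Z}\subseteq\gamma\langle q\rangle$ implies $1+\gamma^{-1}d\mathbb{Z}=1+d\mathbb{Z}\subseteq\langle q\rangle$. Thus the kernel $K_{d}=\{u\equiv1 \bmod d\}$ of reduction $(\mathbb{Z}/n)^*\to(\mathbb{Z}/d)^*$ is contained in $\langle q\rangle$. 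The nonunit elements of $1+d\mathbb{Z}$ cannot occur, which forces $\mathrm{rad}(n)\mid d$ (when $\mathrm{rad}(n)\mid d$ all of $1+d\mathbb{Z}$ consists of units). Then $K_d$ is a subgroup of order $n/d$ of the cyclic group $\langle q\rangle$, so $K_d=\langle q^{\tau d/n}\rangle$. We must show $\omega\mid \tau d/n$. Every element of $K_d$ is $\equiv1 \pmod{\mathrm{rad}(n)}$, so $\mathrm{ord}_{\mathrm{rad}(n)}(q)\mid \tau d/n$. If $8\mid n$, then $4\mid d$ since $K_d$ must be cyclic, and $(1+2^{a}\mathbb{Z})$ mod $2^{v}$ is noncyclic when $a=1$. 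Hence $q^{\tau d/n}\equiv1\pmod 4$, giving the factor $2$ in the exceptional case.

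\textbf{Main obstacle.} The delicate point is the $2$-adic bookkeeping in both steps. In Step 1 we must verify that the piece sizes computed via the lifting-the-exponent lemmas match $n\omega/\tau$, and in Step 2 we must rule out $d$ with $v_2(d)=1$ when $8\mid n$ by the cyclicity argument. Uniqueness of the coarsest representation then follows because any coarsest one is both coarser than and finer than the one in Step 1.
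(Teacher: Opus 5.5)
Your proof is correct. Note, though, that the paper gives no in-text proof of this lemma. It is quoted from \cite{Zhu3}, where it is the case $t=\omega_\gamma$ of the classification in Theorem~\ref{thm 1}: every MED representation has the form $\bigsqcup_{i<t}c_{n/q^{t}}(\gamma q^{i})$ with $\omega_\gamma\mid t\mid\tau$.

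You prove only the two facts actually needed. Existence comes from applying the equal-difference criterion to $q^{\omega}$. Maximality comes from computing the additive stabilizer directly. Translation invariance by $d$ is equivalent to $1+d\mathbb{Z}/n\mathbb{Z}\subseteq\langle q\rangle$. This forces $\mathrm{rad}(n)\mid d$, identifies $1+d\mathbb{Z}/n\mathbb{Z}=K_d=\langle q^{\tau d/n}\rangle$, and, through cyclicity of $K_d$, forces $4\mid d$ when $8\mid n$. Hence $d_0=n\omega_\gamma/\tau$ divides every admissible $d$.

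This is exactly the ``minimal element of $\Sigma_f$'' computation to which the paper's own framework (Lemma~\ref{lem 9}, Theorem~\ref{thm 3}) reduces the question, so your argument fits the paper well. It shows the additive stabilizer of a primitive $c_{n/q}(\gamma)$ is $d_0\mathbb{Z}/n\mathbb{Z}$. It also recovers Theorem~\ref{thm 1} almost for free. The piece through $\gamma q^{i}$ is $\gamma q^{i}(1+d\mathbb{Z}/n\mathbb{Z})=c_{n/q^{t}}(\gamma q^{i})$ with $t=\tau d/n$ and $\omega_\gamma\mid t$. Conversely, every $t$ with $\omega_\gamma\mid t\mid\tau$ occurs, by your Step 1 with $q^{t}$ in place of $q^{\omega}$. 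What the cited route additionally packages is the link to binomial factorizations over $\mathbb{F}_{q^{t}}$ (Proposition~\ref{thm 2}); your route is more elementary and self-contained.

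A few small clean-ups:
\begin{itemize}
\item The LTE computation in Step 1 is unnecessary. Each piece $\gamma q^{i}\langle q^{\omega}\rangle$ has size $\mathrm{ord}_n(q^{\omega})=\tau/\omega$ directly.
\item The set $H$ and the phrase ``$\gamma H$-type sets'' are never used; the ``Concretely'' sentence is the real argument.
\item In the reduction, ``lies in the image'' should be followed by the remark that $y+D\mathbb{Z}/n\mathbb{Z}\subseteq g\mathbb{Z}/n\mathbb{Z}$ forces $g\mid D$. Only then are preimages of equal-difference sets equal-difference.
\item The noncyclicity step should invoke CRT explicitly. If $v_2(d)=1$, the $2$-component of $K_d$ is all of $(\mathbb{Z}/2^{v_2(n)}\mathbb{Z})^{*}$, which is noncyclic for $v_2(n)\ge 3$.
\item In your closing paragraph, $n\omega/\tau$ is the common difference, not the piece size.
\end{itemize}
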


Define a set
$$\Sigma_{\gamma} = \{d = \frac{nt}{\tau}: \ \omega_{\gamma} \mid t \ \mathrm{and} \ t \mid \tau\},$$
along with the order that for any $d_{1},d_{2} \in \Sigma_{\gamma}$, $d_{1} \leq d_{2}$ if and only if $d_{1} \mid d_{2}$. We are now ready to state the main theorem characterizing the structure of the space $\mathcal{MER}(c_{n/q}(\gamma))$ of MED representations of $c_{n/q}(\gamma)$. 

\begin{theorem}\label{thm 1}
	Let $c_{n/q}(\gamma)$ be a $q$-cyclotomic coset modulo $n$ with size $\tau$.  There is an anti-order-preserving one-to-one correspondence
	\begin{align*}
		\psi_{\gamma}: \Sigma_{\gamma} &\rightarrow \mathcal{MER}(c_{n/q}(\gamma));\\
		d &\mapsto \psi_{\gamma}(d): \ c_{n/q}(\gamma) = \bigsqcup_{i=0}^{t-1}c_{n/q^{t}}(\gamma q^{i}),
	\end{align*}
	where $t = \frac{d\tau}{n}$. Furthermore, for any $d \in \Sigma_{\gamma}$, the common difference of $\psi_{\gamma}(d)$ is $d$.
\end{theorem}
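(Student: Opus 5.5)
We will prove Theorem \ref{thm 1} by reducing everything to the subgroup structure of the cyclic group $\langle q\rangle \subseteq (\mathbb{Z}/n_{\gamma}\mathbb{Z})^{\ast}$ and then using Lemma \ref{lem 4} to pin down which subgroups give equal-difference pieces. We may pass to the primitive form: multiplication by $\gcd(\gamma,n)$ identifies $c_{n_{\gamma}/q}(\widetilde{\gamma})$ with $c_{n/q}(\gamma)$. This identification preserves partitions and sends common difference $d'$ to $d'\gcd(\gamma,n)$. So we assume $\gcd(\gamma,n)=1$, $n=n_\gamma$, and $\tau=\mathrm{ord}_n(q)$. We also replace $\gamma$ by $1$ after multiplying by the unit $\gamma^{-1}$, since this preserves equal-difference sets together with their common differences.

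\textbf{Step 1 (well-definedness).} Let $t\mid\tau$. The cosets $c_{n/q^{t}}(q^{i})$, $0\le i<t$, are the cosets of the subgroup $H_t=\langle q^{t}\rangle$ in $\langle q\rangle$, and they partition $c_{n/q}(1)$. Each has size $\tau/t$. Now suppose $\omega_\gamma\mid t$. Then $H_t\subseteq H_{\omega_\gamma}$, and by Lemma \ref{lem 4} the coset $c_{n/q^{\omega_\gamma}}(1)$ is equal-difference, i.e.\ $H_{\omega_\gamma}=1+d_0\mathbb{Z}/n\mathbb{Z}$ with $d_0=n\omega_\gamma/\tau$. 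The key observation is that the multiplicative subgroups of $1+d_0\mathbb{Z}/n\mathbb{Z}$ of the form $\langle q^t\rangle$ are themselves of the form $1+d\mathbb{Z}/n\mathbb{Z}$ with $d=nt/\tau$. To prove it, apply the Proposition (equal-difference criterion) to $q^{t}$: its conditions (i) and (ii) follow from those for $q^{\omega_\gamma}$. Alternatively, $1+d\mathbb{Z}/n\mathbb{Z}$ is a subgroup of $1+d_0\mathbb{Z}$ of the right order, and in the cyclic case such subgroups are unique. Multiplying by $q^i$ gives that each piece $q^iH_t$ is equal-difference with common difference $d$. Hence $\psi_\gamma(d)$ is a MED representation with common difference $d$.

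\textbf{Step 2 (surjectivity).} This is the main obstacle. Take an arbitrary MED representation $c_{n/q}(1)=\bigsqcup E_i$ with common difference $d$. Let $E_0\ni 1$; then $E_0=1+d\mathbb{Z}/n\mathbb{Z}$, which is a multiplicative subgroup $U_d$ of $(\mathbb{Z}/n\mathbb{Z})^\ast$, and it is contained in $\langle q\rangle$. Hence $E_0$ is a subgroup of the cyclic group $\langle q\rangle$, so $E_0=H_t$ for the $t$ with $|H_t|=n/d$, namely $t=d\tau/n$.

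For any other part $E_i$, pick $x\in E_i$. Then $xU_d$ is the full residue class $x+d\mathbb{Z}$, and this is exactly the equal-difference set of difference $d$ through $x$, so $E_i=xH_t$. Thus the representation equals $\psi_\gamma(d)$. Since $H_t$ is equal-difference and contains $H_{\omega_\gamma}$-type structure, the coarsest-representation statement of Lemma \ref{lem 4} forces $H_t\subseteq H_{\omega_\gamma}$, which gives $\omega_\gamma\mid t$ and so $d\in\Sigma_\gamma$. The care needed here is checking that an equal-difference set containing $1$ inside units is a subgroup; this holds because $(1+da)(1+db)\in 1+d\mathbb{Z}$.

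\textbf{Step 3 (injectivity and order).} The common difference of $\psi_\gamma(d)$ is $d$, so $\psi_\gamma$ is injective. For the order, suppose $d_1\mid d_2$. Then $t_1\mid t_2$ and $H_{t_2}\subseteq H_{t_1}$, so the cosets of $H_{t_1}$ split into cosets of $H_{t_2}$; that is, $\psi_\gamma(d_1)\ge\psi_\gamma(d_2)$. Conversely, refinement gives $H_{t_2}\subseteq H_{t_1}$, hence $t_1\mid t_2$ and $d_1\mid d_2$. This establishes that $\psi_\gamma$ is anti-order-preserving in both directions.
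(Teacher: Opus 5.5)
Your proof is correct, and it takes a genuinely different route from anything written in the paper. The paper does not prove Theorem~\ref{thm 1} at all: it, Lemma~\ref{lem 4} and the equal-difference criterion are all imported from \cite{Zhu3}.

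The paper's own technique for classifying MED representations is the additive argument of Theorem~\ref{thm 3}. There, a MED representation with common difference $d$ exists iff $d$ is a period of the set, and it is then forced to be the decomposition into cosets of $d\mathbb{Z}/n\mathbb{Z}$. Applied to $c_{n/q}(\gamma)$, this classifies MED representations by the periods of the coset. It does not, however, show that the pieces are $q^{t}$-cyclotomic cosets, or that the periods are exactly the divisors of $n$ that are multiples of $d_0=n\omega_\gamma/\tau$; that arithmetic is left to \cite{Zhu3}.

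Your multiplicative observation supplies the missing structure. After normalizing to $\gamma=1$, an equal-difference set through $1$ inside $\langle q\rangle$ is the subgroup $1+d\mathbb{Z}/n\mathbb{Z}$, hence equals some $\langle q^{t}\rangle$. This explains the explicit shape of $\psi_\gamma(d)$. It also gives uniqueness, the description of every MED representation as a coset decomposition of $\langle q\rangle$, and the order relation, all without arithmetic. The only arithmetic question left is for which $t$ the subgroup $\langle q^{t}\rangle$ is equal-difference, and you answer it from Lemma~\ref{lem 4} and the criterion. The additive route is more general, since it works for any $T_f$; yours is more informative for cyclotomic cosets.

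One sentence in Step~2 needs repair. The phrase that $H_t$ ``contains $H_{\omega_\gamma}$-type structure'' is meaningless, and the inclusion actually goes the other way. The valid deduction reads ``unique coarsest'' in Lemma~\ref{lem 4} as ``every MED representation refines it.'' Then the piece $H_t\ni 1$ lies in the piece $H_{\omega_\gamma}\ni 1$, so $\tau/t$ divides $\tau/\omega_\gamma$, i.e.\ $\omega_\gamma\mid t$.

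That clause of Lemma~\ref{lem 4} is itself essentially part of Theorem~\ref{thm 1}, namely the minimality of $d_0$ in $\Sigma_\gamma$. A direct argument within your framework is therefore preferable, and it goes as follows:
\begin{itemize}
\item $H_t=1+d\mathbb{Z}/n\mathbb{Z}$ consists of units, so $\mathrm{rad}(n)\mid d$.
\item If $8\mid n$, then $4\mid d$. Otherwise $H_t$ would map onto the non-cyclic group $(\mathbb{Z}/8\mathbb{Z})^{\ast}$, while being a subgroup of the cyclic group $\langle q\rangle$.
\item Since $q^{t}\in H_t$, we get $d\mid q^{t}-1$.
\item Hence $\mathrm{rad}(n)\mid q^{t}-1$, and $q^{t}\equiv 1 \pmod 4$ when $8\mid n$. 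By the definition of $\omega_\gamma$, this forces $\omega_\gamma\mid t$.
\end{itemize}
With this repair, only Step~1 (existence, i.e.\ that $H_{\omega_\gamma}$ is equal-difference) genuinely relies on the quoted arithmetic.
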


MED representations of $c_{n/q}(\gamma)$ are important because they completely determine the ways in which the induced polynomial 
 $M_{\gamma}(X)$ factors into binomials over extension fields of $\mathbb{F}_{q}$. Precisely it can be phrased as follows.

\begin{proposition}\label{prop 1}
	Let $f(X)$ be a nonconstant squarefree polynomial over $\mathbb{F}_{q}$ with a nonzero constant term. Let $n$ be the order of $f(X)$ and $T_{f} \subseteq \mathbb{Z}/n\mathbb{Z}$ be the defining set of $f(X)$. Then $f(X)$ is a binomial if and only if $T_{f}$ is of equal difference.
\end{proposition}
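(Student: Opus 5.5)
The plan is to prove both directions of Proposition \ref{prop 1} by direct computation with roots of unity. Since $f$ is monic, squarefree, of order $n$, and
$$f(X)=\prod_{\gamma\in T_f}(X-\zeta_n^{\gamma}),$$
I would compare $f$ with the binomial whose roots are an equal-difference set.

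\emph{($\Leftarrow$)} Suppose $T_f=\{\gamma,\gamma+d,\dots,\gamma+(\frac nd-1)d\}$ with $d\mid n$, and put $k=\frac nd$. Then $\zeta_n^{d}=\zeta_k$ is a primitive $k$-th root of unity, by the compatibility $\zeta_n^{n/k}=\zeta_k$. Hence the roots of $f$ are exactly $\zeta_n^{\gamma}\zeta_k^{j}$ for $0\le j<k$. The $k$ distinct elements $\zeta_k^{j}$ are precisely the roots of $Y^{k}-1$, so
$$\prod_{j=0}^{k-1}(X-\zeta_n^{\gamma}\zeta_k^{j}) = (\zeta_n^{\gamma})^{k}\prod_j\bigl(X\zeta_n^{-\gamma}-\zeta_k^{j}\bigr)=X^{k}-\zeta_n^{\gamma k}.$$
Thus $f$ is the binomial $X^{k}-\zeta_n^{\gamma k}$. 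Its coefficients lie in $\mathbb{F}_q$ automatically, since $f\in\mathbb{F}_q[X]$.

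\emph{($\Rightarrow$)} Suppose $f(X)=X^{k}-a$ with $a\neq 0$, noting that the constant term is nonzero and $f$ is nonconstant. Squarefreeness forces $p\nmid k$; this also follows because $f\mid X^n-1$ with $p\nmid n$. Fix $\gamma\in T_f$, so $\zeta_n^{\gamma k}=a$. The roots of $X^{k}-a$ are then $\zeta_n^{\gamma}\omega$ with $\omega$ ranging over the $k$ roots of $Y^{k}=1$. Each such $\omega=\zeta_n^{\beta-\gamma}$ for some $\beta\in T_f$, so $\omega$ is an $n$-th root of unity. Hence all $k$-th roots of unity lie in $\langle\zeta_n\rangle$. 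These roots are distinct because $p\nmid k$, so they form a cyclic group of order $k$ inside a cyclic group of order $n$, which gives $k\mid n$. Put $d=n/k$. The $k$-th roots of unity in $\langle\zeta_n\rangle$ are exactly the powers $\zeta_n^{dj}$, so
$$T_f=\{\gamma+dj: 0\le j<k\}.$$
This is an equal-difference set with common difference $d$, as required.

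The only delicate point is the divisibility $k\mid n$ in the forward direction, together with correctly identifying the subgroup of $k$-th roots of unity inside $\mu_n$. Both reduce to the structure of cyclic groups once one observes that ratios of roots of $f$ are $n$-th roots of unity.
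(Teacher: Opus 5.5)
Your proof is correct. The $(\Leftarrow)$ direction is exactly the computation the paper displays right after the proposition: its $X^{n/d}-\zeta_d^{\gamma}$ is your $X^{k}-\zeta_n^{\gamma k}$, since $\zeta_n^{n/d}=\zeta_d$.

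For $(\Rightarrow)$ the paper gives no argument here and simply defers to \cite{Zhu3,Zhu2}. Your argument fills this in with a complete, self-contained proof. You observe that the roots of $X^{k}-a$ form a coset $\zeta_n^{\gamma}\mu_k$, where $\mu_k$ is the group of $k$-th roots of unity. Since these roots lie in $\langle\zeta_n\rangle$, you get $\mu_k\subseteq\langle\zeta_n\rangle$, which forces $k\mid n$ and $\mu_k=\langle\zeta_n^{n/k}\rangle$.

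Note also that your forward argument never uses $a\in\mathbb{F}_q$ or the minimality of $n$; it only uses that the roots lie in $\langle\zeta_n\rangle$ and that $p\nmid k$. It therefore also justifies a later step the paper takes for granted. In its proposition classifying factorizations of $f$ into binomials of equal degree over $\mathbb{F}_q(\zeta_n)$, the paper asserts that binomial factors whose coefficients need not lie in $\mathbb{F}_q$ have equal-difference defining sets in $\mathbb{Z}/n\mathbb{Z}$. Your argument covers exactly that case.
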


\begin{corollary}
	Let $c_{n/q}(\gamma)$ be a $q$-cyclotomic coset modulo $n$, and $M_{\gamma}(X)$ be the polynomial induced by $c_{n/q}(\gamma)$. Then $M_{\gamma}(X)$ is a binomial if and only if $c_{n/q}(\gamma)$ is an equal-difference cyclotomic coset.
\end{corollary}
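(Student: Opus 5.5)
The statement to prove is the Corollary: $M_{\gamma}(X)$ is a binomial if and only if $c_{n/q}(\gamma)$ is an equal-difference cyclotomic coset. I would obtain it from Proposition \ref{prop 1}, applied to $f(X)=M_{\gamma}(X)$. The only work is checking the hypotheses and matching the defining set of $M_{\gamma}(X)$ with $c_{n/q}(\gamma)$. The latter is slightly delicate, because the proposition uses the defining set taken modulo the order of $f$, which may differ from $n$.

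\textbf{Hypotheses.} $M_{\gamma}(X)$ is monic and irreducible over $\mathbb{F}_{q}$, since it is the minimal polynomial of $\zeta_{n}^{\gamma}$. It is nonconstant and has nonzero constant term, since its roots are roots of unity. It is squarefree because $\gcd(n,q)=1$, so $X^{n}-1$ is separable. Its order equals the multiplicative order of $\zeta_{n}^{\gamma}$, namely $n_{\gamma}=n/\gcd(n,\gamma)$.

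\textbf{Matching the defining sets.} Using the compatibility $\zeta_{n}^{n/n_{\gamma}}=\zeta_{n_{\gamma}}$, we have $\zeta_{n}^{\gamma q^{i}}=\zeta_{n_{\gamma}}^{\widetilde{\gamma}q^{i}}$. Hence the defining set of $M_{\gamma}(X)$ in $\mathbb{Z}/n_{\gamma}\mathbb{Z}$ is the primitive form $c_{n_{\gamma}/q}(\widetilde{\gamma})$, as noted in the preliminaries. Proposition \ref{prop 1} then says that $M_{\gamma}(X)$ is a binomial if and only if $c_{n_{\gamma}/q}(\widetilde{\gamma})$ is of equal difference in $\mathbb{Z}/n_{\gamma}\mathbb{Z}$.

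It remains to show that $c_{n/q}(\gamma)\subseteq\mathbb{Z}/n\mathbb{Z}$ is of equal difference if and only if its primitive form is. Put $g=\gcd(n,\gamma)$. Multiplication by $g$ gives an injective additive map
$$\mathbb{Z}/n_{\gamma}\mathbb{Z}\to\mathbb{Z}/n\mathbb{Z},\qquad x\mapsto gx,$$
with image $g\mathbb{Z}/n\mathbb{Z}$, and it sends $c_{n_{\gamma}/q}(\widetilde{\gamma})$ bijectively onto $c_{n/q}(\gamma)$.

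\emph{Forward direction.} An equal-difference set $\{\beta+jd\}$ with $d\mid n_{\gamma}$ maps to $\{g\beta+j(gd)\}$. Since $gd\mid n$ and the number of terms is $n_{\gamma}/d=n/(gd)$, the image is an equal-difference set in $\mathbb{Z}/n\mathbb{Z}$ with common difference $gd$.

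\emph{Converse.} Let $E=c_{n/q}(\gamma)$ be of equal difference with common difference $d'\mid n$. Then $E$ is a coset of $d'\mathbb{Z}/n\mathbb{Z}$ lying inside $g\mathbb{Z}/n\mathbb{Z}$. The difference of two elements of $E$ is $d'$, and this difference lies in $g\mathbb{Z}/n\mathbb{Z}$, so $g\mid d'$. Dividing by $g$, the preimage is a coset of $(d'/g)\mathbb{Z}/n_{\gamma}\mathbb{Z}$ with $(d'/g)\mid n_{\gamma}$, which is an equal-difference set.

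The one step needing a little care is the converse. There I would argue via subgroups ($E-E=d'\mathbb{Z}/n\mathbb{Z}\subseteq g\mathbb{Z}/n\mathbb{Z}$) rather than via individual elements, so that the degenerate case $|E|=1$, where $d'=n$, is covered automatically. Everything else is direct citation of Proposition \ref{prop 1} and the primitive-form remark.
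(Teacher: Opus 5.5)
Your proposal is correct and follows the paper's route. The paper gives no separate proof of this corollary and presents it as a direct application of Proposition~\ref{prop 1} to $f=M_{\gamma}$. Your extra step, which transfers the equal-difference property between $c_{n/q}(\gamma)\subseteq\mathbb{Z}/n\mathbb{Z}$ and its primitive form $c_{n_{\gamma}/q}(\widetilde{\gamma})\subseteq\mathbb{Z}/n_{\gamma}\mathbb{Z}$ via the injection $x\mapsto gx$, correctly fills in the case $\gcd(\gamma,n)>1$ that the paper leaves implicit.
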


On the other hand, any equal-difference subset
$$T = \{\gamma,\gamma+d,\cdots,\gamma+(\frac{n}{d}-1)d\} \subseteq \mathbb{Z}/n\mathbb{Z}$$
defines a binomial
\begin{align*}
	f_{T}(X) &= (X-\zeta_{n}^{\gamma})(X-\zeta_{n}^{\gamma+d})\cdots(X-\zeta_{n}^{\gamma+(\frac{n}{d}-1)d})\\
	&=\zeta_{n}^{\frac{\gamma n}{d}}(\frac{X}{\zeta_{n}^{\gamma}}-1)(\frac{X}{\zeta_{n}^{\gamma}}-\zeta_{\frac{n}{d}})\cdots(\frac{X}{\zeta_{n}^{\gamma}}-\zeta_{\frac{n}{d}}^{\frac{n}{d}-1})\\
	&=\zeta_{n}^{\frac{\gamma n}{d}}((\frac{X}{\zeta_{n}^{\gamma}})^{\frac{n}{d}}-1)\\
	&=X^{\frac{n}{d}}-\zeta_{d}^{\gamma},
\end{align*}
where the last equality holds since $\zeta_{n}^{\frac{n}{d}} = \zeta_{d}$. In general $f_{T}(X)$ is over the extension field $\mathbb{F}_{q}(\zeta_{n})$ of $\mathbb{F}_{q}$. The criterion for $f_{T}(X)$ being over $\mathbb{F}_{q}$ is given below.

\begin{lemma}\label{lem 1}
	The polynomial $f_{T}(X)$ is defined over $\mathbb{F}_{q}$ if and only if
	$$\gamma q \equiv \gamma \pmod{d}.$$
\end{lemma}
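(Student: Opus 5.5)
The plan is to use the explicit form $f_{T}(X) = X^{\frac{n}{d}} - \zeta_{d}^{\gamma}$ computed just before the statement. Since $X^{\frac{n}{d}}$ and $1$ are defined over $\mathbb{F}_{q}$, the whole question comes down to when the constant term lies in $\mathbb{F}_{q}$. So $f_{T}(X)$ is defined over $\mathbb{F}_{q}$ if and only if $\zeta_{d}^{\gamma} \in \mathbb{F}_{q}$.

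The key step is the standard characterization of $\mathbb{F}_{q}$ inside its algebraic closure: an element $\alpha$ lies in $\mathbb{F}_{q}$ if and only if $\alpha^{q} = \alpha$. Applying this to $\alpha = \zeta_{d}^{\gamma}$ gives the criterion
$$\zeta_{d}^{\gamma q} = \zeta_{d}^{\gamma}.$$
Because $\zeta_{d}$ is a primitive $d$-th root of unity, $\zeta_{d}^{a} = \zeta_{d}^{b}$ holds exactly when $a \equiv b \pmod{d}$. Hence the criterion is equivalent to $\gamma q \equiv \gamma \pmod{d}$, which proves both directions at once.

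There are two small points to check along the way.

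\begin{itemize}
\item The quantity $\zeta_{d}^{\gamma}$ must be well defined for $\gamma \in \mathbb{Z}/n\mathbb{Z}$. Since $d \mid n$, changing $\gamma$ by a multiple of $n$ does not change $\zeta_{d}^{\gamma}$. Likewise, the congruence $\gamma q \equiv \gamma \pmod{d}$ depends only on $\gamma \bmod n$.
\item We need $\gcd(d,q)=1$, so that $\zeta_{d}$ exists in the fixed compatible family. This holds because $p \nmid n$ and $d \mid n$.
\end{itemize}

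I do not expect any real obstacle. The only point needing care is the compatibility $\zeta_{n}^{n/d} = \zeta_{d}$, which the paper has already used to derive $f_{T}(X) = X^{\frac{n}{d}} - \zeta_{d}^{\gamma}$, so we may take that derivation as given.
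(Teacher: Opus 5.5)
Your proof is correct and follows the intended route: the paper does not reprove this lemma (it defers Section~3.1 to \cite{Zhu3}), but the identity $f_{T}(X)=X^{\frac{n}{d}}-\zeta_{d}^{\gamma}$ displayed just before it is exactly what reduces the question to $\zeta_{d}^{\gamma}\in\mathbb{F}_{q}$, that is, $\zeta_{d}^{\gamma(q-1)}=1$, that is, $d \mid \gamma(q-1)$. Your side checks, that $\zeta_{d}^{\gamma}$ is well defined for $\gamma\in\mathbb{Z}/n\mathbb{Z}$ and that $\gcd(d,q)=1$, are also correct.
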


Consider a MED representation
\begin{equation}\label{eq 5}
	c_{n/q}(\gamma) = \bigsqcup_{i=0}^{t-1}c_{n/q^{t}}(\gamma q^{i}),
\end{equation}
of $c_{n/q}(\gamma)$, where $\omega_{\gamma} \mid t$ and $t \mid n$. It gives rise to a factorization
\begin{equation}\label{eq 7}
	M_{\gamma}(X) = \prod_{i=0}^{t-1}(X^{\frac{n}{d}}-\zeta_{d}^{\gamma q^{j}}),
\end{equation}
where $d = \frac{nt}{\tau}$. Note that \eqref{eq 7} is exactly the irreducible factorization of $M_{\gamma}(X)$ over $\mathbb{F}_{q^{t}}$. Thus by Proposition \ref{prop 1} we obtain the result below.

\begin{proposition}\label{thm 2}
    There is a one-to-one correspondence between $\mathcal{MER}(c_{n/q}(\gamma))$ and the set of extension fields of $\mathbb{F}_{q}$ contained in $\mathbb{F}_{q^{\tau}}$ over which $M_{\gamma}(X)$ factors into a product of irreducible binomials
    $$c_{n/q}(\gamma) = \bigsqcup_{i=0}^{t-1}c_{n/q^{t}}(\gamma q^{i}) \mapsto \mathbb{F}_{q^{t}},$$
    where $t$ is a divisor of $\tau$ that is divided by $\omega_{\gamma}$. Moreover, for any MED representations 
    $$c_{n/q}(\gamma) = \bigsqcup_{i=0}^{t_{1}-1}c_{n/q^{t_{1}}}(\gamma q^{i}) = \bigsqcup_{j=0}^{t_{2}-1}c_{n/q^{t_{2}}}(\gamma q^{j}),$$
    the representations satisfy $\bigsqcup\limits_{j=0}^{t_{2}-1}c_{n/q^{t_{2}}}(\gamma q^{j}) \leq \bigsqcup\limits_{i=0}^{t_{1}-1}c_{n/q^{t_{1}}}(\gamma q^{i})$ if and only if $\mathbb{F}_{q^{t_{1}}} \subseteq \mathbb{F}_{q^{t_{2}}}$.
\end{proposition}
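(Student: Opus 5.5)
The strategy is to reduce Proposition~\ref{thm 2} to Theorem~\ref{thm 1} together with Proposition~\ref{prop 1} and Lemma~\ref{lem 1}. Theorem~\ref{thm 1} already parametrizes $\mathcal{MER}(c_{n/q}(\gamma))$ by divisors $t$ of $\tau$ with $\omega_{\gamma}\mid t$, via $\psi_{\gamma}(d)$ with $t=d\tau/n$. The proposed map therefore sends $\psi_{\gamma}(d)\mapsto \mathbb{F}_{q^{t}}$. It is well defined and injective, since distinct $t\mid\tau$ give distinct subfields of $\mathbb{F}_{q^{\tau}}$. What remains is to identify its image with the set of intermediate fields $\mathbb{F}_{q^{s}}$, $s\mid\tau$, over which $M_{\gamma}(X)$ splits into irreducible binomials, and to check the order statement.

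\textbf{Step 1: the image consists of fields over which $M_\gamma$ splits into irreducible binomials.} Fix $s\mid\tau$. Over $\mathbb{F}_{q^{s}}$, $M_{\gamma}(X)$ factors into irreducibles according to the $q^{s}$-cyclotomic cosets:
$$c_{n/q}(\gamma)=\bigsqcup_{i=0}^{s-1}c_{n/q^{s}}(\gamma q^{i}).$$
Each piece has size $\tau/s$, because $\gamma$ has $q$-orbit of length $\tau$ and $s\mid\tau$. When $s=t$ comes from Theorem~\ref{thm 1}, each piece $c_{n/q^{t}}(\gamma q^{i})$ is an equal-difference set. By Proposition~\ref{prop 1}, applied over the base field $\mathbb{F}_{q^{t}}$ to the squarefree polynomial with defining set that piece (order $n_{\gamma}$ after passing to primitive form, or directly $n$), each irreducible factor is a binomial. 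Explicitly it is $X^{n/d}-\zeta_{d}^{\gamma q^{i}}$; Lemma~\ref{lem 1} with $q^{t}$ in place of $q$ confirms it lies over $\mathbb{F}_{q^{t}}$, since $\gamma q^{i}q^{t}\equiv\gamma q^{i}\pmod d$ follows from the coset being invariant under multiplication by $q^{t}$ with common difference $d$.

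\textbf{Step 2: every such field is in the image.} Suppose $M_{\gamma}$ factors over $\mathbb{F}_{q^{s}}$, $s\mid\tau$, into irreducible binomials. Then each $q^{s}$-coset $c_{n/q^{s}}(\gamma q^{i})$ is the defining set of a binomial, hence equal-difference by Proposition~\ref{prop 1}. All pieces have the same size $\tau/s$, and an equal-difference subset of $\mathbb{Z}/n\mathbb{Z}$ of size $n/d$ has common difference $d$. So all pieces share the common difference $d=ns/\tau$, giving a MED representation. By Theorem~\ref{thm 1} this is $\psi_{\gamma}(d)$, so $\omega_{\gamma}\mid s$ and $s$ lies in the image. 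This step is the main point to check carefully: one must confirm that the pieces really have a common difference, which is exactly the size argument just given. Proposition~\ref{prop 1} needs the nonzero constant term, which holds because the roots are roots of unity.

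\textbf{Step 3: order compatibility.} Theorem~\ref{thm 1} says $\psi_{\gamma}$ is anti-order-preserving for divisibility of the common differences $d$. Since $d=nt/\tau$, divisibility $d_{1}\mid d_{2}$ is equivalent to $t_{1}\mid t_{2}$, which is equivalent to $\mathbb{F}_{q^{t_{1}}}\subseteq\mathbb{F}_{q^{t_{2}}}$. The finer representation has the larger $t$, so the representation with $t_{2}$ is $\leq$ the one with $t_{1}$ exactly when $\mathbb{F}_{q^{t_{1}}}\subseteq\mathbb{F}_{q^{t_{2}}}$. This can also be seen directly: $c_{n/q^{t_{1}}}(\gamma q^{i})$ decomposes into $q^{t_{2}}$-cosets when $t_{1}\mid t_{2}$. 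Conversely, if one representation refines the other, the refinement forces $d_{2}\mid d_{1}$ as differences of sub-progressions, and Theorem~\ref{thm 1} turns this back into $t_{1}\mid t_{2}$.
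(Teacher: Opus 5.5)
Your proof is correct and follows the paper's route. Theorem~\ref{thm 1} parametrizes the MED representations by $t$ with $\omega_{\gamma}\mid t\mid\tau$. The pieces $c_{n/q^{t}}(\gamma q^{i})$ are exactly the irreducible factors of $M_{\gamma}(X)$ over $\mathbb{F}_{q^{t}}$, and Proposition~\ref{prop 1} turns ``equal-difference'' into ``binomial.'' Your Step 2 adds detail the paper only implies by its ``thus by Proposition~\ref{prop 1}'': the size $\tau/s$ forces the common difference $d=ns/\tau$, and Theorem~\ref{thm 1} then gives $\omega_{\gamma}\mid s$.

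There is one slip in your last sentence. If the $t_{2}$-representation refines the $t_{1}$-representation, then $d_{2}\mathbb{Z}/n\mathbb{Z}\subseteq d_{1}\mathbb{Z}/n\mathbb{Z}$, which forces $d_{1}\mid d_{2}$, not $d_{2}\mid d_{1}$. It is $d_{1}\mid d_{2}$ that yields your (correct) conclusion $t_{1}\mid t_{2}$.
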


As $\mathbb{F}_{q^{\tau}}$ is the splitting field of $M_{\gamma}(X)$ over $\mathbb{F}_{q}$, for any positive integer $t$ the irreducible factorization of $M_{\gamma}(X)$ over $\mathbb{F}_{q^{t}}$ is the same as that over $\mathbb{F}_{q^{t^{\prime}}}$ for $t^{\prime} = \mathrm{gcd}(t,\tau)$. The following corollary is an immediate consequence of Theorem \ref{thm 1} and Proposition \ref{thm 2}.

\begin{corollary}\label{coro 1}
	The family
	$$\{\mathbb{F}_{q^{t}}: \ \frac{\mathrm{gcd}(t,\tau)n}{\tau} \in \Sigma_{\gamma}\}$$
	of fields are exactly the extensions of $\mathbb{F}_{q}$ over which $M_{\gamma}(X)$ factors into a product of irreducible binomials.
\end{corollary}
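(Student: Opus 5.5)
The plan is to reduce Corollary \ref{coro 1} to Proposition \ref{thm 2} by replacing an arbitrary field $\mathbb{F}_{q^{t}}$ with the subfield $\mathbb{F}_{q^{\mathrm{gcd}(t,\tau)}}$ of the splitting field. The first step is to justify that $M_{\gamma}(X)$ has the same irreducible factorization over $\mathbb{F}_{q^{t}}$ as over $\mathbb{F}_{q^{t'}}$ with $t'=\mathrm{gcd}(t,\tau)$. The roots of $M_{\gamma}(X)$ are $\zeta_{n}^{\gamma q^{i}}$. The irreducible factors of $M_{\gamma}(X)$ over $\mathbb{F}_{q^{t}}$ correspond to the orbits of the Frobenius $x\mapsto x^{q^{t}}$ on these roots, that is, to the orbits of multiplication by $q^{t}$ on $c_{n/q}(\gamma)$. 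Multiplication by $q$ acts on $c_{n/q}(\gamma)$ as a cyclic permutation of order $\tau$. Therefore the subgroups generated by $q^{t}$ and by $q^{\mathrm{gcd}(t,\tau)}$ in this cyclic group of order $\tau$ coincide. Hence the two orbit decompositions coincide, namely $c_{n/q^{t}}(\gamma q^{i})=c_{n/q^{t'}}(\gamma q^{i})$, and the two factorizations are the same.

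Next, I would apply Proposition \ref{thm 2} together with Proposition \ref{prop 1}. Over $\mathbb{F}_{q^{t}}$, $M_{\gamma}(X)$ factors into irreducible binomials if and only if each factor $\prod_{\beta\in c_{n/q^{t'}}(\gamma q^{i})}(X-\zeta_{n}^{\beta})$ is a binomial. By Proposition \ref{prop 1}, this happens if and only if each set $c_{n/q^{t'}}(\gamma q^{i})$ is of equal difference. For this one must note that these subsets all have the same size $\tau/t'$, so equal-difference sets among them automatically share the common difference $n t'/\tau$. Consequently the partition $c_{n/q}(\gamma)=\bigsqcup_{i=0}^{t'-1} c_{n/q^{t'}}(\gamma q^{i})$ is a MED representation.

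By Theorem \ref{thm 1}, the MED representations of this shape are exactly $\psi_{\gamma}(d)$ with $d=nt'/\tau\in\Sigma_{\gamma}$. Conversely, if $nt'/\tau\in\Sigma_{\gamma}$, then $\psi_{\gamma}(nt'/\tau)$ is the partition above. Proposition \ref{thm 2} then says that $M_{\gamma}(X)$ factors into irreducible binomials over $\mathbb{F}_{q^{t'}}$, and hence over $\mathbb{F}_{q^{t}}$ by the first step. This yields the stated family.

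The main obstacle is the "only if" direction. If some factorization over $\mathbb{F}_{q^{t}}$ is binomial, I need to know that the resulting partition is literally of the form $\psi_{\gamma}(d)$ with $d$ in $\Sigma_{\gamma}$. This should follow from Theorem \ref{thm 1}'s bijectivity: every MED representation is some $\psi_{\gamma}(d)$. Matching the number of blocks $t'$ with $t=d\tau/n$ then forces $d=nt'/\tau$.
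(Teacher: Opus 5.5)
Your proposal is correct and follows essentially the paper's route. You reduce from $\mathbb{F}_{q^{t}}$ to $\mathbb{F}_{q^{\gcd(t,\tau)}}$; you justify this through the cyclic action of $q$ on $c_{n/q}(\gamma)$, where the paper simply notes that $\mathbb{F}_{q^{\tau}}$ is the splitting field of $M_{\gamma}(X)$. You then read off the answer from the bijection of Theorem \ref{thm 1} combined with Propositions \ref{prop 1} and \ref{thm 2}, and your remark that all blocks have size $\tau/\gcd(t,\tau)$, and hence a common difference, makes explicit a point the paper leaves implicit.
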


\subsection{MED representations of the defining set of a squarefree polynomial}\label{sec 3.2}
Let $f(X)$ be a nonconstant simple-root polynomial over $\mathbb{F}_{q}$, with a nonzero constant term. Let $n$ be the order of $f(X)$, and $T_{f} \subseteq \mathbb{Z}/n\mathbb{Z}$ be the defining set of $f(X)$. Generalizing Definition \ref{def 1}, we define a MED representation of $T_{f}$.

\begin{definition}
	A multiple equal-difference representation (MED representation) of $T_{f}$ is a partition
	\begin{equation}\label{eq 1}
		T_{f} = \bigsqcup_{i \in I}E_{i},
	\end{equation}
	where $E_{i}$, $i \in I$, are equal-difference sets with the same common difference $d$. The integer $d$ is called the common difference of the MED representation \eqref{eq 1}. We denote by $\mathcal{MER}(T_{f})$ the space of all MED representations of $T_{f}$.
\end{definition}

The order on $\mathcal{MER}(T_{f})$ can be defined similarly as in the case of a cyclotomic coset. For two MED representations
$$T_{f} = \bigsqcup_{i \in I}E_{i} = \bigsqcup_{j \in J}E_{j}^{\prime}$$
of $T_{f}$. If the index set $J$ can be partitioned as $J = \bigsqcup\limits_{i \in I}J_{i}$, and for each $i \in I$ the equal-difference set $E_{i}$ can be further decomposed as
$$E_{i} = \bigsqcup_{j \in J_{i}}E_{j}^{\prime},$$
then we say that $\bigsqcup\limits_{i \in I}E_{i}$ is coarser than $\bigsqcup\limits_{j \in J}E_{j}^{\prime}$ (or $\bigsqcup\limits_{j \in J}E_{j}^{\prime}$ is finer than $\bigsqcup\limits_{i \in I}E_{i}$), and write $\bigsqcup\limits_{i \in I}E_{i} \geq \bigsqcup\limits_{j \in J}E_{j}^{\prime}$.

We present two concrete examples of a MED representation of $T_{f}$, which shows that $T_{f}$ always admits a MED representation.

\begin{example}
	Since any one-element subset of $\mathbb{Z}/n\mathbb{Z}$ is an equal-difference subset, the defining set $T_{f}$ has the trivial MED representation
	$$T_{f} = \bigsqcup_{\gamma \in T_{f}}\{\gamma\}.$$
	Clearly it is the finest MED representation of $T_{f}$.
\end{example}

\begin{example}
	As $f(X)$ can be factorized into a product of irreducible factors of $X^{n}-1$, $T_{f}$ can be represented as a disjoint union of $q$-cyclotomic cosets modulo $n$, say,
	$$T_{f} = \bigsqcup_{j=1}^{r}c_{n/q}(\gamma_{j}).$$
	For $1 \leq j \leq r$ let $\tau_{j}$ be the size of $c_{n/q}(\gamma_{j})$, and let
	\begin{equation*}
		\omega_{\gamma_{j}} = \left\{
		\begin{array}{lcl}
			2\mathrm{ord}_{\mathrm{rad}(n_{\gamma_{j}})}(q), \quad \mathrm{if} \ q^{\mathrm{ord}_{\mathrm{rad}(n_{\gamma_{j}})}(q)} \equiv 3 \pmod{4} \ \mathrm{and} \ 8 \mid n_{\gamma_{j}};\\
			\mathrm{ord}_{\mathrm{rad}(n_{\gamma_{j}})}(q), \quad \mathrm{otherwise}.
		\end{array} \right.
	\end{equation*}
	Set $u = \mathrm{gcd}(\frac{\tau_{1}}{\omega_{1}},\cdots,\frac{\tau_{r}}{\omega_{r}})$ and
	$$\omega_{j}^{\prime} = \frac{\tau_{j}}{u}, \ 1\leq j\leq r.$$
	For every $1 \leq j \leq r$ one has $\omega_{j} \mid \omega_{j}^{\prime}$ and $\omega_{j}^{\prime} \mid \tau_{j}$, therefore, by Theorem \ref{thm 1} 
	$$c_{n/q}(\gamma_{j}) = \bigsqcup_{i_{j}=0}^{\omega_{j}^{\prime}-1}c_{n/q^{\omega_{j}^{\prime}}}(\gamma_{i}q^{i_{j}})$$
	is a MED representation of $c_{n/q}(\gamma_{j})$, with common difference  
	$$d = \frac{n}{|c_{n/q^{\omega_{j}^{\prime}}}(\gamma_{j}q^{i_{j}})|} = \frac{n\omega_{j}^{\prime}}{\tau_{j}} = \frac{n}{u}.$$
	It follows that 
	$$T_{f} = \bigsqcup_{j=1}^{r}\bigsqcup_{i_{j}=0}^{\omega_{j}^{\prime}-1}c_{n/q^{\omega_{j}^{\prime}}}(\gamma_{j}q^{i_{j}})$$
	is a MED representation of $T_{f}$ with common difference $d = \frac{n}{u}$.
\end{example}

Let $a \in \mathbb{Z}/n\mathbb{Z}$. Define a translation $t_{a}$ by
$$t_{a}: \mathbb{Z}/n\mathbb{Z} \rightarrow \mathbb{Z}/n\mathbb{Z}; \ \gamma \mapsto \gamma+a.$$
If $T_{f} \subseteq \mathbb{Z}/n\mathbb{Z}$ is stable under $t_{a}$, that is, $t_{a}(T_{f}) = T_{f}$, then $a$ is called a period of $T_{f}$. It is straightforward to verify that the set $\overline{\Sigma}_{f}$ of periods of $T_{f}$ forms an additive subgroup of $\mathbb{Z}/n\mathbb{Z}$, which is referred to as the additive stabilizer of $T_{f}$.

\begin{lemma}\label{lem 5}
	\begin{description}
		\item[(1)] If $a,b \in \overline{\Sigma}_{f}$, then $\mathrm{gcd}(a,b) \in \overline{\Sigma}_{f}$.
		\item[(2)] If $a \in \overline{\Sigma}_{f}$, then $\mathrm{gcd}(a,n) \in \overline{\Sigma}_{f}$.
	\end{description}
\end{lemma}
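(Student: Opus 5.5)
The plan is to use only that $\overline{\Sigma}_{f}$ is an additive subgroup of $\mathbb{Z}/n\mathbb{Z}$ (already observed before the lemma), together with the standard fact that subgroups of the cyclic group $\mathbb{Z}/n\mathbb{Z}$ are generated by a gcd. Throughout, $\mathrm{gcd}(a,b)$ for residue classes is read via integer representatives. For (1), the natural choice is representatives $a,b\in\{0,1,\dots,n-1\}$. If one of them is $0$ the claim is trivial: for instance $\mathrm{gcd}(a,0)=a$ is already a period (and $\mathrm{gcd}(0,0)=0$ is a period, since $t_{0}$ is the identity).

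Otherwise, Bézout gives integers $x,y$ with $\mathrm{gcd}(a,b)=xa+yb$. Since $\overline{\Sigma}_{f}$ is closed under addition and negation, the class of $xa+yb$ lies in $\overline{\Sigma}_{f}$. Concretely, the map $t_{xa+yb}$ is the composition $t_{a}^{x}\circ t_{b}^{y}$, where negative powers are inverses, and each factor preserves $T_{f}$.

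Part (2) is the same argument applied to $a$ and the period $0\equiv n$. Write $\mathrm{gcd}(a,n)=xa+yn$; modulo $n$ this is $xa$, which is a period because $t_{xa}=t_{a}^{x}$. One should check this is independent of the representative of $a$, which it is, since $\mathrm{gcd}(a+kn,n)=\mathrm{gcd}(a,n)$.

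The only subtle point is well-definedness in (1). The value $\mathrm{gcd}(a,b)$ depends on the chosen representatives, but for any choice Bézout still yields an integer combination of $a$ and $b$. Hence the statement holds for every choice of representatives, and no obstacle remains.
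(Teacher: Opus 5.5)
Your proof is correct and follows the paper's argument: Bézout's identity plus closure of the subgroup $\overline{\Sigma}_{f}$ gives (1), and (2) follows from (1) because $n \equiv 0 \in \overline{\Sigma}_{f}$. Your remarks on the zero case and on the choice of integer representatives are a harmless refinement that the paper leaves implicit.
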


\begin{proof}
	There are integers $u,v$ such that
	$$\mathrm{gcd}(a,b) = ua+vb.$$
	Then $(1)$ follows from that $\overline{\Sigma}_{f}$ is a group. And note that $n=0 \in \overline{\Sigma}_{f}$, then $(2)$ is an immediate consequence of $(1)$.
\end{proof}

\begin{definition}
	Define the common difference set associated to $T_{f}$ to be
	$$\Sigma_{f} = \{1 \leq d \leq n: \ d \mid n \ \mathrm{and} \ \overline{d} \in \overline{\Sigma}_{f}\},$$
	where $\overline{d}$ denotes the residue class of $d$ modulo $n$.
\end{definition}

Unless stating otherwise, we write elements in $\overline{\Sigma}_{f}$ as integers $a$ in the range $1 \leq a \leq n$. Then the common difference set $\Sigma_{f}$ can be identified with the subset of $\overline{\Sigma}_{f}$ consisting of the integers dividing $n$.

On the other hand, the additive stabilizer $\overline{\Sigma}_{f}$ can be recovered from $\Sigma_{f}$ as follows. Let $d_{0}$ be the smallest integer in $\Sigma_{f}$. By $(2)$ of Lemma \ref{lem 5} it is also the smallest integer in $\overline{\Sigma}_{f}$. As $\overline{\Sigma}_{f}$ is a subgroup of $\mathbb{Z}/n\mathbb{Z}$, we have
$$\overline{\Sigma}_{f} = d_{0}\mathbb{Z}/n\mathbb{Z}.$$

Define an order $\leq$ on $\Sigma_{f}$ by setting, for $d_{1},d_{2} \in \Sigma_{f}$, $d_{1} \leq d_{2}$ if $d_{1} \mid d_{2}$. Then $\Sigma_{f}$ has a unique maximal element and a unique minimal element, which are given by the lemma below.

\begin{lemma}\label{lem 9}
	Let the notations be given as above. Then we have that
	\begin{description}
		\item[(1)] $n$ is the unique maximal element in $\Sigma_{f}$, and
		\item[(2)] $d_{0}$ is the unique minimal element in $\Sigma_{f}$.
	\end{description}
\end{lemma}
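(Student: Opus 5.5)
The plan is to check the two claims directly from the definitions, using the fact that $\overline{\Sigma}_{f}$ is a subgroup of $\mathbb{Z}/n\mathbb{Z}$ together with Lemma \ref{lem 5}. Throughout, the order on $\Sigma_{f}$ is divisibility. So a unique maximal element means an element of $\Sigma_{f}$ divisible by every element of $\Sigma_{f}$. A unique minimal element means an element of $\Sigma_{f}$ dividing every element of $\Sigma_{f}$. In each case such an element is a maximum (respectively minimum) for the divisibility order, and is therefore automatically the unique maximal (respectively minimal) element.

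For (1), first note that $n \in \Sigma_{f}$. Indeed $n \mid n$, and $\overline{n} = 0$ lies in $\overline{\Sigma}_{f}$ because $t_{0}$ is the identity and so fixes $T_{f}$. By definition every $d \in \Sigma_{f}$ divides $n$. Hence $n$ is the maximum of $\Sigma_{f}$ under divisibility, and so it is the unique maximal element.

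For (2), let $d_{0}$ be the smallest integer in $\Sigma_{f}$. The paper has already observed, via (2) of Lemma \ref{lem 5}, that $d_{0}$ is also the smallest element of $\overline{\Sigma}_{f}$ written in the range $1 \leq a \leq n$, and hence that $\overline{\Sigma}_{f} = d_{0}\mathbb{Z}/n\mathbb{Z}$. I would re-derive this quickly: if some $a \in \overline{\Sigma}_{f}$ were smaller than $d_{0}$, then $\mathrm{gcd}(a,n) \le a$ would lie in $\Sigma_{f}$ and be smaller than $d_{0}$, a contradiction.

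Now take any $d \in \Sigma_{f}$. By (1) of Lemma \ref{lem 5}, $g=\mathrm{gcd}(d_{0},d)$ lies in $\overline{\Sigma}_{f}$. Since $g$ divides $d$, it also divides $n$, so $g \in \Sigma_{f}$. Minimality of $d_{0}$ gives $g \ge d_{0}$, while $g \mid d_{0}$ gives $g \le d_{0}$. Hence $g = d_{0}$, that is, $d_{0} \mid d$. So $d_{0}$ is the minimum of $\Sigma_{f}$ for divisibility, and therefore its unique minimal element.

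The only point needing care is that Lemma \ref{lem 5}(1) is applied to integer representatives, and that the $\mathrm{gcd}$ obtained is a positive divisor of $n$. Both hold because $d_{0}$ and $d$ are positive divisors of $n$. No serious obstacle is expected.
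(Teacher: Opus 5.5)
Your proof is correct and follows essentially the same route as the paper: (1) is the identical observation, and in (2) you get $d_{0} \mid d$ from the gcd-closure of Lemma \ref{lem 5}(1) plus the minimality of $d_{0}$, whereas the paper reads it off from $\overline{\Sigma}_{f} = d_{0}\mathbb{Z}/n\mathbb{Z}$, which is the same B\'ezout-type fact in a different guise. Your side re-derivation that $d_{0}$ is also the least element of $\overline{\Sigma}_{f}$ is correct but not needed for your argument.
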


\begin{proof}
	The assertion $(1)$ is clear from the definition of $\Sigma_{f}$. By the argument before Lemma \ref{lem 9}
	$$\overline{\Sigma}_{f} = d_{0}\mathbb{Z}/n\mathbb{Z},$$
	so viewing elements in $\overline{\Sigma}_{f}$ as integers in the range $1 \leq a \leq n$, every element in $\overline{\Sigma}_{f}$ is a multiple of $d_{0}$. Then the conclusion follows from the embedding of $\Sigma_{f}$ into $\overline{\Sigma}_{f}$.
\end{proof}

Now we can prove the main theorem in this section, which claims that $\Sigma_{f}$ gives an anti-order-preserving complete classification of the space $\mathcal{MER}(T_{f})$ of MED representations of $T_{f}$.

\begin{theorem}\label{thm 3}
	There is an anti-order-preserving one-to-one correspondence between $\Sigma_{f}$ and $\mathcal{MER}(T_{f})$.
\end{theorem}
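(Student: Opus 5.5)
The correspondence I would use is
$$\psi_f:\Sigma_f\to\mathcal{MER}(T_f),\qquad d\mapsto \Big(T_f=\bigsqcup_{\gamma\in T_f/(d\mathbb{Z}/n\mathbb{Z})}(\gamma+d\mathbb{Z}/n\mathbb{Z})\Big).$$
That is, $\psi_f(d)$ partitions $T_f$ into its cosets modulo the subgroup $d\mathbb{Z}/n\mathbb{Z}$. I would check in turn that this map is well defined, injective, surjective, and reverses the order.

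\emph{Well-definedness.} Take $d\in\Sigma_f$. Then $d\mid n$ and $T_f$ is stable under $t_d$. Fix $\gamma\in T_f$. The set $\{\gamma,\gamma+d,\dots,\gamma+(\frac nd-1)d\}$ is exactly the coset $\gamma+d\mathbb{Z}/n\mathbb{Z}$, and by stability it lies in $T_f$. Distinct cosets are disjoint, so $T_f$ is the disjoint union of the cosets of $d\mathbb{Z}/n\mathbb{Z}$ that it meets. Each coset is an equal-difference set with common difference $d$, so $\psi_f(d)$ is a MED representation with common difference $d$.

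\emph{Injectivity and surjectivity.} The common difference can be read off from a single block: a block has $\frac nd$ elements, so $d=n/|E_i|$. Hence $\psi_f$ is injective. For surjectivity, let $T_f=\bigsqcup_i E_i$ be any MED representation with common difference $d$. By definition $d\mid n$, and each $E_i$ is a full coset $\gamma_i+d\mathbb{Z}/n\mathbb{Z}$. Translating by $d$ fixes every $E_i$, hence fixes $T_f$, so $\overline d\in\overline\Sigma_f$ and $d\in\Sigma_f$. The blocks are precisely the cosets of $d\mathbb{Z}/n\mathbb{Z}$ inside $T_f$, so the representation equals $\psi_f(d)$.

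\emph{Order.} Suppose $d_1\mid d_2$ in $\Sigma_f$. Then $d_2\mathbb{Z}/n\mathbb{Z}\subseteq d_1\mathbb{Z}/n\mathbb{Z}$, so every coset of the larger subgroup $d_1\mathbb{Z}/n\mathbb{Z}$ is a disjoint union of cosets of $d_2\mathbb{Z}/n\mathbb{Z}$. This gives $\psi_f(d_1)\geq\psi_f(d_2)$.

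Conversely, suppose $\psi_f(d_1)\geq\psi_f(d_2)$. Then some coset $\gamma+d_1\mathbb{Z}/n\mathbb{Z}$ is a union of cosets of $d_2\mathbb{Z}/n\mathbb{Z}$. In particular $\gamma+d_2\mathbb{Z}/n\mathbb{Z}\subseteq\gamma+d_1\mathbb{Z}/n\mathbb{Z}$, which forces $d_2\in d_1\mathbb{Z}/n\mathbb{Z}$. Since $d_1\mid n$, this means $d_1\mid d_2$.

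So $\psi_f$ is an anti-order-preserving bijection. The only step needing care is surjectivity, where I use that an equal-difference set with difference $d$ is by definition a full coset of size $n/d$, not merely an arithmetic progression. As a consistency check, the maximal element $n$ of $\Sigma_f$ (Lemma \ref{lem 9}) goes to the trivial representation, and the minimal element $d_0$ goes to the coarsest one.
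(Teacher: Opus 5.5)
Your proposal is correct and follows essentially the same route as the paper: $\psi_f(d)$ partitions $T_f$ into cosets of $d\mathbb{Z}/n\mathbb{Z}$, and injectivity comes from reading off the common difference. Surjectivity comes from the facts that every common difference is a period and that equal-difference sets with the same difference either coincide or are disjoint, and the order statement from refining cosets of $d_1\mathbb{Z}/n\mathbb{Z}$ into cosets of $d_2\mathbb{Z}/n\mathbb{Z}$. Your converse (that $\psi_f(d_1)\geq\psi_f(d_2)$ forces $d_1\mid d_2$) is a small addition the paper omits, and it correctly upgrades the bijection to an order anti-isomorphism.
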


\begin{proof}
	To any $d \in \Sigma_{f}$, we assign a MED representation of $T_{f}$ as follows. Choose a $\gamma_{1} \in T_{f}$. As $t_{d}(T_{f}) = T_{f}$, then
	$$E_{1} = \{\gamma_{1}+jd \ | \ 0 \leq j \leq \frac{n}{d}-1\} \subseteq T_{f}.$$
	If $E_{1} = T_{f}$, there is nothing further to show. Otherwise, there is a $\gamma_{2} \in T_{f} \setminus E_{1}$. Similarly we have
	$$E_{2} =  \{\gamma_{2}+jd \ | \ 0 \leq j \leq \frac{n}{d}-1\} \subseteq T_{f}.$$
	As $\gamma_{2} \notin E_{1}$, then $E_{1} \cap E_{2} = \varnothing$ and consequently 
	$$E_{1} \sqcup E_{2} \subseteq T_{f}.$$
Keep on the above process. Since $T_{f}$ is a finite set, it will stop after finitely many steps and yield a 
MED representation
	\begin{equation}\label{eq 16}
		T_{f} = \bigsqcup_{i=1}^{s}E_{i}
	\end{equation}
	of $T_{f}$, with common difference $d$. Denoting the MED representation \eqref{eq 16} by $\psi_{f}(d)$, then we obtain a map
	$$\psi_{f}: \Sigma_{f} \rightarrow \mathcal{MER}(T_{f}).$$
	
	Notice that for any $d \in \Sigma_{f}$ the common difference of $\psi_{f}(d)$ is $d$, thus the map $\psi_{f}$ is injective. To show that $\psi_{f}$ is surjective, it suffices to prove the following two assertions:
	\begin{itemize}
		\item[(\romannumeral1)] the common difference of any MED representation of $T_{f}$ lies in $\Sigma_{f}$;
		\item[(\romannumeral2)] for any $d \in \Sigma_{f}$, $\psi_{f}(d)$ is the only MED representation of $T_{f}$ with common difference $d$.
	\end{itemize}
	Let 
	$$T_{f} = \bigsqcup_{i=1}^{r}F_{i}$$
	be a MED representation of $T_{f}$, with common difference $d$. As $t_{d}(F_{i}) = F_{i}$ for any $1 \leq i \leq r$, then
	$$t_{d}(T_{f}) = \bigsqcup_{i=1}^{r}t_{d}(F_{i}) = \bigsqcup_{i=1}^{r}F_{i} = T_{f},$$
	which implies that $d \in \Sigma_{f}$. This proves the assertion (\romannumeral1). To prove (\romannumeral2), we write $\psi_{f}(d)$ as
	$$T_{f} = \bigsqcup_{i=1}^{s}E_{i}.$$
	Choose a $\gamma_{1} \in E_{1}$. Since $\gamma_{1} \in T_{f} = \bigsqcup\limits_{i=1}^{r}F_{i}$, changing the order the $F_{i}$'s if necessary, then $\gamma_{1} \in F_{1}$. Notice that any two equal-difference sets with common difference $d$ either coincide or disjoint, hence we have $E_{1} = F_{1}$ and 
	$$\bigsqcup_{i=2}^{r}F_{i} = \bigsqcup_{i=2}^{s}E_{i}.$$
	Applying induction yields that $r=s$ and $E_{i} = F_{i}$ for all $1 \leq i \leq s$ after changing the order of the $F_{i}$'s suitably.
	
	Finally it remains to prove that the map $\psi$ is anti-order-preserving. Let $d,d^{\prime} \in \Sigma_{f}$ be such that $d \mid d^{\prime}$. Write $\psi_{f}(d)$ as
	$$T_{f} = \bigsqcup_{i=1}^{s}E_{i}.$$
	For each $1 \leq i \leq s$ the equal-difference set $E_{i}$ can be expressed as
	$$E_{i} = \gamma_{i} + d\cdot \mathbb{Z}/n\mathbb{Z}$$
	for some $\gamma_{i} \in \mathbb{Z}/n\mathbb{Z}$. Then $E_{i}$ can be further partitioned into 
	$$E_{i} = \bigsqcup_{j_{i}=0}^{\frac{d^{\prime}}{d}-1}(\gamma_{i}+j_{i}d+d^{\prime}\cdot\mathbb{Z}/n\mathbb{Z}),$$
	which gives
	\begin{equation}\label{eq 17}
		T_{f} = \bigsqcup_{i=1}^{s}\bigsqcup_{j_{i}=0}^{\frac{d^{\prime}}{d}-1}(\gamma_{i}+j_{i}d+d^{\prime}\cdot\mathbb{Z}/n\mathbb{Z}).
	\end{equation}
	Note that for any $1\leq i \leq s$ and any $0 \leq j_{i} \leq \frac{d^{\prime}}{d}$, the set $\gamma_{i}+j_{i}d+d^{\prime}\cdot\mathbb{Z}/n\mathbb{Z}$ is an equal-difference set with common difference $d^{\prime}$. Thus \eqref{eq 17} is a MED representation of $T_{f}$ with common difference $d^{\prime}$. By the assertion (\romannumeral2) in the last paragraph, \eqref{eq 17} coincides with $\psi_{f}(d^{\prime})$. Hence one has $\psi_{f}(d^{\prime}) \leq \psi_{f}(d)$.
\end{proof}

\begin{remark}
	From the proof of Theorem \ref{thm 3}, $\Sigma_{f}$ can be realized as the set of the integers that appear as the common differences of MED representations of $T_{f}$, which justifies its name. In particular, in the case where $f(X)$ is irreducible over $\mathbb{F}_{q}$, Theorem \ref{thm 1} can be rephrased as below.
\end{remark}

\begin{corollary}
	If $f(X)$ is an irreducible polynomial over $\mathbb{F}_{q}$, with defining set $c_{n/q}(\gamma)$, then the common difference set associated to $c_{n/q}(\gamma)$ is $\Sigma_{\gamma}$, and the additive stabilizer of $c_{n/q}(\gamma)$ is $\omega_{\gamma}\mathbb{Z}/n\mathbb{Z}$.
\end{corollary}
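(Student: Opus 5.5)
The plan is to read off both assertions from the two classification theorems already established. The first, Theorem \ref{thm 1}, classifies $\mathcal{MER}(c_{n/q}(\gamma))$ by $\Sigma_{\gamma}$. The second, Theorem \ref{thm 3}, classifies $\mathcal{MER}(T_{f})$ by $\Sigma_{f}$. Since $f(X)$ is irreducible with defining set $T_{f} = c_{n/q}(\gamma)$, the two spaces are the same set: the definition of a MED representation of $T_{f}$ specializes verbatim to Definition \ref{def 1}. So $\mathcal{MER}(T_{f}) = \mathcal{MER}(c_{n/q}(\gamma))$.

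\textbf{Step 1: $\Sigma_{f} = \Sigma_{\gamma}$.} By the proof of Theorem \ref{thm 3} (see the Remark following it), $\Sigma_{f}$ is exactly the set of integers occurring as common differences of MED representations of $T_{f}$. Assertion (\romannumeral1) there gives that every such common difference lies in $\Sigma_{f}$. Conversely, each $d \in \Sigma_{f}$ is realized as the common difference of $\psi_{f}(d)$. On the other side, Theorem \ref{thm 1} says that $\psi_{\gamma}$ is a bijection $\Sigma_{\gamma} \to \mathcal{MER}(c_{n/q}(\gamma))$ and that $\psi_{\gamma}(d)$ has common difference $d$. Hence the set of common differences of MED representations of $c_{n/q}(\gamma)$ is exactly $\Sigma_{\gamma}$. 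Comparing the two descriptions of the same set gives $\Sigma_{f} = \Sigma_{\gamma}$. One can also note that the orders agree, since both $\psi$'s are anti-order-preserving with respect to divisibility.

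\textbf{Step 2: the additive stabilizer.} By the discussion before Lemma \ref{lem 9}, $\overline{\Sigma}_{f} = d_{0}\mathbb{Z}/n\mathbb{Z}$, where $d_{0}$ is the minimal element of $\Sigma_{f}$ (Lemma \ref{lem 9}(2)). By Step 1, $d_{0}$ is the minimal element of $\Sigma_{\gamma} = \{\frac{nt}{\tau} : \omega_{\gamma} \mid t \mid \tau\}$. This set is ordered by divisibility, and $\frac{nt}{\tau}$ divides $\frac{nt'}{\tau}$ exactly when $t \mid t'$. So the minimum is attained at the smallest admissible $t$, namely $t = \omega_{\gamma}$, and it corresponds to the coarsest MED representation of Lemma \ref{lem 4}. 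Thus the stabilizer is the cyclic subgroup of $\mathbb{Z}/n\mathbb{Z}$ attached to the parameter $t = \omega_{\gamma}$. This is how I would interpret the notation $\omega_{\gamma}\mathbb{Z}/n\mathbb{Z}$ in the statement, with $\Sigma_{\gamma}$ indexed by $t$ as in Theorem \ref{thm 1}. Explicitly, its generator is $d_{0} = \frac{n\omega_{\gamma}}{\tau}$.

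\textbf{Expected obstacle.} The main point to handle carefully is this final identification. One must check that the minimal common difference is produced by $t = \omega_{\gamma}$, using that $\omega_{\gamma} \mid \tau$ so that $t = \omega_{\gamma}$ is admissible. One must also make explicit how the parameter $\omega_{\gamma}$ labels the generator $\frac{n\omega_{\gamma}}{\tau}$. In the primitive case $\tau = n/\omega_{\gamma}$ the two expressions differ, so this translation has to be stated rather than taken as literal equality of integers. Everything else is a direct transcription of Theorems \ref{thm 1} and \ref{thm 3}.
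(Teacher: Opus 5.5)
Your Step 1 is exactly the paper's justification: the Remark after Theorem \ref{thm 3} identifies $\Sigma_{f}$ with the set of common differences of MED representations of $T_{f}$, and Theorem \ref{thm 1} identifies that set with $\Sigma_{\gamma}$. Your Step 2 is also correct: $\overline{\Sigma}_{f}=d_{0}\mathbb{Z}/n\mathbb{Z}$, with $d_{0}$ attained at $t=\omega_{\gamma}$, and this $t$ is admissible because $\omega_{\gamma}\mid\tau$ by Lemma \ref{lem 4}.

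What needs fixing is how you report the conclusion. Do not present $\frac{n\omega_{\gamma}}{\tau}$ as an \emph{interpretation} of the notation $\omega_{\gamma}\mathbb{Z}/n\mathbb{Z}$. Read literally, the second assertion of the statement is false. What you have proved is the corrected version $\overline{\Sigma}_{f}=\frac{n\omega_{\gamma}}{\tau}\mathbb{Z}/n\mathbb{Z}$, and you should say so explicitly.

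A counterexample makes the point. Take $q=7$ and $f(X)=X^{3}-2$, which is irreducible of order $n=9$ with $\tau=3$ and $\omega_{\gamma}=\mathrm{ord}_{3}(7)=1$. Its defining set is $\gamma+\{0,3,6\}$, whose additive stabilizer is $3\mathbb{Z}/9\mathbb{Z}$, not $\omega_{\gamma}\mathbb{Z}/9\mathbb{Z}=\mathbb{Z}/9\mathbb{Z}$.

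The corrected form is also the one consistent with Theorems \ref{thm 4} and \ref{thm 5}. There $\sigma_{f}=|\overline{\Sigma}_{f}|=\tau/\omega_{\gamma}$, which yields $b_{\mathrm{AS}}(\mathcal{C})=\omega_{\gamma}+1$.

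Drop your aside about a ``primitive case $\tau=n/\omega_{\gamma}$''; it is confused. The integers $\omega_{\gamma}$ and $\frac{n\omega_{\gamma}}{\tau}$ differ for every $n>1$, because $\tau=\mathrm{ord}_{n}(q)\le\phi(n)<n$. The example above shows that the subgroups they generate differ as well.
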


Combining Lemma \ref{lem 9} and Theorem \ref{thm 3}, we obtain the following consequences immediately.

\begin{corollary}\label{coro 3}
	There is a unique coarsest MED representation of $T_{f}$.
\end{corollary}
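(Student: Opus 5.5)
The plan is to deduce the statement directly from Theorem \ref{thm 3} and Lemma \ref{lem 9}. By Theorem \ref{thm 3}, the map $\psi_{f}:\Sigma_{f}\rightarrow\mathcal{MER}(T_{f})$ is a bijection. It is anti-order-preserving: if $d\mid d^{\prime}$ in $\Sigma_{f}$, then $\psi_{f}(d^{\prime})\leq\psi_{f}(d)$. So a coarsest MED representation of $T_{f}$ should be the image of the minimal element of $\Sigma_{f}$ under the divisibility order. Lemma \ref{lem 9}(2) identifies this minimal element as $d_{0}$, the smallest positive generator of the additive stabilizer $\overline{\Sigma}_{f}=d_{0}\mathbb{Z}/n\mathbb{Z}$.

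\textbf{Existence.} I will show that $\psi_{f}(d_{0})$ is coarser than every MED representation. Take any element of $\mathcal{MER}(T_{f})$. By surjectivity of $\psi_{f}$ it equals $\psi_{f}(d)$ for some $d\in\Sigma_{f}$. The proof of Lemma \ref{lem 9} shows that every element of $\overline{\Sigma}_{f}$, viewed as an integer in $[1,n]$, is a multiple of $d_{0}$, so $d_{0}\mid d$. The anti-order-preserving property then gives $\psi_{f}(d)\leq\psi_{f}(d_{0})$.

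\textbf{Uniqueness.} Suppose $\psi_{f}(d_{1})$ is also coarsest. Then $\psi_{f}(d_{0})\leq\psi_{f}(d_{1})$ and $\psi_{f}(d_{1})\leq\psi_{f}(d_{0})$. I need that mutual refinement forces equality of partitions. If each block of one partition is a union of blocks of the other and vice versa, then every block of the first is contained in a block of the second, which is in turn contained in a block of the first. Since the blocks are disjoint, these must be the same block, so all the containments are equalities. Hence $\psi_{f}(d_{0})=\psi_{f}(d_{1})$, and by injectivity $d_{1}=d_{0}$.

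There is essentially no obstacle. The one point needing care is that the relation ``coarser than'' on $\mathcal{MER}(T_{f})$ is antisymmetric, which is exactly the argument just given for uniqueness. Without it, Lemma \ref{lem 9}(2) together with Theorem \ref{thm 3} would yield only a greatest element up to equivalence rather than a unique one.
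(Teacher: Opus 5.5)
Your proposal is correct and follows the paper's route: the paper obtains the corollary immediately from Theorem \ref{thm 3} and Lemma \ref{lem 9}, with $\psi_{f}(d_{0})$ as the coarsest representation. Your two added checks fill in steps the paper leaves implicit: that $d_{0}$ divides every element of $\Sigma_{f}$, so it is a least element and not merely minimal, and that the refinement relation is antisymmetric.
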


\begin{corollary}
	The defining set $T_{f}$ admits a nontrivial MED representation if and only if $\overline{\Sigma}_{f}$ is not a trivial group.
\end{corollary}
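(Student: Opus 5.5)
The corollary concerns "nontrivial MED representation", which we read as any MED representation other than the trivial one $T_{f} = \bigsqcup_{\gamma \in T_{f}}\{\gamma\}$. The plan is to transport the question through the bijection $\psi_{f}: \Sigma_{f} \rightarrow \mathcal{MER}(T_{f})$ of Theorem \ref{thm 3}, using that the common difference of $\psi_{f}(d)$ is $d$.

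First we identify which element of $\Sigma_{f}$ corresponds to the trivial representation. A one-element equal-difference set $\{\gamma\}$ has $\frac{n}{d}=1$, so its common difference is $d=n$. Hence the trivial representation is $\psi_{f}(n)$; this is consistent with $n$ being the maximal element of $\Sigma_{f}$ (Lemma \ref{lem 9}) and $\psi_{f}$ being anti-order-preserving. Conversely, by assertion (ii) in the proof of Theorem \ref{thm 3}, $\psi_{f}(n)$ is the only MED representation with common difference $n$.

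Consequently, $T_{f}$ admits a nontrivial MED representation if and only if $\Sigma_{f} \neq \{n\}$. It remains to show that $\Sigma_{f} \neq \{n\}$ if and only if $\overline{\Sigma}_{f} \neq \{0\}$, where $\overline{n}=0$ in $\mathbb{Z}/n\mathbb{Z}$.

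\begin{itemize}
\item If some $d \in \Sigma_{f}$ satisfies $d<n$, then $\overline{d}\neq 0$ lies in $\overline{\Sigma}_{f}$, so $\overline{\Sigma}_{f}$ is nontrivial.
\item If $\overline{\Sigma}_{f}$ contains a nonzero class $a$ with $1\le a<n$, then Lemma \ref{lem 5}(2) gives $\gcd(a,n)\in\overline{\Sigma}_{f}$. This is a divisor of $n$ lying strictly below $n$, so it belongs to $\Sigma_{f}\setminus\{n\}$. Equivalently, $\overline{\Sigma}_{f}=d_{0}\mathbb{Z}/n\mathbb{Z}$ is nontrivial exactly when $d_{0}<n$.
\end{itemize}

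No step is a real obstacle. The only point needing care is pinning down that the trivial representation corresponds exactly to $d=n$, which is what makes "nontrivial MED representation" match "$\Sigma_{f}$ has an element besides $n$."
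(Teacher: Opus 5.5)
Your proof is correct and follows the paper's own route: the paper obtains this corollary directly from the bijection $\psi_{f}$ of Theorem \ref{thm 3} together with Lemma \ref{lem 9}. The trivial representation corresponds to $d=n$, and $\overline{\Sigma}_{f}=d_{0}\mathbb{Z}/n\mathbb{Z}$ is nontrivial exactly when $d_{0}<n$; this is the chain you spell out, with Lemma \ref{lem 5}(2) supplying the step from a nonzero period to a proper divisor of $n$.
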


For a nonconstant squarefree polynomial $f(X) \in \mathbb{F}_{q}[X]$ such that $f(0) \neq 0$, its defining set $T_{f}$ depends on the choice of a primitive $n$-th root of unity, where $n= \mathrm{ord}(f)$. However, it turns out that the additive stabilizer $\overline{\Sigma}_{f}$ and the common difference set $\Sigma_{f}$ associated to $T_{f}$ are independent of this choice.

Assume that $\zeta_{n}$ and $\zeta_{n}^{\prime}$ are two primitive $n$-th roots of unity, and $T_{f}$ and $T_{f}^{\prime}$ are the defining sets of $f(X)$ with respect to $\zeta_{n}$ and $\zeta_{n}^{\prime}$, respectively. There is a positive integer $r$ coprime to $n$ such that $\zeta_{n}^{\prime} = \zeta_{n}^{r}$, which gives
$$T_{f}^{\prime} = r^{-1}T_{f} = \{r^{-1}\gamma \ | \ \gamma \in T_{f}\}.$$
Given a MED representation
$$T_{f} = \bigsqcup_{i \in I}E_{i}$$
of $T_{f}$ with common difference $d$, we have
$$T_{f}^{\prime} = r^{-1}T_{f} = \bigsqcup_{i \in I}r^{-1}E_{i}.$$
For every $i \in I$, $E_{i}$ can be written as
$$E_{i} = \{\gamma_{i}+jd \ | \ 0 \leq j \leq \frac{n}{d}-1\}$$
for some $\gamma_{i} \in \mathbb{Z}/n\mathbb{Z}$. Noting that $r^{-1}$ is coprime to $n$, one obtains that
$$r^{-1}E_{i} = \{r^{-1}\gamma_{i}+r^{-1}jd \ | \ 0 \leq j \leq \frac{n}{d}-1\} = \{r^{-1}\gamma_{i}+jd \ | \ 0 \leq j \leq \frac{n}{d}-1\},$$
which is an equal-difference set with common difference $d$. It follows that 
$$T_{f}^{\prime} = \bigsqcup_{i \in I}r^{-1}E_{i}$$
is a MED representation of $T_{f}^{\prime}$ with common difference $d$. 

Note that the common difference set $\Sigma_{f}$ (resp. $\Sigma_{f}^{\prime}$) associated to $T_{f}$ (resp. $T_{f}^{\prime}$) consists of the common differences of MED representations of $T_{f}$ (resp. $T_{f}^{\prime}$), thus the above argument shows that they coincide. Moreover, as the additive stabilizer $\overline{\Sigma}_{f}$ (resp. $\overline{\Sigma}_{f}^{\prime}$) of $T_{f}$ (resp. $T_{f}^{\prime}$) can be written as
$$\overline{\Sigma}_{f} = d_{0}\mathbb{Z}/n\mathbb{Z} = \overline{\Sigma}_{f}^{\prime},$$
where $d_{0}$ is the unique minimal element in $\Sigma_{f} = \Sigma_{f}^{\prime}$. This proves that the additive stabilizer and the common difference set do not depend on the particular form of the defining set $T_{f}$, which explains why we denote them respectively by $\overline{\Sigma}_{f}$ and $\Sigma_{f}$ instead of $\overline{\Sigma}_{T_{f}}$ and $\Sigma_{T_{f}}$.

\subsection{The arithmetic structure of $\Sigma_{f}$}
We adopt the conventions and notations from the last subsection. Proposition \ref{prop 1} and \ref{thm 3} together lead to the following result.

\begin{proposition}\label{prop 3}
	There is a one-to-one correspondence between $\Sigma_{f}$ and the set of all factorizations of $f(X)$ into a product of binomials of the same degree over $\mathbb{F}_{q}(\zeta_{n})$, where $n = \mathrm{ord}(f)$. Moreover, this correspondence is anti-order-preserving in the sense that for any $d_{1}, d_{2} \in \Sigma_{f}$, $d_{1} \mid d_{2}$ if and only if the factorization corresponding to $d_{1}$ can be obtained by further decomposing the factors that appear in the factorization corresponding to $d_{2}$.
\end{proposition}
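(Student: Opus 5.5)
The plan is to compose the bijection $\psi_{f}:\Sigma_{f}\to\mathcal{MER}(T_{f})$ of Theorem \ref{thm 3} with a map sending MED representations of $T_{f}$ to factorizations of $f(X)$ into binomials of equal degree over $\mathbb{F}_{q}(\zeta_{n})$. Given $d\in\Sigma_{f}$ with $\psi_{f}(d)$ equal to $T_{f}=\bigsqcup_{i=1}^{s}E_{i}$, where $E_{i}=\gamma_{i}+d\cdot\mathbb{Z}/n\mathbb{Z}$, I would assign the factorization
$$f(X)=\prod_{i=1}^{s}f_{E_{i}}(X)=\prod_{i=1}^{s}\bigl(X^{\frac{n}{d}}-\zeta_{d}^{\gamma_{i}}\bigr).$$
This uses the computation preceding Lemma \ref{lem 1}. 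Each factor is a binomial of degree $n/d$ with coefficients in $\mathbb{F}_{q}(\zeta_{n})$, since $\zeta_{d}=\zeta_{n}^{n/d}$. The identity holds because $f$ is squarefree with root set $\{\zeta_{n}^{\gamma}:\gamma\in T_{f}\}$ and the $E_{i}$ partition $T_{f}$.

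The main step is showing that this assignment is a bijection onto the set of all such factorizations. For injectivity, the degree $n/d$ of the factors recovers $d$. For surjectivity, take any factorization $f=\prod_{j}g_{j}$ in which every $g_{j}$ is a binomial of a common degree $k$ over $\mathbb{F}_{q}(\zeta_{n})$. Each $g_{j}$ divides $f$, hence divides $X^{n}-1$, so it is squarefree with nonzero constant term and its roots are among the $\zeta_{n}^{\gamma}$. I would write $g_{j}=X^{k}-c_{j}$ after making it monic, which is possible since $f$ is monic and the factors can be normalized. Then its roots form a coset of the subgroup of $k$-th roots of unity inside $\langle\zeta_{n}\rangle$, and $k\mid n$ because all $k$ roots lie in $\langle\zeta_{n}\rangle$. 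So the exponent set $T_{g_{j}}\subseteq\mathbb{Z}/n\mathbb{Z}$ is an equal-difference set with common difference $d=n/k$.

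This is the content of Proposition \ref{prop 1}. One subtlety is that Proposition \ref{prop 1} is stated for polynomials over $\mathbb{F}_{q}$ with defining sets relative to their own order, so I would either redo the short argument above directly or note that its proof works verbatim over $\mathbb{F}_{q}(\zeta_{n})$ with exponents taken modulo $n$. The sets $T_{g_{j}}$ then partition $T_{f}$ into equal-difference sets of common difference $d$, which is a MED representation. By assertions (i) and (ii) in the proof of Theorem \ref{thm 3}, $d\in\Sigma_{f}$ and this representation is $\psi_{f}(d)$. I expect this surjectivity step, with its careful handling of exponents modulo $n$, to be the main obstacle.

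For the order statement, Theorem \ref{thm 3} says $d_{1}\mid d_{2}$ implies $\psi_{f}(d_{2})\le\psi_{f}(d_{1})$. That is, each $E$ with difference $d_{1}$ splits into sets with difference $d_{2}$, which translates to each binomial factor for $d_{1}$ splitting into binomial factors for $d_{2}$. The correspondence is anti-order-preserving in the sense stated, though the indexing convention in the statement may need to be read as "finer" for the larger $d$.

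For the converse, suppose each factor $X^{n/d_{1}}-\zeta_{d_{1}}^{\gamma}$ is a product of factors $X^{n/d_{2}}-\cdots$. Then each $E$ with difference $d_{1}$ is a union of sets with difference $d_{2}$. Hence $\gamma+d_{1}\in\gamma+d_{2}\mathbb{Z}/n\mathbb{Z}$, so $d_{2}\mid d_{1}$ modulo $n$; since both divide $n$, this is divisibility of integers. Thus the correspondence matches divisibility with refinement of factorizations.
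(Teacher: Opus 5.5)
Your route is the paper's: compose the bijection $\psi_f$ of Theorem \ref{thm 3} with the correspondence between MED representations and equal-degree binomial factorizations. That correspondence comes from the computation before Lemma \ref{lem 1} and from Proposition \ref{prop 1}. Your bijection part is correct and somewhat more careful than the paper's proof. The paper writes the factor degree as $\deg(f)/d$ where $n/d$ is meant, and it applies the $\mathbb{F}_q$-version of the binomial criterion without comment. Your direct argument that the roots of $X^k-c_j$ form a coset of the $k$-th roots of unity inside $\langle\zeta_n\rangle$ is exactly what surjectivity needs over $\mathbb{F}_q(\zeta_n)$. You are also right that the statement's indices are reversed: when $d_1\mid d_2$, it is the $d_2$-factorization that refines the $d_1$-factorization.

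There is, however, a real error in your converse for the order statement. Your hypothesis is that each $E=\gamma+d_1\mathbb{Z}/n\mathbb{Z}$ is a union of sets of common difference $d_2$. From this you conclude $\gamma+d_1\in\gamma+d_2\mathbb{Z}/n\mathbb{Z}$, and hence $d_2\mid d_1$. The containment is backwards: $\gamma+d_1$ lies in $E$, but it need not lie in the $d_2$-piece containing $\gamma$. For example, take $n=4$, $d_1=1$, $d_2=2$, $\gamma=0$: then $E=\mathbb{Z}/4\mathbb{Z}$ is the union of $\{0,2\}$ and $\{1,3\}$, yet $1\notin\{0,2\}$. Combined with your forward direction, your converse would force $d_1=d_2$ whenever one factorization refines the other, which fails for $d_1=d_0<d_2=n$.

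The correct inference runs the other way. The $d_2$-piece through $\gamma$ is contained in $E$, so $\gamma+d_2\in\gamma+d_1\mathbb{Z}/n\mathbb{Z}$, i.e. $d_2\in d_1\mathbb{Z}/n\mathbb{Z}$. Since $d_1\mid n$, this gives $d_1\mid d_2$. With that fix your argument is complete. This step is worth getting right: the proof of Theorem \ref{thm 3} only checks $d\mid d'\Rightarrow\psi_f(d')\le\psi_f(d)$, and the paper's proof of this proposition does not supply the reverse implication either.
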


\begin{proof}
	Let 
	$$T_{f} = \bigsqcup_{i=1}^{s}E_{i}$$
	be a MED representation of $T_{f}$ with common difference $d$. Set
	$$f_{i}(X) = \prod_{\gamma \in E_{i}}(X - \zeta_{n}^{\gamma})$$
	for $1 \leq i \leq s$. Then
	$$f(X) = \prod_{i=1}^{s}f_{i}(X)$$
	is a factorization of $f(X)$ into a product of binomials of degree $\frac{\mathrm{deg}(f)}{d}$.
	
	Conversely, if
	\begin{equation}\label{eq 18}
		f(X) = \prod_{i=1}^{s}f_{i}(X)
	\end{equation}
	is a factorization of $f(X)$ into a product of binomials of degree $r$, by Theorem \ref{thm 1} the defining set $E_{i}$ of $f_{i}(X)$ is an equal-difference set of common difference $\frac{\mathrm{deg}(f)}{r}$. Therefore \eqref{eq 18} induces a MED representation
	$$T_{f} = \bigsqcup_{i=1}^{s}E_{i}.$$
	
	Let 
	$$T_{f} = \bigsqcup_{i \in I}E_{i} = \bigsqcup_{j \in J}E_{j}^{\prime}$$
	be two MED representations of $T_{f}$, and
	$$f(X) = \prod_{i \in I}f_{i}(X) = \prod_{j \in J}f_{j}^{\prime}(X)$$
	be the corresponding factorizations of $f(X)$. By definition $\bigsqcup\limits_{i \in I}E_{i} \geq \bigsqcup\limits_{j \in J}E_{j}^{\prime}$ if and only if the index set $J$ can be partitioned as $J = \bigsqcup\limits_{i \in I}J_{i}$, and for each $i \in I$ the equal-difference set $E_{i}$ can be further decomposed as
	\begin{equation}\label{eq 8}
		E_{i} = \bigsqcup_{j \in J_{i}}E_{j}^{\prime}.
	\end{equation}
	The latter condition \eqref{eq 8} is equivalent to that each $f_{i}(X)$ can be decomposed as
	$$f_{i}(X) = \prod_{j \in J_{i}}f_{j}^{\prime}(X).$$
	
	Now Proposition \ref{prop 3} follows from Theorem \ref{thm 2}.
\end{proof}

Proposition \ref{prop 3} says that the common difference set $\Sigma_{f}$ of $T_{f}$ completely classifies the factorizations of $f(X)$ into a product of binomials of the same degree. However, in general these factorizations are over the splitting field of $f(X)$. It is also interesting to determine, given any finite extension field $\mathbb{F}_{q^{t}}$ of $\mathbb{F}_{q}$, the factorizations that are over $\mathbb{F}_{q^{t}}$.

Denote by $\mathbb{F}_{q^{\theta}}$ the splitting field of $f(X)$ over $\mathbb{F}_{q}$. We first show that for any $t \in \mathbb{N}^{+}$, a factor $g(X)$ is over $\mathbb{F}_{q^{t}}$ if and only if it is over $\mathbb{F}_{q^{t^{\prime}}}$ for $t^{\prime} = \mathrm{gcd}(t,\theta)$. If $g(X)$ is over $\mathbb{F}_{q^{t^{\prime}}}$, it is certainly over $\mathbb{F}_{q^{t}}$. Conversely, if $g(X)$ is over $\mathbb{F}_{q^{t}}$, then the defining set $T_{g}$ of $g(X)$ satisfies $q^{t}T_{g} = T_{g}$. Since $\mathbb{F}_{q^{\theta}}$ is the splitting field of $f(X)$, it clearly holds that $q^{\theta}T_{g} = T_{g}$. Writing $t^{\prime}$ as $t^{\prime} = tu+\theta v$, then
$$q^{t^{\prime}}T_{g} = q^{tu}(q^{\theta v}T_{g}) = q^{tu}T_{g} = T_{g},$$
which implies that $g(X)$ is over $\mathbb{F}_{q^{t^{\prime}}}$. Further, this conclusion indicates that the factorizations of $f(X)$ over $\mathbb{F}_{q^{t}}$ are the same as those over $\mathbb{F}_{q^{t^{\prime}}}$.

\begin{theorem}\label{thm 6}
	For any $t \in \mathbb{N}^{+}$ we set $t^{\prime} = \mathrm{gcd}(t,\theta)$, and define
	$$\Sigma_{f,t^{\prime}} = \{d \in \Sigma_{f} \ | \ q^{t^{\prime}} \equiv 1 \pmod{d}\}.$$
	Then under the bijection given in Proposition \ref{prop 2}, $\Sigma_{f,t^{\prime}}$ corresponds exactly to the factorizations of $f(X)$ over $\mathbb{F}_{q^{t}}$ into a product of binomials of the same degree.
\end{theorem}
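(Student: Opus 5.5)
We have to show: for $d \in \Sigma_{f}$, the factorization $\psi_{f}(d)$, namely
$$f(X)=\prod_{i=1}^{s}\prod_{\gamma\in E_{i}}(X-\zeta_{n}^{\gamma}) = \prod_{i=1}^{s}\bigl(X^{\frac{n}{d}}-\zeta_{d}^{\gamma_{i}}\bigr),$$
is defined over $\mathbb{F}_{q^{t}}$ if and only if $q^{t^{\prime}}\equiv 1 \pmod d$. Here $T_{f}=\bigsqcup_{i=1}^{s}(\gamma_{i}+d\mathbb{Z}/n\mathbb{Z})$ is the MED representation of Theorem \ref{thm 3}, and the binomial form of each factor comes from the computation preceding Lemma \ref{lem 1}. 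By the discussion just before the statement, being over $\mathbb{F}_{q^{t}}$ is the same as being over $\mathbb{F}_{q^{t^{\prime}}}$. So it suffices to fix $Q=q^{t^{\prime}}$ and show that every factor $X^{n/d}-\zeta_{d}^{\gamma_{i}}$ lies in $\mathbb{F}_{Q}[X]$ if and only if $Q\equiv 1\pmod d$.

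Lemma \ref{lem 1}, applied with $Q$ in place of $q$, says that $X^{n/d}-\zeta_{d}^{\gamma_{i}}$ is over $\mathbb{F}_{Q}$ if and only if $\gamma_{i}Q\equiv\gamma_{i}\pmod d$. Equivalently, $\zeta_{d}^{\gamma_{i}}$ is fixed by Frobenius $x\mapsto x^{Q}$.

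\emph{Sufficiency.} If $d\mid Q-1$, then $\gamma_{i}(Q-1)\equiv 0\pmod d$ for every $i$, so every factor is over $\mathbb{F}_{Q}$.

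\emph{Necessity.} Suppose every factor is over $\mathbb{F}_{Q}$, so $d\mid \gamma_{i}(Q-1)$ for all $i$. This does not yet give $d\mid Q-1$, since the $\gamma_{i}$ need not be units modulo $d$. The idea is to use that the whole coset $E_{i}=\gamma_{i}+d\mathbb{Z}/n\mathbb{Z}$ lies in $T_{f}$. Any representative $\gamma\in E_{i}$ may replace $\gamma_{i}$, since modulo $d$ they agree. So the condition involves only the set of residues $R=\{\gamma \bmod d:\gamma\in T_{f}\}\subseteq\mathbb{Z}/d\mathbb{Z}$, and it says $d\mid r(Q-1)$ for all $r\in R$, i.e. $d/\gcd(d,r)$ divides $Q-1$ for each $r\in R$. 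Hence we need some $r\in R$, or some combination, with $\gcd(d,R)=1$ in a suitable sense: the lcm of the $d/\gcd(d,r)$ over $r\in R$ should be $d$. We get this from the definition of the order $n$: since $n=\mathrm{ord}(f)$, the roots $\zeta_{n}^{\gamma}$ generate, via orders, the full $n$, i.e. $\mathrm{lcm}_{\gamma\in T_{f}} n/\gcd(n,\gamma)=n$, equivalently $\gcd(n,T_{f})$-type condition. Projecting to $\mathbb{Z}/d\mathbb{Z}$ should give $\mathrm{lcm}_{r\in R}\, d/\gcd(d,r)=d$. Then $d\mid Q-1$.

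The main obstacle is this last reduction. That $\mathrm{lcm}$ of orders equals $n$ does not immediately give the analogous statement modulo $d$. What saves it is $d$-periodicity of $T_{f}$: for each prime power $\ell^{a}\,\|\, d$ we must find $\gamma\in T_{f}$ with $\ell\nmid\gamma$. I would argue by contradiction. If $\ell\mid\gamma$ for all $\gamma\in T_{f}$, then, because $\ell\mid d\mid n$, every root is an $(n/\ell)$-th root of unity, so $f\mid X^{n/\ell}-1$, contradicting $\mathrm{ord}(f)=n$. Such a $\gamma$ then has $\ell^{a}\mid d/\gcd(d,\gamma)$, so $\ell^{a}\mid Q-1$ for every $\ell^{a}\,\|\,d$, and hence $d\mid Q-1$. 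Finally, the correspondence with factorizations is the one of Proposition \ref{prop 3} (cited in the statement as Proposition \ref{prop 2}), which completes the identification.
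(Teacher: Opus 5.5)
Your proof is correct and follows essentially the same route as the paper. You reduce to $Q=q^{t'}$, apply Lemma \ref{lem 1} to each binomial factor $X^{n/d}-\zeta_d^{\gamma_i}$, and use $\mathrm{ord}(f)=n$ to upgrade $d\mid\gamma_i(Q-1)$ for all $i$ to $d\mid Q-1$. Your prime-by-prime step (for each prime $\ell\mid d$ some $\gamma\in T_f$ has $\ell\nmid\gamma$, since otherwise $f\mid X^{n/\ell}-1$) is a more careful form of the paper's claim that the $\gamma_i$ have gcd coprime to $d$, and it is visibly independent of the choice of coset representatives.
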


\begin{proof}
	Let $d \in \Sigma_{f}$, and let
	\begin{equation}\label{eq 9}
		T_{f} = \bigsqcup_{i=1}^{s}E_{i}
	\end{equation}
	be the MED representation of $T_{f}$ with common difference $d$. For $1 \leq i \leq s$ we write
	$$E_{i} = \{\gamma_{i}+jd \ | \ 0 \leq j \leq \frac{n}{d}\}.$$
	Then the factorization of $f(X)$ induced by \eqref{eq 9} is given by
	\begin{equation}\label{eq 10}
		f(X) = \prod_{i=1}^{s}f_{i}(X),
	\end{equation}
	where for each $i$
	$$f_{i}(X) = \prod_{j=0}^{\frac{n}{d}-1}(X-\zeta_{n}^{\gamma_{i}+jd}) = X^{\frac{n}{d}}-\zeta_{d}^{\gamma_{i}}.$$
	The factorization \eqref{eq 10} is over $\mathbb{F}_{q^{t}}$ if and only if it is over $\mathbb{F}_{q^{t^{\prime}}}$, which by Lemma \ref{lem 1} amounts to that for all $1 \leq i \leq s$
	\begin{equation}\label{eq 11}
		\gamma_{i}q^{t^{\prime}} \equiv \gamma_{i} \pmod{d}.
	\end{equation}
	Notice that $n$ is the order of $f(X)$, therefore the greatest common divisor of the $\gamma_{i}$'s is coprime to $n$, and hence is coprime to $d$. It then follows that \eqref{eq 11} holds for every $1 \leq i \leq s$ if and only if
	$$q^{t^{\prime}} \equiv 1 \pmod{d}.$$
\end{proof}

Following from Theorem \ref{thm 4}, for any positive divisor $t$ of $\theta$ we call the set $\Sigma_{f,t}$ the common difference subset of $T_{f}$ over $\mathbb{F}_{q^{t}}$. Applying these notations, Corollary \ref{coro 1} can be rephrased as follows.

\begin{corollary}
	If $f(X)$ is irreducible over $\mathbb{F}_{q}$, with defining set $c_{n/q}(\gamma)$, then for any positive divisor $t$ of $\tau = |c_{n/q}(\gamma)|$ the common difference subset of $c_{n/q}(\gamma)$ is 
	$$\Sigma_{f,t} = \{d\in \Sigma_{\gamma}: \ d \mid \frac{nt}{\tau}\}.$$
\end{corollary}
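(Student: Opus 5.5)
The plan is to combine Theorem \ref{thm 6} with the explicit description of $\Sigma_{\gamma}$ in Theorem \ref{thm 1}, and to translate the condition ``the factorization is defined over $\mathbb{F}_{q^{t}}$'' into a divisibility condition on the parameter $s$ indexing the MED representation.

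\textbf{Reduction.} Since $f(X)$ is irreducible of order $n$, we may take $c_{n/q}(\gamma)$ in primitive form. Then $\gcd(\gamma,n)=1$, $n_{\gamma}=n$, and $\tau=\mathrm{ord}_{n}(q)$. The splitting field of $f$ is $\mathbb{F}_{q^{\tau}}$, so $\theta=\tau$, and for $t\mid\tau$ we have $t^{\prime}=t$. By the corollary following Theorem \ref{thm 3}, $\Sigma_{f}=\Sigma_{\gamma}$. By Theorem \ref{thm 6}, $\Sigma_{f,t}$ consists of those $d\in\Sigma_{\gamma}$ whose associated factorization (equivalently, whose MED representation $\psi_{\gamma}(d)$) is defined over $\mathbb{F}_{q^{t}}$.

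\textbf{Identifying the factors.} Write $d=\frac{ns}{\tau}$ with $\omega_{\gamma}\mid s\mid\tau$. By Theorem \ref{thm 1}, $\psi_{\gamma}(d)$ is
$$c_{n/q}(\gamma)=\bigsqcup_{i=0}^{s-1}c_{n/q^{s}}(\gamma q^{i}),$$
so the binomial factors are exactly the polynomials whose defining sets are the cosets $\gamma q^{i}\langle q^{s}\rangle$. Such a factor is defined over $\mathbb{F}_{q^{t}}$ if and only if its defining set is stable under multiplication by $q^{t}$. Since $\gamma q^{i}$ is a unit modulo $n$, this stability is equivalent to $q^{t}\in\langle q^{s}\rangle\subseteq(\mathbb{Z}/n\mathbb{Z})^{\ast}$.

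\textbf{Group-theoretic step.} The element $q$ generates a cyclic subgroup of order $\tau$, and $s\mid\tau$. In a cyclic group of order $\tau$, $\langle q^{s}\rangle$ is the unique subgroup of order $\tau/s$, and $\langle q^{t}\rangle$ has order $\tau/\gcd(t,\tau)$. Hence $q^{t}\in\langle q^{s}\rangle$ if and only if $\langle q^{t}\rangle\subseteq\langle q^{s}\rangle$, which holds if and only if $\frac{\tau}{\gcd(t,\tau)}$ divides $\frac{\tau}{s}$, that is, $s\mid\gcd(t,\tau)$. Because $t\mid\tau$, this is simply $s\mid t$, which is equivalent to $d=\frac{ns}{\tau}\mid\frac{nt}{\tau}$. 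This yields the claimed description of $\Sigma_{f,t}$.

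\textbf{Main obstacle.} The only delicate point is the middle step: checking that ``defined over $\mathbb{F}_{q^{t}}$'' for the binomial factors is faithfully captured by stability of the cosets $c_{n/q^{s}}(\gamma q^{i})$ under $q^{t}$, rather than by the criterion $q^{t}\equiv1\pmod d$ of Theorem \ref{thm 6}. If one prefers to work with that criterion instead, one must show $\mathrm{ord}_{d}(q)=s$ for $d=\frac{ns}{\tau}\in\Sigma_{\gamma}$. That requires the lifting-the-exponent Lemmas \ref{lem 3} and \ref{lem 2} together with the condition $\omega_{\gamma}\mid s$, and the case $8\mid n$ needs care. The coset-stability argument avoids this, so I would use it, and cite Lemma \ref{lem 1} only to confirm consistency.
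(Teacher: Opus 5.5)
Your proof is correct and follows essentially the paper's route. The paper presents this corollary as a rephrasing of Corollary~\ref{coro 1}: by Theorem~\ref{thm 1} and Proposition~\ref{thm 2}, the MED representation with common difference $d=\frac{ns}{\tau}$ is the irreducible factorization of $f(X)$ over $\mathbb{F}_{q^{s}}$, and it is defined over $\mathbb{F}_{q^{t}}$ exactly when $s\mid t$. Your computation that $\gamma q^{i}\langle q^{s}\rangle$ is stable under $q^{t}$ if and only if $s\mid t$ spells out that step explicitly, and your use of Theorem~\ref{thm 6} correctly ties it to the defining condition $q^{t}\equiv 1 \pmod{d}$ without needing the lifting-the-exponent lemmas.
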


\section{The arithmetic Singleton bound on the Hamming distances of simple-root constacyclic codes}\label{sec 4}
\subsection{The definition of arithmetic Singleton bound and basic properties}\label{sec 4.1}
In this section, we assume that $m$ is a positive integer coprime to $q$, and $\lambda$ is a nonzero element in $\mathbb{F}_{q}$. Let
$$\mathcal{C} = (f(X)) \subseteq \mathbb{F}_{q}[X]/(X^{m}-\lambda)$$
be a $\lambda$-constacyclic code of length $m$, with the generator polynomial $f(X) \mid X^{m}-\lambda$. Denote by $\tau$ the degree of $f(X)$, by $n$ the order of $f(X)$, and by $T_{f} \subseteq \mathbb{Z}/n\mathbb{Z}$ the defining set of $f(X)$.

\begin{proposition}\label{prop 2}
	Let
	$$T_{f} = \bigsqcup_{i=1}^{s}E_{i}$$
	be a MED representation of $T_{f}$ with common difference $d$. Then the Hamming distance $d_{H}(\mathcal{C})$ of $\mathcal{C}$ satisfies
	$$d_{H}(\mathcal{C}) \leq \frac{\tau d}{n}+1.$$
\end{proposition}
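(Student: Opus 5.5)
The plan is to exhibit a nonzero codeword of weight at most $\frac{\tau d}{n}+1$. The natural candidate is a polynomial of the form $X^{\frac{n}{d}s'} + c$, or more generally a polynomial in $X^{n/d}$ with few terms, that is divisible by $f(X)$ and has degree less than $m$. First I translate the MED representation into a factorization. By the computation preceding Lemma \ref{lem 1}, each $E_i=\{\gamma_i+jd\}$ gives
$$f_{E_i}(X)=X^{\frac{n}{d}}-\zeta_d^{\gamma_i},$$
so
$$f(X)=\prod_{i=1}^{s}\bigl(X^{\frac{n}{d}}-\zeta_d^{\gamma_i}\bigr)=g\bigl(X^{\frac{n}{d}}\bigr),\qquad g(Y)=\prod_{i=1}^{s}(Y-\zeta_d^{\gamma_i}).$$
Here $s=|T_f|/(n/d)=\frac{\tau d}{n}$, so $\deg g=s$.

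The key point is that $g$ has coefficients in $\mathbb{F}_q$. Since $f(X)=g(X^{n/d})$ and $f\in\mathbb{F}_q[X]$, the coefficients of $g$ are exactly the coefficients of $f$ read off at the exponents that are multiples of $n/d$. Hence $g\in\mathbb{F}_q[Y]$, and in particular $f$ itself has at most $s+1$ nonzero coefficients, namely those of $Y^0,\dots,Y^s$ in $g$. Note that no field-of-definition condition of the type in Lemma \ref{lem 1} or Theorem \ref{thm 6} is needed for the individual factors, because we only use the product.

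To conclude, $f(X)$ is itself a codeword of $\mathcal{C}=(f(X))$. It is nonzero because $\deg f=\tau<m$ when $\mathcal{C}\neq 0$, so $f$ is its own reduced representative modulo $X^m-\lambda$. Its Hamming weight is $\mathrm{wt}_H(f)\le s+1=\frac{\tau d}{n}+1$, which gives the bound $d_H(\mathcal{C})\le \frac{\tau d}{n}+1$. The degenerate case $f=X^m-\lambda$ (the zero code) I would exclude, or treat by convention.

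The main step needing care is checking that $f$ really equals $g(X^{n/d})$ as polynomials. This reduces to verifying that each block $E_i$ has exactly $n/d$ elements and that the telescoping identity $\prod_{j}(X-\zeta_n^{\gamma+jd})=X^{n/d}-\zeta_d^{\gamma}$ holds. This identity is the displayed computation in Section \ref{sec 3.1}, using the compatibility $\zeta_n^{n/d}=\zeta_d$. Everything else is bookkeeping. As a sanity check, $d=n$ gives $s=\tau$ and recovers the Singleton-type bound $\tau+1=m-k+1$.
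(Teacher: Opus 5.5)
Your proof is correct and is essentially the paper's argument. You factor $f(X)=g(X^{n/d})$ with $\deg g=s=\frac{\tau d}{n}$ via the identity $\prod_j(X-\zeta_n^{\gamma_i+jd})=X^{n/d}-\zeta_d^{\gamma_i}$, note that the substitution $X\mapsto X^{n/d}$ preserves the number of nonzero coefficients so $\mathrm{wt}_H(f)\le s+1$, and take $f$ itself as the codeword. Your extra remarks, that $g\in\mathbb{F}_q[Y]$ and that $f$ is nonzero in $\mathbb{F}_q[X]/(X^m-\lambda)$ because $\deg f<m$ unless $\mathcal{C}=0$, are harmless refinements the paper leaves implicit.
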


\begin{proof}
	First since $|T_{f}|=\tau$ and $|E_{i}|=\frac{n}{d}$ for any $1 \leq i \leq s$, then $s = \frac{\tau d}{n}$. Write each $E_{i}$ as
	$$E_{i} = \{\gamma_{i}+jd \ | \ 0 \leq j \leq \frac{n}{d}-1\}$$
	for some $\gamma_{i} \in \mathbb{Z}/n\mathbb{Z}$. Applying the formula before \ref{lem 1}, the polynomial $f(X)$ can be factorized as
	$$f(X) = \prod_{i=1}^{s}(X^{\frac{n}{d}}-\zeta_{n}^{\frac{\gamma_{i}n}{d}})$$
	over $\mathbb{F}_{q}(\zeta_{n})$. Setting
	$$f_{0}(X) = \prod_{i=1}^{s}(X-\zeta_{n}^{\frac{\gamma_{i}n}{d}}),$$
	then clearly $f(X) = f_{0}(X^{\frac{n}{d}})$. The substitution $X\mapsto X^{n/d}$ 
preserves the number of nonzero coefficients. It follows that $f(X)$ and $f_{0}(X)$ have the same Hamming weight. Note that the degree of $f_{0}(X)$ is $s$, therefore
	$$\mathrm{wt}_{H}(f) = \mathrm{wt}_{H}(f_{0}) \leq s+1 = \frac{\tau d}{n}+1.$$
	Finally, as $f(X)$ lies in $\mathcal{C} = (f(X))$, then
	$$d_{H}(\mathcal{C}) \leq \frac{\tau d}{n}+1.$$
\end{proof}

\begin{corollary}\label{coro 4}
	The trivial MED representation of $T_{f}$ gives rise to the Singleton bound.
\end{corollary}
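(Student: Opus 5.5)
The plan is to specialize Proposition \ref{prop 2} to the trivial MED representation and then compare the resulting bound with the Singleton bound for $\mathcal{C}$.

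First we identify the common difference of the trivial representation. The trivial MED representation is
$$T_{f} = \bigsqcup_{\gamma \in T_{f}}\{\gamma\}.$$
Each block $\{\gamma\}$ is an equal-difference set $\{\gamma,\gamma+d,\cdots,\gamma+(\frac{n}{d}-1)d\}$ containing exactly one element, so $\frac{n}{d}=1$ and the common difference is $d=n$. This matches Lemma \ref{lem 9}: $n$ is the maximal element of $\Sigma_{f}$, and under the anti-order-preserving correspondence of Theorem \ref{thm 3} it is sent to the finest representation.

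Substituting $d=n$ into Proposition \ref{prop 2} gives
$$d_{H}(\mathcal{C}) \leq \frac{\tau n}{n}+1 = \tau+1.$$

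It remains to check that $\tau+1$ equals the Singleton bound $m-k+1$ for $\mathcal{C}$. Since $\mathcal{C}=(f(X))$ is an ideal of $\mathbb{F}_{q}[X]/(X^{m}-\lambda)$ and $f(X)\mid X^{m}-\lambda$, the standard fact for constacyclic codes is that the codewords are exactly the classes $a(X)f(X)$ with $\deg a < m-\tau$. Hence $k=\dim\mathcal{C}=m-\tau$, and therefore
$$m-k+1=\tau+1.$$

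So the bound induced by the trivial MED representation coincides with the classical Singleton bound. There is no real obstacle. The only points needing care are reading off $d=n$ correctly from the one-element blocks and citing the dimension formula $k=m-\deg f$ for the constacyclic code generated by a divisor $f(X)$ of $X^{m}-\lambda$.
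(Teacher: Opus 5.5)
Your proposal is correct and follows essentially the same route as the paper: the trivial MED representation has common difference $n$, so Proposition~\ref{prop 2} gives $d_{H}(\mathcal{C}) \leq \tau+1$. The only difference is that you explicitly justify $\tau+1 = m-k+1$ via the dimension formula $k = m-\deg f$, which the paper uses without comment when it identifies $\deg(f)+1$ with the Singleton bound.
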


\begin{proof}
	The trivial MED representation
	$$T_{f} = \bigsqcup_{\gamma \in T_{f}}\{\gamma\}$$
	has common difference $n$, hence by Proposition \ref{prop 2} it gives the upper bound
	$$d_{H}(\mathcal{C}) \leq \tau+1 = \mathrm{deg}(f)+1$$
	for the Hamming distance of $\mathcal{C}$, which is exactly the Singleton bound.
\end{proof}

Proposition \ref{prop 2} yields a family
$$\Gamma_{f} = \{\frac{\tau d}{n} \ | \ d \in \Sigma_{f}\},$$
of upper bounds on the Hamming distance of $\mathcal{C}$ arising from the MED representations of $T_{f}$. Corollary \ref{coro 4} indicates that the weakest bound lying in $\Gamma_{f}$ is exactly the classical Singleton bound. This can be interpreted as saying that the structure of an ideal in a quotient of the polynomial ring on a constacyclic code imposes stronger restriction on the Hamming distance, which can be partially revealed by the family $\Gamma_{f}$ of bounds. In this sense, we introduce the following definition.

\begin{definition}
	The smallest positive integer lying in $\Gamma_{f}$ is called the arithmetic Singleton bound for the Hamming distance of $\mathcal{C}$.
\end{definition}

Based on the results in the last section, the arithmetic Singleton bound can be equivalently characterized as below.

\begin{theorem}\label{thm 4}
	Let $\overline{\Sigma}_{f}$ be defined as in Subsection \ref{sec 3.2}, and let $\sigma_{f} = |\overline{\Sigma}_{f}|$. Then the arithmetic Singleton bound on the Hamming distance of $\mathcal{C} = (f(X))$ is equal to $\frac{\tau}{\sigma_{f}}+1$. 
Hence the arithmetic Singleton bound is the strongest upper bound obtainable from MED representations.
\end{theorem}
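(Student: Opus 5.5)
The plan is to find the smallest value in $\Gamma_{f}$, using the structure of $\Sigma_{f}$ from Section \ref{sec 3.2}. Recall the setup. By Proposition \ref{prop 2}, each $d \in \Sigma_{f}$ gives the bound $d_{H}(\mathcal{C}) \leq \frac{\tau d}{n}+1$. The quantity $\frac{\tau d}{n}$ is the number $s$ of blocks in the MED representation $\psi_{f}(d)$, and it is increasing in $d$. So the arithmetic Singleton bound is $\min_{d \in \Sigma_{f}} \frac{\tau d}{n} + 1$, and the problem reduces to locating the smallest element of $\Sigma_{f}$.

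The first step is to invoke Lemma \ref{lem 9}(2): $\Sigma_{f}$ has a unique minimal element $d_{0}$ with respect to divisibility, and every element of $\Sigma_{f}$ is a multiple of $d_{0}$. Hence $d_{0}$ is also the numerically smallest element, and the minimum of $\frac{\tau d}{n}$ over $\Sigma_{f}$ is attained at $d=d_{0}$.

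The second step is to express $d_{0}$ through $\sigma_{f}$. The argument preceding Lemma \ref{lem 9} gives $\overline{\Sigma}_{f} = d_{0}\mathbb{Z}/n\mathbb{Z}$, with $d_{0} \mid n$. This is the cyclic subgroup generated by $d_{0}$ in $\mathbb{Z}/n\mathbb{Z}$, whose order is $n/d_{0}$, so $\sigma_{f} = n/d_{0}$. Substituting,
$$\frac{\tau d_{0}}{n} + 1 = \frac{\tau}{\sigma_{f}} + 1.$$
As a sanity check, $\sigma_{f}$ divides $\tau$. This holds because $T_{f}$ is a union of $\overline{\Sigma}_{f}$-cosets, so $\frac{\tau}{\sigma_{f}}$ is a positive integer and counts the blocks of the coarsest MED representation (Corollary \ref{coro 3}).

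For the final sentence ("strongest obtainable from MED representations"), I will use Theorem \ref{thm 3}. It says every MED representation of $T_{f}$ is $\psi_{f}(d)$ for some $d \in \Sigma_{f}$, so the bounds coming from MED representations are exactly the elements of $\Gamma_{f}$, each shifted by $+1$. The minimum over $\Gamma_{f}$ computed above is therefore the strongest such bound, attained by the coarsest representation $\psi_{f}(d_{0})$. I expect no real obstacle here. The only point that needs care is the identification $|d_{0}\mathbb{Z}/n\mathbb{Z}| = n/d_{0}$, which requires $d_{0}\mid n$. That divisibility holds by the definition of $\Sigma_{f}$, or equivalently by Lemma \ref{lem 5}(2).
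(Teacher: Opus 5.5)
Your proposal is correct and follows essentially the same route as the paper. You locate the minimum of $\frac{\tau d}{n}$ over $\Sigma_{f}$ at its smallest element $d_{0}$, then use $\overline{\Sigma}_{f}=d_{0}\mathbb{Z}/n\mathbb{Z}$ to get $\sigma_{f}=n/d_{0}$. Your appeal to Theorem \ref{thm 3} to justify the final sentence, and your check that $\sigma_{f}\mid\tau$, are harmless additions that the paper leaves implicit.
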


\begin{proof}
	By definition the arithmetic Singleton bound is $\frac{\tau d_{0}}{n}+1$, where $d_{0}$ is the smallest positive integer in $\Sigma_{f}$. On the other hand, the group $\overline{\Sigma}_{f}$ can be written as
	$$\overline{\Sigma}_{f} = d_{0}\mathbb{Z}/n\mathbb{Z},$$
	which implies that $\sigma_{f} = |\overline{\Sigma}_{f}| = \frac{n}{d_{0}}$. It follows that the arithmetic Singleton bound is given by
	$$\frac{\tau d_{0}}{n}+1 = \frac{\tau}{n}\cdot \frac{n}{\sigma_{f}}+1 = \frac{\tau}{\sigma_{f}}+1.$$
\end{proof}

\begin{corollary}\label{coro 7}
	Denote by $b_{\mathrm{S}}(\mathcal{C})$ and $b_{\mathrm{AS}}(\mathcal{C})$ the Singleton bound and the arithmetic Singleton bound for $\mathcal{C}$ respectively. Then the three statements below are equivalent:
	\begin{description}
		\item[(\romannumeral1)] $b_{\mathrm{S}}(\mathcal{C}) = b_{\mathrm{AS}}(\mathcal{C})$;
		\item[(\romannumeral2)] the group $\overline{\Sigma}_{f}$ is trivial;
		\item[(\romannumeral3)] the defining set $T_{f}$ only admits the trivial MED representation.
	\end{description}
\end{corollary}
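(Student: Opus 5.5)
The plan is to prove the cycle (\romannumeral1)$\Leftrightarrow$(\romannumeral2) and (\romannumeral2)$\Leftrightarrow$(\romannumeral3), relying on Theorem \ref{thm 4} for the first equivalence and on Theorem \ref{thm 3} together with Lemma \ref{lem 9} for the second.

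For (\romannumeral1)$\Leftrightarrow$(\romannumeral2), I will use that $b_{\mathrm{S}}(\mathcal{C}) = \tau+1$ (Corollary \ref{coro 4}) and $b_{\mathrm{AS}}(\mathcal{C}) = \frac{\tau}{\sigma_{f}}+1$ (Theorem \ref{thm 4}). The two bounds agree precisely when $\frac{\tau}{\sigma_{f}} = \tau$. Since $f$ is nonconstant we have $\tau \geq 1$, so this holds exactly when $\sigma_{f} = |\overline{\Sigma}_{f}| = 1$, i.e. when $\overline{\Sigma}_{f}$ is the trivial group $\{0\}$.

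For (\romannumeral2)$\Leftrightarrow$(\romannumeral3), I will invoke the bijection $\psi_{f}:\Sigma_{f}\to\mathcal{MER}(T_{f})$ of Theorem \ref{thm 3}. By Lemma \ref{lem 9}, $\Sigma_{f}$ has maximal element $n$ and minimal element $d_{0}$, and $\overline{\Sigma}_{f} = d_{0}\mathbb{Z}/n\mathbb{Z}$. Hence $\overline{\Sigma}_{f}$ is trivial if and only if $d_{0} = n$, which is equivalent to $\Sigma_{f} = \{n\}$. Since $\psi_{f}$ is a bijection, this holds if and only if $\mathcal{MER}(T_{f})$ has exactly one element. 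That element must be $\psi_{f}(n)$, which is the trivial MED representation, because the common difference $n$ gives singleton cosets. Conversely, if the only MED representation is the trivial one, then $\mathcal{MER}(T_{f})$ is a singleton, so $\Sigma_{f}=\{n\}$.

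The one point needing care is the identification that the trivial representation is exactly the one with common difference $n$. Equal-difference sets of common difference $n$ have size $n/n = 1$, and by assertion (\romannumeral2) in the proof of Theorem \ref{thm 3} the representation with a given common difference is unique. Beyond this, the argument is routine bookkeeping, and the corollary immediately preceding Theorem \ref{thm 4} (nontrivial MED representation exists iff $\overline{\Sigma}_{f}$ is nontrivial) already gives (\romannumeral2)$\Leftrightarrow$(\romannumeral3) directly, so I will cite it as an alternative.
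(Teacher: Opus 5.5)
Your proposal is correct and follows essentially the paper's own argument: it reduces (i) to $\sigma_f=1$ via Theorem \ref{thm 4} and Corollary \ref{coro 4}, and links (ii) and (iii) through $\Sigma_f=\{n\}$ and the bijection of Theorem \ref{thm 3}. The only difference is presentational. The paper runs the cycle (i)$\Rightarrow$(ii)$\Rightarrow$(iii)$\Rightarrow$(i) and closes with Corollary \ref{coro 4} directly, whereas you prove two biconditionals and also spell out two points the paper leaves implicit: that $\tau\ge 1$, and that the trivial representation is exactly $\psi_f(n)$.
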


\begin{proof}
	\textbf{(\romannumeral1)} $\Rightarrow$ \textbf{(\romannumeral2)}: By Theorem \ref{thm 4} the arithmetic Singleton bound is $b_{\mathrm{S}}(\mathcal{C}) = \frac{\tau}{\sigma_{f}}+1$. Then $b_{\mathrm{S}}(\mathcal{C}) = b_{\mathrm{AS}}(\mathcal{C})$ amounts to $\sigma_{f}=1$, which implies that the group $\overline{\Sigma}_{f}$ is trivial.
	
	\textbf{(\romannumeral2)} $\Rightarrow$ \textbf{(\romannumeral3)}: Since $\Sigma_{f}$ is a nonempty subset of $\overline{\Sigma}_{f}$, then one has
	$$\Sigma_{f} = \overline{\Sigma}_{f} = \{n\}.$$
	
	\textbf{(\romannumeral3)} $\Rightarrow$ \textbf{(\romannumeral1)}: This is an immediate consequence of Corollary \ref{coro 4} and the definition of arithmetic Singleton bound.
\end{proof}

\subsection{The arithmetic Singleton bound for irreducible constacyclic codes}\label{sec 4.2}
Let $m$ be a positive integer, and $\lambda$ be a nonzero element in $\mathbb{F}_{q}$. Let 
$$\mathcal{C} = (f(X)) \subseteq \mathbb{F}_{q}[X]/(X^{m}-\lambda)$$
be a $\lambda$-constacyclic code of length $m$ over $\mathbb{F}_{q}$, with the generator polynomial $f(X)$. We say that $\mathcal{C}$ is an irreducible constacyclic code if $f(X)$ is irreducible. In this subsection we concentrate on this case. We derive an explicit formula for the arithmetic Singleton bound of $\mathcal{C}$, and compare it with the classical Singleton bound. In particular, we show that the arithmetic Singleton bound for $\mathcal{C}$ is completely determined by $q$ and the order of $f(X)$.

Assume that $n$ is the order of $f(X)$, with the factorization into prime powers given by
$$n = p_{1}^{e_{1}}\cdots p_{r}^{e_{r}},$$
where $p_{1},\cdots,p_{r}$ are distinct prime numbers different from $p$ and $e_{1},\cdots,e_{r}$ are positive integers. Set 
\begin{equation*}
	\omega = \left\{
	\begin{array}{lcl}
		2\mathrm{ord}_{p_{1}\cdots p_{r}}(q), \quad \mathrm{if} \ q^{\mathrm{ord}_{p_{1}\cdots p_{r}}(q)} \equiv 3 \pmod{4} \ \mathrm{and} \ 8 \mid n;\\
		\mathrm{ord}_{p_{1}\cdots p_{r}}(q), \quad \mathrm{otherwise}.
	\end{array} \right.
\end{equation*} 
The Singleton bound and the arithmetic Singleton bound are given concretely by the following theorem.

\begin{theorem}\label{thm 5}
	Let the notations be defined as above. Then
	\begin{description}
		\item[(1)] the Singleton bound for the $\lambda$-constacyclic code $\mathcal{C}$ of length $m$ is
		$$b_{\mathrm{S}}(\mathcal{C}) = \omega\cdot p_{1}^{m_{1}}\cdots p_{r}^{m_{r}}+1,$$
		where $m_{i} = \mathrm{max}\{e_{i}-v_{p_{i}}(q^{\omega}-1),0\}$ for all $1 \leq i \leq r$; and
		\item[(2)] the arithmetic Singleton bound for $\mathcal{C}$ is
		$$b_{\mathrm{AS}}(\mathcal{C}) = \omega+1.$$
	\end{description}
\end{theorem}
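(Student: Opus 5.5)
Since $f(X)$ is irreducible of order $n$, its defining set is a cyclotomic coset $c_{n/q}(\gamma)$ with $\gcd(\gamma,n)=1$. Hence $n_{\gamma}=n$, $\omega_{\gamma}=\omega$, and $\tau=\deg f=|c_{n/q}(\gamma)|=\mathrm{ord}_{n}(q)$. The Singleton bound is $\tau+1$, so part (1) reduces to the identity
$$\mathrm{ord}_{n}(q)=\omega\cdot p_{1}^{m_{1}}\cdots p_{r}^{m_{r}}.$$
Part (2) will follow from Theorem \ref{thm 4} together with the corollary identifying the additive stabilizer of $c_{n/q}(\gamma)$ as $\omega_{\gamma}\mathbb{Z}/n\mathbb{Z}$ (equivalently, Lemma \ref{lem 4} and Theorem \ref{thm 1}).

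For (2), the unique minimal element of $\Sigma_{\gamma}$ is $d_{0}=\frac{n\omega}{\tau}$, since $t=\omega$ is the smallest admissible $t$ in $\Sigma_{\gamma}$. Thus $\sigma_{f}=n/d_{0}=\tau/\omega$. Theorem \ref{thm 4} then gives $b_{\mathrm{AS}}(\mathcal{C})=\tau/\sigma_{f}+1=\omega+1$. This is immediate once we trust the paper's description of $\Sigma_{\gamma}$. A small check is needed here: the corollary says the stabilizer is $\omega_{\gamma}\mathbb{Z}/n\mathbb{Z}$, but what the computation actually uses is that the smallest common difference is $n\omega/\tau$, and I would use that formulation.

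For (1), I compute $\mathrm{ord}_{n}(q)$ in two stages. First, $\omega\mid\tau$: reducing $q^{\tau}\equiv1 \pmod n$ modulo $\mathrm{rad}(n)$ gives $\mathrm{ord}_{\mathrm{rad}(n)}(q)\mid\tau$. In the exceptional case $8\mid n$ with $q^{\mathrm{ord}_{\mathrm{rad}(n)}(q)}\equiv3\pmod4$, the order $\tau$ must also be even, since an odd power of a number $\equiv 3\pmod 4$ is not $1 \pmod 4$. Second, set $Q=q^{\omega}$; then $\tau=\omega\cdot\mathrm{ord}_{n}(Q)$. By construction $Q\equiv1\pmod{p_{i}}$ for every $i$, and $Q\equiv1\pmod4$ whenever $8\mid n$. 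Then $\mathrm{ord}_{n}(Q)=\mathrm{lcm}_{i}\,\mathrm{ord}_{p_{i}^{e_{i}}}(Q)$, and I claim $\mathrm{ord}_{p_{i}^{e_{i}}}(Q)=p_{i}^{m_{i}}$.

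For odd $p_{i}$, Lemma \ref{lem 3} gives $v_{p_{i}}(Q^{k}-1)=v_{p_{i}}(Q-1)+v_{p_{i}}(k)$. So $p_{i}^{e_{i}}\mid Q^{k}-1$ exactly when $v_{p_{i}}(k)\ge m_{i}$, and the order, which must be a power of $p_{i}$ because it divides $p_{i}^{e_{i}-1}(p_{i}-1)$ and is coprime to $p_{i}-1$ after the reduction, equals $p_{i}^{m_{i}}$. More simply, the order is the minimal $k$ with $v_{p_{i}}(k)\ge m_{i}$, namely $p_{i}^{m_{i}}$.

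For $p_{i}=2$ there are cases, and this is the step I expect to be the main obstacle. If $Q\equiv1\pmod4$, Lemma \ref{lem 2}(1) gives the same formula. If $Q\equiv3\pmod4$, then necessarily $e_{i}\le2$ because $8\nmid n$. When $e_{i}=1$ the order is $1=2^{0}$, and indeed $m_{i}=0$. When $e_{i}=2$ the order is $2$, while $m_{i}=2-v_{2}(Q-1)=2-1=1$, giving $2^{1}$, which is consistent.

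Since the $p_{i}^{m_{i}}$ are pairwise coprime, their lcm is their product. Hence $\tau=\omega\prod_{i}p_{i}^{m_{i}}$, and $b_{\mathrm{S}}(\mathcal{C})=\tau+1$ gives (1).
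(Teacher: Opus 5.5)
Your proof is correct and follows the paper's route. Part (2) comes from the coarsest MED representation (Lemma \ref{lem 4}), whose minimal common difference is $d_0=n\omega/\tau$, fed into Theorem \ref{thm 4}; part (1) comes from computing $\tau=\mathrm{ord}_n(q)=\omega\prod_i p_i^{m_i}$ by lifting the exponent, which you carry out in more detail than the paper (including $\omega\mid\tau$ and the $p_i=2$, $q^{\omega}\equiv 3 \pmod 4$ cases). You are also right to work with $d_0=n\omega/\tau$ rather than the stabilizer $\omega_\gamma\mathbb{Z}/n\mathbb{Z}$ stated in the paper's corollary, which as written is inaccurate.
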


\begin{proof}
	Since $f(X)$ is an irreducible polynomial of order $n$, its defining set is a $q$-cyclotomic coset modulo $n$, say,
	$$c_{n/q}(\gamma) = \{\gamma,\gamma q,\cdots,\gamma q^{\tau-1}\}$$
	for some $\gamma \in \mathbb{Z}/n\mathbb{Z}$ that is coprime to $n$.
	
	We first compute the arithmetic Singleton bound for $\mathcal{C}$. By Lemma \ref{lem 4} 
	$$c_{n/q}(\gamma) = \bigsqcup_{i=0}^{\omega-1}c_{n/q^{\omega}}(\gamma q^{i})$$
	is the coarsest MED representation of $c_{n/q}(\gamma)$, whose common difference is 
	$$d = \dfrac{n}{\frac{\tau}{\omega}} = \dfrac{n\omega}{\tau}.$$
	It follows from Theorem \ref{thm 4} that the arithmetic Singleton bound for $\mathcal{C}$ is given by
	$$b_{\mathrm{AS}}(\mathcal{C}) = \dfrac{\tau d}{n}+1 = \omega+1.$$
	
	Applying the lifting-the-exponent lemma, the degree of $f(X)$ is equal to
	$$\tau = \mathrm{ord}_{n}(q) = \omega\cdot p_{1}^{m_{1}}\cdots p_{r}^{m_{r}},$$
	where $m_{i} = \mathrm{max}\{e_{i}-v_{p_{i}}(q^{\omega}-1),0\}$ for each $1 \leq i \leq r$. Therefore the Singleton bound for $\mathcal{C}$ is given by
	$$b_{\mathrm{S}}(\mathcal{C}) = \omega\cdot p_{1}^{m_{1}}\cdots p_{r}^{m_{r}}+1.$$
\end{proof}

The criterion for the arithmetic Singleton bound and the classical Singleton bound coinciding is an immediate consequence of Theorem \ref{thm 5}.

\begin{corollary}\label{coro 2}
	The Singleton bound and the arithmetic Singleton bound for $\mathcal{C}$ coincide if and only if 
	$$e_{i} \leq v_{p_{i}}(q^{\omega}-1)$$
	for all $1 \leq i \leq r$.
\end{corollary}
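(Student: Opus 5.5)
By Theorem \ref{thm 5}, the two bounds are
$$b_{\mathrm{S}}(\mathcal{C}) = \omega\cdot p_{1}^{m_{1}}\cdots p_{r}^{m_{r}}+1, \qquad b_{\mathrm{AS}}(\mathcal{C}) = \omega+1,$$
with $m_{i} = \max\{e_{i}-v_{p_{i}}(q^{\omega}-1),0\}$. The plan is to compare these two explicit formulas directly.

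First, $b_{\mathrm{S}}(\mathcal{C}) = b_{\mathrm{AS}}(\mathcal{C})$ holds if and only if $p_{1}^{m_{1}}\cdots p_{r}^{m_{r}} = 1$. This uses only that $\omega \geq 1$, which is clear since $\omega$ is a multiplicative order (possibly doubled).

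Next, each $m_{i}$ is a nonnegative integer and each $p_{i} \geq 2$. So the product $p_{1}^{m_{1}}\cdots p_{r}^{m_{r}}$ equals $1$ exactly when every $m_{i}=0$. By the definition of $m_{i}$ as a maximum with $0$, we have $m_{i}=0$ if and only if $e_{i}-v_{p_{i}}(q^{\omega}-1)\leq 0$, that is, $e_{i}\leq v_{p_{i}}(q^{\omega}-1)$. Combining the two steps over all $1\leq i\leq r$ gives the stated criterion.

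There is no serious obstacle, since all the substantive work (the lifting-the-exponent computation of $\tau=\mathrm{ord}_{n}(q)$ and the identification of the coarsest MED representation via Lemma \ref{lem 4}) is already contained in Theorem \ref{thm 5}. The one point to check is that $v_{p_{i}}(q^{\omega}-1)$ is finite and at least $1$. This holds because $p_{i}\mid q^{\omega}-1$ by the definition of $\omega$ and $q^{\omega}-1\neq 0$, so the $m_{i}$ are well-defined nonnegative integers.
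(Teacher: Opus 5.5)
Your proposal is correct and follows the paper's own route: the paper treats this corollary as an immediate consequence of Theorem \ref{thm 5}. Your comparison of the two explicit formulas, reducing $b_{\mathrm{S}}(\mathcal{C})=b_{\mathrm{AS}}(\mathcal{C})$ to $p_{1}^{m_{1}}\cdots p_{r}^{m_{r}}=1$ and hence to $m_{i}=0$ for all $i$, is exactly that argument written out in detail.
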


On the other hand, also by Theorem \ref{thm 5} the difference of the arithmetic Singleton bound and the Singleton bound for $\mathcal{C}$ is
$$b_{\mathrm{S}}(\mathcal{C})-b_{\mathrm{AS}}(\mathcal{C}) = \omega(p_{1}^{m_{1}}\cdots p_{r}^{m_{r}}-1),$$
where $m_{i} = \mathrm{max}\{e_{i}-v_{p_{i}}(q^{\omega}-1),0\}$ for each $1 \leq i \leq r$. Notice that $v_{p_{i}}(q^{\omega}-1)$ is determined by $\mathrm{ord}_{\mathrm{rad}(n)}(q)$ up to a constant factor which is either $1$ or $2$, and thus is bounded when the $e_{i}$'s vary. Hence the difference of the two bounds grows dramatically as the $e_{i}$'s increase. Precisely, it can be formulated by the following asymptotic result.

\begin{corollary}
	Let $\ell$ be any prime divisor of $n$. Then we have
	$$\lim\limits_{v_{\ell}(n) \rightarrow \infty}(b_{\mathrm{S}}(\mathcal{C})-b_{\mathrm{AS}}(\mathcal{C})) = \infty.$$
\end{corollary}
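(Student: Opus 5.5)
The plan is to compute the difference explicitly via Theorem \ref{thm 5} and then examine how each factor behaves as $v_{\ell}(n)\to\infty$ while the set of prime divisors of $n$ (and hence $\mathrm{rad}(n)$) stays fixed. Write $\ell = p_{i_0}$ and $e_{i_0} = v_{\ell}(n)$. By Theorem \ref{thm 5},
$$b_{\mathrm{S}}(\mathcal{C})-b_{\mathrm{AS}}(\mathcal{C}) = \omega\,(p_{1}^{m_{1}}\cdots p_{r}^{m_{r}}-1) \geq p_{1}^{m_{1}}\cdots p_{r}^{m_{r}}-1 \geq \ell^{m_{i_0}}-1,$$
where we use $\omega\geq 1$ and $m_i\geq 0$ for every $i$. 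So the task is to show that $m_{i_0}=\max\{e_{i_0}-v_{\ell}(q^{\omega}-1),0\}$ tends to infinity.

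The key step is a uniform bound on $v_{\ell}(q^{\omega}-1)$. The integer $\omega$ depends only on $\mathrm{rad}(n)$, on $q$, and on whether $8\mid n$. Once $e_{i_0}\geq 3$ (relevant when $\ell=2$), the condition $8\mid n$ is eventually constant, so $\omega$ takes at most two values, namely $\mathrm{ord}_{\mathrm{rad}(n)}(q)$ or twice it. Moreover $q^{\omega}\neq 1$ because $q\geq 2$, so $v_{\ell}(q^{\omega}-1)$ is a finite number $c$ that does not depend on $e_{i_0}$.

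If the remaining exponents $e_i$ with $i\neq i_0$ are also allowed to vary, nothing changes: $\omega$ still does not depend on them, since it only involves $\mathrm{rad}(n)$ and the condition $8\mid n$. We therefore get $m_{i_0}\geq e_{i_0}-c\to\infty$, hence $\ell^{m_{i_0}}-1\to\infty$, and the limit follows.

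The only point needing care is interpreting the limit: the prime support of $n$ must be held fixed. If new primes were allowed to appear, $\omega$ itself would change, so it should be stated that the prime divisors of $n$ are fixed, as in the preceding remark. I would also note that the underlying code changes with $n$, so the statement is really about the family of codes indexed by $n$. Lemmas \ref{lem 3} and \ref{lem 2} could be used to make $c$ explicit, but that is not needed here.
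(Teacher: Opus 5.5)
Your proof is correct and follows the paper's own (brief) argument. The paper likewise reads off $b_{\mathrm{S}}(\mathcal{C})-b_{\mathrm{AS}}(\mathcal{C})=\omega(p_1^{m_1}\cdots p_r^{m_r}-1)$ from Theorem \ref{thm 5} and notes that $v_{p_i}(q^{\omega}-1)$ stays bounded, because $\omega$ is $\mathrm{ord}_{\mathrm{rad}(n)}(q)$ or twice it with the prime divisors of $n$ implicitly fixed, a hypothesis you rightly make explicit.

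One small imprecision: when $\ell$ is odd and $v_2(n)$ varies, $\omega$ does depend on $v_2(n)$ through the condition $8\mid n$. This does no harm, since $\omega\in\{o,2o\}$ with $o=\mathrm{ord}_{\mathrm{rad}(n)}(q)$ gives $v_\ell(q^{\omega}-1)\le v_\ell(q^{2o}-1)$, which is still a uniform bound.
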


We compute the arithmetic Singleton bound and compare it with the classical Singleton bound in a concrete example.

\begin{example}
	Let $q=7$ and $m=225 = 3^{2}\times5^{2}$. Let $\mathcal{C}$ be an irreducible cyclic code of length $m$ over $\mathbb{F}_{q}$, with the generator polynomial $f(X) \mid X^{m}-1$. Since
	$$X^{m}-1 = X^{225}-1 = \prod_{0\leq i,j \leq 2}\Phi_{3^{i}5^{j}}(X),$$
	where $\Phi_{3^{i}5^{j}}(X)$ denotes the $3^{i}5^{j}$-th cyclotomic polynomial, then $f(X)$ is a factor of $\Phi_{3^{i}5^{j}}(X)$ for some $0 \leq i,j \leq 2$. It is clear that the order of $f(X)$ is $3^{i}5^{j}$. Then
	\begin{equation*}
		\omega = \left\{
		\begin{array}{lcl}
			1, \quad \mathrm{if} \ j=0;\\
			4, \quad \mathrm{if} \ j=1,2,
		\end{array} \right.
	\end{equation*} 
	which implies that the arithmetic Singleton bound of $\mathcal{C}$ is
	\begin{equation*}
		b_{\mathrm{AS}}(\mathcal{C}) = \left\{
		\begin{array}{lcl}
			2, \quad \mathrm{if} \ j=0;\\
			5, \quad \mathrm{if} \ j=1,2.
		\end{array} \right.
	\end{equation*} 
	On the other hand, as
	$$v_{3}(q^{\omega}-1)=1$$
	and
	\begin{equation*}
		v_{5}(q^{\omega}-1) = \left\{
		\begin{array}{lcl}
			0, \quad \mathrm{if} \ \omega=1;\\
			2, \quad \mathrm{if} \ \omega=4,
		\end{array} \right.
	\end{equation*} 
	then the Singleton bound of $\mathcal{C}$ is 
	\begin{equation*}
		b_{\mathrm{S}}(\mathcal{C}) = \left\{
		\begin{array}{lcl}
			2, \quad \mathrm{if} \ i=0,1 \ \mathrm{and} \ j=0;\\
			4, \quad \mathrm{if} \ i=2 \ \mathrm{and} \ j=0;\\
			5, \quad \mathrm{if} \ i=0,1 \ \mathrm{and} \ j=1,2;\\
			13, \quad \mathrm{if} \ i=2 \ \mathrm{and} \ j=1,2.
		\end{array} \right.
	\end{equation*} 
We summarize these results into Table $1$.

\newpage

\begin{table}[hp]
\begin{center}
\begin{tabular}{c|c|c|c}
	\toprule
	$f(X)$ & $b_{\mathrm{S}}(\mathcal{C})$ & $b_{\mathrm{AS}}(\mathcal{C})$ & Do $b_{\mathrm{S}}(\mathcal{C})$ and $b_{\mathrm{AS}}(\mathcal{C})$ coincide?\\
				\midrule
				a factor of $\Phi_{1}(X)$ & $2$ & $2$ & yes\\
				a factor of $\Phi_{3}(X)$ & $2$ & $2$ & yes\\
				a factor of $\Phi_{9}(X)$ & $4$ & $2$ & no\\
				a factor of $\Phi_{5}(X)$ & $5$ & $5$ & yes\\
				a factor of $\Phi_{15}(X)$ & $5$ & $5$ & yes\\
				a factor of $\Phi_{45}(X)$ & $13$ & $5$ & no\\
				a factor of $\Phi_{25}(X)$ & $5$ & $5$ & yes\\
				a factor of $\Phi_{75}(X)$ & $5$ & $5$ & yes\\
				a factor of $\Phi_{225}(X)$ & $13$ & $5$ & no\\
				\bottomrule
\end{tabular}
\end{center}
\caption{A comparison of Singleton bounds and arithmetic Singleton bounds in some examples}
\end{table}
\end{example}

Finally we remark that in the case that $f(X)$ is irreducible over $\mathbb{F}_{q}$ a stronger conclusion can be obtained. Let $d_{0} = \frac{n\omega}{\tau}$ be the smallest integer in $\Sigma_{\gamma}$. By the proof of Theorem \ref{thm 5} we have
$$d_{0} = p_{1}^{v_{1}}\cdots p_{r}^{v_{r}},$$
where $v_{i} = \mathrm{min}\{e_{i},v_{p_{i}}(q^{\omega}-1)\}$ for all $1 \leq i \leq r$. Note that $1 \leq v_{i} \leq v_{p_{i}}(q^{\omega}-1)$, and if further $p_{i} = 2$ and $e_{i} \geq 3$ then $2 \leq v_{i} \leq v_{p_{i}}(q^{\omega}-1)$. Then by definition $\omega$ is the smallest positive integer such that 
$$\gamma q^{\omega} \equiv \gamma \pmod{d_{0}}.$$
It follows that the $q$-cyclotomic coset $c_{d_{0}/q}(\gamma)$ modulo $d_{0}$ is
$$c_{d_{0}/q}(\gamma) = \{\gamma,\gamma q,\cdots,\gamma q^{\omega-1}\},$$
which induces a polynomial 
$$f_{0}(X) = \prod_{i=0}^{\omega-1}(X-\zeta_{d_{0}}^{\gamma q^{i}})$$
over $\mathbb{F}_{q}$.

\begin{proposition}\label{prop 4}
	The polynomial $f(X)$ has the same Hamming weight as $f_{0}(X)$.
\end{proposition}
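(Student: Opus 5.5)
We will show that $f(X) = f_{0}(X^{\frac{n}{d_{0}}})$. The claim then follows at once, because the substitution $X \mapsto X^{n/d_{0}}$ only spreads the exponents apart and never merges or cancels monomials, so it preserves the number of nonzero coefficients. This is the same mechanism as in the proof of Proposition \ref{prop 2}. The only new point is that $f_{0}(X)$ is now identified explicitly as a polynomial over $\mathbb{F}_{q}$ attached to the cyclotomic coset $c_{d_{0}/q}(\gamma)$.

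The first step is to write down the coarsest MED representation from Lemma \ref{lem 4}:
$$c_{n/q}(\gamma) = \bigsqcup_{i=0}^{\omega-1}c_{n/q^{\omega}}(\gamma q^{i}).$$
Its common difference is $d_{0} = \frac{n\omega}{\tau}$, so each block has $\frac{n}{d_{0}} = \frac{\tau}{\omega}$ elements. A block is a coset of $d_{0}\mathbb{Z}/n\mathbb{Z}$ of that size, hence it is the full equal-difference set
$$\{\gamma q^{i} + j d_{0} \ | \ 0 \le j \le \tfrac{n}{d_{0}}-1\}.$$
By the computation preceding Lemma \ref{lem 1}, the block contributes the binomial factor
$$\prod_{j}(X - \zeta_{n}^{\gamma q^{i} + j d_{0}}) = X^{\frac{n}{d_{0}}} - \zeta_{d_{0}}^{\gamma q^{i}}.$$
Taking the product over $i$ gives
$$f(X) = \prod_{i=0}^{\omega-1}\bigl(X^{\frac{n}{d_{0}}} - \zeta_{d_{0}}^{\gamma q^{i}}\bigr) = f_{0}(X^{\frac{n}{d_{0}}}).$$
Here we use the compatibility $\zeta_{n}^{n/d_{0}} = \zeta_{d_{0}}$ of the fixed system of roots of unity.

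The one point that needs care, and which I expect to be the main obstacle, is that the constants $\zeta_{d_{0}}^{\gamma q^{i}}$ for $0 \le i \le \omega-1$ are pairwise distinct and form exactly the root set of $f_{0}$. This amounts to the statement, made just before the proposition, that $\omega$ is the least positive integer with $\gamma q^{\omega} \equiv \gamma \pmod{d_{0}}$, i.e.\ $|c_{d_{0}/q}(\gamma)| = \omega$. I would verify it directly as follows. First, $\gamma$ is coprime to $n$, hence to $d_{0}$, so the condition is equivalent to $q^{\omega} \equiv 1 \pmod{d_{0}}$. Next, $\mathrm{rad}(d_{0}) = \mathrm{rad}(n)$ because every exponent satisfies $v_{i} \ge 1$. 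The order $\mathrm{ord}_{d_{0}}(q)$ is a multiple of $\mathrm{ord}_{\mathrm{rad}(n)}(q)$ and divides $\omega$ (by the definition of the $v_{i}$ as $\min\{e_{i}, v_{p_{i}}(q^{\omega}-1)\}$), so it can only fall short of $\omega$ in the exceptional $2$-adic case. In that case $8 \mid n$ and $q^{\mathrm{ord}_{\mathrm{rad}(n)}(q)} \equiv 3 \pmod 4$, and $v_{2}(d_{0}) \ge 2$; then $4 \mid d_{0}$ rules out the order being $\mathrm{ord}_{\mathrm{rad}(n)}(q)$, by Lemma \ref{lem 2}. Hence the order is exactly $\omega$.

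Alternatively, distinctness can be seen without this computation. If two of the constants coincided, then two distinct binomial factors of $f$ would share a root. That would contradict the squarefreeness of $f$, equivalently the disjointness of the blocks. Finally, $f_{0}(X)$ is over $\mathbb{F}_{q}$, since its root set is a $q$-cyclotomic coset; this is not even needed for the weight count. We conclude that $\mathrm{wt}_{H}(f) = \mathrm{wt}_{H}(f_{0})$.
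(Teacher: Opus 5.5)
Your proof is correct and follows the same route as the paper. The coarsest MED representation from Lemma \ref{lem 4} gives $f(X)=\prod_{i=0}^{\omega-1}\bigl(X^{n/d_{0}}-\zeta_{d_{0}}^{\gamma q^{i}}\bigr)=f_{0}(X^{n/d_{0}})$, and the substitution $X\mapsto X^{n/d_{0}}$ preserves the number of nonzero coefficients. Your check that $\mathrm{ord}_{d_{0}}(q)=\omega$ is sound; the paper only asserts it before the proposition. It is not actually needed for the weight equality, because $f_{0}$ is defined as exactly that product, so $f=f_{0}(X^{n/d_{0}})$ holds whether or not the constants are distinct.
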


\begin{proof}
	The coarsest MED representation of $c_{n/q}(\gamma)$ is
	$$c_{n/q}(\gamma) = \bigsqcup_{i=0}^{\omega-1}c_{n/q^{\omega}}(\gamma q^{i}),$$
	which induces the factorization
	$$f(X) = \prod_{i=0}^{\omega-1}(X^{\frac{n}{d_{0}}}-\zeta_{d_{0}}^{\gamma q^{i}})$$
	of $f(X)$. Then one has $f(X) = f_{0}(X^{\frac{n}{d_{0}}})$ and consequently
	$$\mathrm{wt}_{H}(f) = \mathrm{wt}_{H}(f_{0}).$$
\end{proof}

As $d_{0} = p_{1}^{v_{1}}\cdots p_{r}^{v_{r}}$, where $v_{i} = \mathrm{min}\{e_{i},v_{p_{i}}(q^{\omega}-1)\}$ for any $1 \leq i \leq r$, Proposition \ref{prop 4} indicates that in order to obtain the Hamming weights of irreducible polynomials of order $p_{1}^{e_{1}}\cdots p_{r}^{e_{r}}$ for all $e_{1},\cdots,e_{r} \geq 1$, one only needs to compute the Hamming weights of a finite number of polynomials.

\section{The comparison of the arithmetic Singleton bound and the BCH bound}\label{sec 5}
In this section we compare the arithmetic Singleton bound and the BCH bound for simple-root cyclic codes over $\mathbb{F}_{q}$. In particular, an equivalent condition on that the arithmetic Singleton bound coincides with the BCH bound, in which case the Hamming distance can be obtained directly, is presented. As corollaries, an equivalent condition for an irreducible cyclic code to have Hamming distance $2$, and a sufficient condition for it to have Hamming distance $3$ are given.

Let 
$$\mathcal{C} = (f(X)) \subseteq \mathbb{F}_{q}[X]/(X^{n}-1)$$
be a cyclic code of length $n$ over $\mathbb{F}_{q}$, where $\mathrm{gcd}(n,q)=1$ and $f(X) \mid X^{n}-1$ is the generator polynomial of $\mathcal{C}$. Without loss of generality we assume that $f(X) \neq X^{n}-1$. And note that if $\mathrm{ord}(f) = r < n$ then 
$$X^{r}-1 \in (f(X)) = \mathcal{C}$$
and consequently $d_{H}(\mathcal{C}) = 2$. Therefore we also assume that $\mathrm{ord}(f) = n$ in the remainder of this section.

Let $\overline{\Sigma}_{f} = d\mathbb{Z}/n\mathbb{Z}$ be the additive stabilizer of the defining set $T_{f}$ of $f(X)$, and let
$$T_{f} = \bigsqcup_{i=1}^{s}E_{i} = \bigsqcup_{i=1}^{s}(\gamma_{i}+d\mathbb{Z}/n\mathbb{Z})$$
be the coarsest MED representation of $T_{f}$. We denote by
$$\widetilde{T}_{f} = \{\gamma_{1},\cdots,\gamma_{s}\} \pmod{d} \subseteq \mathbb{Z}/d\mathbb{Z}$$
the image of the set $\{\gamma_{1},\cdots,\gamma_{s}\}$ in the residue class ring $\mathbb{Z}/d\mathbb{Z}$. Clearly this set $\widetilde{T}_{f}$ is independent of the choice of the representatives $\gamma_{i}$ of $E_{i}$ for $1 \leq i \leq s$.

\begin{lemma}\label{lem 6}
	For a positive integer $\delta \leq d-1$, $T_{f}$ contains $\delta$ consecutive integers if and only if $\widetilde{T}_{f}$ does.
\end{lemma}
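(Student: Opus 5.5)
The plan is to establish the key structural fact first: $T_{f}$ is precisely the full preimage of $\widetilde{T}_{f}$ under the reduction map $\pi: \mathbb{Z}/n\mathbb{Z} \rightarrow \mathbb{Z}/d\mathbb{Z}$. Here $d \mid n$ is the minimal element of $\Sigma_{f}$, so $\pi$ is well defined. Each $E_{i} = \gamma_{i}+d\mathbb{Z}/n\mathbb{Z}$ is exactly the fibre $\pi^{-1}(\pi(\gamma_{i}))$. Hence
$$T_{f} = \bigsqcup_{i=1}^{s}E_{i} = \pi^{-1}(\widetilde{T}_{f}),$$
that is, for every integer $x$ we have $x \in T_{f}$ if and only if $x \bmod d \in \widetilde{T}_{f}$. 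With this in hand, both directions of the lemma reduce to comparing runs of consecutive residues.

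For the forward direction, suppose $\{b,b+1,\cdots,b+\delta-1\} \subseteq T_{f}$ as residues modulo $n$. Reducing modulo $d$ gives $\{b,\cdots,b+\delta-1\} \subseteq \widetilde{T}_{f}$ as residues modulo $d$, since $\pi(T_{f}) = \widetilde{T}_{f}$. These $\delta$ residues are pairwise distinct modulo $d$ because $\delta \leq d-1 < d$. So $\widetilde{T}_{f}$ genuinely contains $\delta$ consecutive integers, not a wrapped-around repetition.

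For the converse, suppose $\{b,\cdots,b+\delta-1\} \subseteq \widetilde{T}_{f}$ modulo $d$. Lift $b$ to any integer and consider the integers $b,b+1,\cdots,b+\delta-1$ as residues modulo $n$. Each of them reduces modulo $d$ into $\widetilde{T}_{f}$, so by the preimage description each lies in $T_{f}$. Since $\delta \leq d-1 \leq n$, they are distinct modulo $n$, and $T_{f}$ therefore contains $\delta$ consecutive integers.

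I expect the only delicate point to be the bookkeeping about what ``contains $\delta$ consecutive integers'' means in a cyclic group. Specifically, one must ensure the $\delta$ elements are distinct in both $\mathbb{Z}/n\mathbb{Z}$ and $\mathbb{Z}/d\mathbb{Z}$, which is exactly where the hypothesis $\delta \leq d-1$ is used. The other point to check is that the preimage description depends only on $d \in \Sigma_{f}$, namely that $T_{f}$ is stable under $t_{d}$. It uses neither minimality of $d$ nor the restriction to cyclic codes.
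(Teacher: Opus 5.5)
Your proof is correct and takes essentially the same route as the paper. The paper also relies on $T_{f}$ being a union of full cosets of $d\mathbb{Z}/n\mathbb{Z}$, which you package as $T_{f}=\pi^{-1}(\widetilde{T}_{f})$, and it uses $\delta\leq d-1$ exactly as you do: to guarantee that the residues $b+i$ are distinct modulo $d$, i.e.\ that the cosets $b+i+d\mathbb{Z}/n\mathbb{Z}$ are distinct. Your preimage description also supplies the justification for the direction the paper calls obvious, and your remark that only $t_{d}$-stability (not minimality of $d$) is used is accurate.
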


\begin{proof}
	If $\widetilde{T}_{f}$ contains $\delta$ consecutive integers, it is obvious that so does $T_{f}$. Conversely, suppose that $T_{f}$ contains $\delta$ consecutive integers, say,
	$$\{b,b+1,\cdots,b+\delta-1\} \subseteq T_{f}.$$
	Since $\overline{\Sigma}_{f} = d\mathbb{Z}/n\mathbb{Z}$, $b+i+d\mathbb{Z}/n\mathbb{Z} \subseteq T_{f}$ for all $0 \leq i \leq \delta-1$. We claim that for any $0 \leq i < j \leq \delta-1$ the cosets $b+i+d\mathbb{Z}/n\mathbb{Z}$ and $b+j+d\mathbb{Z}/n\mathbb{Z}$ are distinct. Otherwise,
	$$b+i+d\mathbb{Z}/n\mathbb{Z} = b+j+d\mathbb{Z}/n\mathbb{Z}$$
	yields $j-i \in d\mathbb{Z}/n\mathbb{Z}$, which contradicts the condition $\delta \leq d-1$. Hence one has
	$$\bigsqcup_{i=0}^{\delta-1}(b+i+d\mathbb{Z}/n\mathbb{Z}) \subseteq T_{f},$$
	which implies that $\{b,b+1,\cdots,b+\delta-1\} \subseteq \widetilde{T}_{f}$.
\end{proof}

For any integer $a$ coprime to $n$, let $f_{a}(X)$ be the factor of $X^{n}-1$ with defining set
$$a^{-1}T_{f} = \{a^{-1}\gamma \ | \ \gamma \in T_{f}\},$$
and let $\mathcal{C}_{a}$ be the cyclic code of length $n$ generated by $f_{a}(X)$. The conclusion below is well-known. (See for instance Section $4.3$ in \cite{Huffman}.)

\begin{lemma}\label{lem 7}
	The cyclic codes $\mathcal{C}$ and $\mathcal{C}_{a}$ are permutation equivalent via the map
	$$\mu_{a}: \mathcal{C} \rightarrow \mathcal{C}_{a}; \ h(X) \mapsto h(X^{a}) \pmod{X^{n}-1}.$$
	In particular, $\mathcal{C}$ and $\mathcal{C}_{a}$ have the same Hamming distance.
\end{lemma}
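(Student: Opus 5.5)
The proof will show that $\mu_{a}$ is a weight-preserving bijection from $\mathcal{C}$ onto $\mathcal{C}_{a}$ given by a coordinate permutation. Since $\gcd(a,n)=1$, the map $i \mapsto ai \pmod{n}$ is a permutation of $\mathbb{Z}/n\mathbb{Z}$. The first step is to define $\mu_{a}$ on the whole ring $R = \mathbb{F}_{q}[X]/(X^{n}-1)$ by $h(X) \mapsto h(X^{a}) \bmod (X^{n}-1)$, and to check that it is well defined. This holds because $X^{n}-1$ divides $X^{an}-1$, so multiples of $X^{n}-1$ go to multiples of $X^{n}-1$.

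On monomials, $\mu_{a}(X^{i}) = X^{ai \bmod n}$. So $\mu_{a}$ sends $\sum c_{i}X^{i}$ to $\sum c_{i}X^{\pi(i)}$ with $\pi(i) = ai \bmod n$. This is the coordinate permutation $\pi$, hence an $\mathbb{F}_{q}$-linear bijection of $R$ preserving Hamming weight. It is also a ring homomorphism, being induced by the substitution $X \mapsto X^{a}$.

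The main step is to show that $\mu_{a}(\mathcal{C}) = \mathcal{C}_{a}$. Since $R$ is semisimple ($\gcd(n,q)=1$), an ideal of $R$ is determined by its zero set, namely the $n$-th roots of unity at which every element vanishes. The ideal $\mathcal{C}$ consists exactly of the $h$ with $h(\zeta_{n}^{\gamma}) = 0$ for all $\gamma \in T_{f}$. For such $h$ set $g = \mu_{a}(h)$. Then $g(\zeta_{n}^{\beta}) = h(\zeta_{n}^{a\beta})$, because reducing modulo $X^{n}-1$ does not change values at $n$-th roots of unity.

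Now take $\beta \in a^{-1}T_{f}$, so $a\beta \in T_{f}$ and $g(\zeta_{n}^{\beta}) = 0$. Thus $g$ vanishes on the defining set of $f_{a}$, and so $g \in (f_{a}(X)) = \mathcal{C}_{a}$. We then compare sizes: $|a^{-1}T_{f}| = |T_{f}|$, so $\dim \mathcal{C}_{a} = n - |T_{f}| = \dim \mathcal{C}$. Since $\mu_{a}$ is injective, $\mu_{a}(\mathcal{C}) = \mathcal{C}_{a}$.

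Two points need care here: that $a^{-1}T_{f}$ is closed under multiplication by $q$, so that $f_{a}$ lies over $\mathbb{F}_{q}$, and that divisibility by $f_{a}$ is equivalent to vanishing on its roots, which holds because $f_{a}$ is squarefree. Both are routine.

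Since $\mu_{a}$ is a weight-preserving bijection $\mathcal{C} \to \mathcal{C}_{a}$, the two codes have the same weight distribution, and in particular the same Hamming distance.
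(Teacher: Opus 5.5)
Your proof is correct. The paper gives no argument of its own here; it only cites Section 4.3 of Huffman--Pless. What you wrote is the standard argument behind that citation: $X^{i}\mapsto X^{ai \bmod n}$ is a coordinate permutation, zeros transform as $T_{f}\mapsto a^{-1}T_{f}$ via $g(\zeta_{n}^{\beta})=h(\zeta_{n}^{a\beta})$, and the dimension count $n-|T_{f}|$ forces $\mu_{a}(\mathcal{C})=\mathcal{C}_{a}$.
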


Combining Lemma \ref{lem 6} and \ref{lem 7}, we obtain the following theorem which compares the arithmetic Singleton bound and the BCH bound for $\mathcal{C}$.

\begin{theorem}
	Let the notations be defined as above. Set $\delta$ to be the largest integer for which there exist a positive integer $a$ coprime to $n$ and an integer $b$ such that 
	$$\{b,b+a,\cdots,b+(\delta-1)a\} \subseteq \widetilde{T}_{f}.$$
	Then the arithmetic Singleton bound for $\mathcal{C}$ is $| \widetilde{T}_{f} |+1$, and the BCH bound for $\mathcal{C}$ is $\delta+1$. Therefore the Hamming distance of $\mathcal{C}$ satisfies
	$$\delta+1 \leq d_{H}(\mathcal{C}) \leq | \widetilde{T}_{f} |+1.$$
\end{theorem}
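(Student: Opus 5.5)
The statement has three parts: the arithmetic Singleton bound equals $|\widetilde{T}_{f}|+1$; the BCH bound (in its generalized form, allowing the step $a$) equals $\delta+1$; and the sandwich inequality for $d_{H}(\mathcal{C})$. I will prove them in that order.

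\textbf{Arithmetic Singleton bound.} Write the coarsest MED representation as $T_{f}=\bigsqcup_{i=1}^{s}(\gamma_{i}+d\mathbb{Z}/n\mathbb{Z})$ with $\overline{\Sigma}_{f}=d\mathbb{Z}/n\mathbb{Z}$. By Theorem \ref{thm 4} the bound is $\frac{\tau}{\sigma_{f}}+1$, where $\sigma_{f}=n/d$ and $\tau=s\cdot\frac{n}{d}$, so it equals $s+1$. It remains to show $|\widetilde{T}_{f}|=s$, i.e. that the reductions $\gamma_{i}\bmod d$ are pairwise distinct. This holds because $\gamma_{i}\equiv\gamma_{j}\pmod d$ means $\gamma_{i}+d\mathbb{Z}/n\mathbb{Z}=\gamma_{j}+d\mathbb{Z}/n\mathbb{Z}$, and the cosets in a partition are distinct.

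\textbf{BCH bound.} The BCH bound is applied to all the permutation-equivalent codes $\mathcal{C}_{a}$ of Lemma \ref{lem 7}, whose defining sets are $a^{-1}T_{f}$. A run $\{b',b'+1,\dots,b'+\delta-1\}\subseteq a^{-1}T_{f}$ is equivalent to a progression $\{b,b+a,\dots,b+(\delta-1)a\}\subseteq T_{f}$ with $b=ab'$. Multiplication by $a$ fixes $d\mathbb{Z}/n\mathbb{Z}$, since $\gcd(a,n)=1$ and $d\mid n$. Hence the additive stabilizer of $a^{-1}T_{f}$ is again $d\mathbb{Z}/n\mathbb{Z}$, and $\widetilde{a^{-1}T_{f}}=a^{-1}\widetilde{T}_{f}$ in $\mathbb{Z}/d\mathbb{Z}$. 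Applying Lemma \ref{lem 6} to $a^{-1}T_{f}$ then transfers consecutive runs between $a^{-1}T_{f}$ and $a^{-1}\widetilde{T}_{f}$, that is, arithmetic progressions of difference $a$ between $T_{f}$ and $\widetilde{T}_{f}$.

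Lemma \ref{lem 6} only covers run lengths $\delta\le d-1$, so the case $\delta\ge d$ must be handled separately; I expect this to be the main bookkeeping obstacle. A progression of difference $a$ in $\widetilde{T}_{f}$ of length at least $d$ covers all of $\mathbb{Z}/d\mathbb{Z}$, since $a$ is a unit mod $d$. Then $\widetilde{T}_{f}=\mathbb{Z}/d\mathbb{Z}$ and $T_{f}=\mathbb{Z}/n\mathbb{Z}$, which is excluded by the assumption $f\ne X^{n}-1$. So every relevant run has length at most $d-1$. In this range progressions in $\widetilde{T}_{f}$ lift to progressions in $T_{f}$, and conversely, with the same maximal length. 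Therefore the best BCH bound over the family $\{\mathcal{C}_{a}\}$ is exactly $\delta+1$.

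\textbf{Sandwich inequality.} Lemma \ref{lem 7} gives $d_{H}(\mathcal{C})=d_{H}(\mathcal{C}_{a})$ for every $a$ coprime to $n$. Applying the BCH bound to each $\mathcal{C}_{a}$ yields $d_{H}(\mathcal{C})\ge\delta+1$. Proposition \ref{prop 2}, applied to the coarsest MED representation, together with the first part gives $d_{H}(\mathcal{C})\le|\widetilde{T}_{f}|+1$.
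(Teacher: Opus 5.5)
Your proposal is correct and follows the paper's route: Theorem \ref{thm 4} gives the arithmetic Singleton bound, and Lemmas \ref{lem 6} and \ref{lem 7} give the (multiplier-generalized) BCH bound. You also make explicit two points the paper leaves implicit, namely that the additive stabilizer $d\mathbb{Z}/n\mathbb{Z}$ is invariant under multiplication by a unit $a$, and that $f\neq X^{n}-1$ rules out progressions of length $\ge d$, which keeps $\delta\le d-1$ within the range of Lemma \ref{lem 6} and makes $\delta$ well defined.
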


\begin{corollary}\label{coro 5}
	The arithmetic Singleton bound and the BCH bound for $\mathcal{C}$ coincide if and only if $T_{f}$ has the form 
	$$T_{f} = (b+H)\sqcup(b+a+H)\sqcup\cdots\sqcup(b+(\delta-1)a+H),$$
	where $b$ is an integer, $a$ is a positive integer coprime to $n$, and $\delta$ is a positive integer less than $d$. 
If this is the case, then the Hamming distance of $\mathcal{C}$ is $\delta+1$. 
\end{corollary}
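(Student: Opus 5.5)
We reduce the equality of the two bounds to a statement about $\widetilde{T}_{f}$ alone, using the preceding theorem. Throughout, $H = d\mathbb{Z}/n\mathbb{Z}$ denotes the additive stabilizer $\overline{\Sigma}_{f}$. By that theorem the arithmetic Singleton bound is $|\widetilde{T}_{f}|+1$ and the BCH bound (maximized over multipliers $a$ coprime to $n$) is $\delta+1$. The bounds therefore coincide if and only if $\delta = |\widetilde{T}_{f}|$.

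Because $\delta \leq |\widetilde{T}_{f}|$ always holds, $\delta = |\widetilde{T}_{f}|$ means that $\widetilde{T}_{f}$ is exactly an arithmetic progression $\{b,b+a,\cdots,b+(\delta-1)a\}$ in $\mathbb{Z}/d\mathbb{Z}$ with $a$ coprime to $n$, and that its $\delta$ terms are distinct modulo $d$. Since the coarsest MED representation is $T_{f} = \bigsqcup_{i}(\gamma_{i}+H)$ and $\widetilde{T}_{f}$ is the image of the $\gamma_{i}$ modulo $d$, we have $T_{f} = \bigsqcup_{\bar\gamma \in \widetilde{T}_{f}}(\gamma+H)$. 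Substituting the progression gives exactly the displayed form $T_{f} = (b+H)\sqcup(b+a+H)\sqcup\cdots\sqcup(b+(\delta-1)a+H)$.

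For the converse, suppose $T_{f}$ has the displayed form. The cosets are disjoint, so $\widetilde{T}_{f}$ is the progression and has $\delta$ elements. Distinctness modulo $d$ forces $\delta \leq d$. The stated constraint $\delta < d$ should follow from $f \neq X^{n}-1$: if $\delta = d$, then $T_{f}$ would be all of $\mathbb{Z}/n\mathbb{Z}$, since $a$ is a unit modulo $d$ and the progression would then cover $\mathbb{Z}/d\mathbb{Z}$. Hence the BCH bound is at least $\delta+1$ while the arithmetic Singleton bound is $\delta+1$, and equality holds.

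Finally, when the bounds coincide, the inequality $\delta+1 \leq d_{H}(\mathcal{C}) \leq |\widetilde{T}_{f}|+1$ collapses to $d_{H}(\mathcal{C}) = \delta+1$.

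The main obstacle is the bookkeeping behind the BCH side. We must justify that a progression with difference $a$ in $\widetilde{T}_{f}$ really gives a BCH bound for $\mathcal{C}$. The plan is to pass to $\mathcal{C}_{a}$ via Lemma~\ref{lem 7}: $a^{-1}T_{f}$ has the same stabilizer $H$ because $a$ is a unit, and its reduction contains $\delta$ consecutive integers, so Lemma~\ref{lem 6} applies. One delicate point remains. Lemma~\ref{lem 6} requires $\delta \leq d-1$, which is exactly why the condition $\delta < d$ appears in the statement. We need to check that it holds automatically in the forward direction, and the argument via $f \neq X^{n}-1$ given above handles this.
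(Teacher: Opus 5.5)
Your proposal is correct and follows the paper's route: the corollary is read off from the preceding theorem. The two bounds coincide exactly when $\delta=|\widetilde{T}_{f}|$, i.e.\ when $\widetilde{T}_{f}$ is a full arithmetic progression whose step is coprime to $n$; this progression lifts along the cosets of $H$ to the displayed form, and $\delta<d$ is forced by $f\neq X^{n}-1$. The only slip is placement: the argument for $\delta<d$ belongs to the forward direction, where it must be proved, whereas in the converse it is a hypothesis, as you eventually note.
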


If $\mathcal{C}$ is an irreducible cyclic code, a criterion for the Hamming distance of $\mathcal{C}$ to be $2$, and a 
sufficient condition for it to be $3$ are given as follows, which are direct consequences of Corollary \ref{coro 5}.

\begin{corollary}\label{coro 6}
	Assume that $\mathcal{C}$ is an irreducible cyclic code.
	\begin{description}
		\item[(1)] The Hamming distance of $\mathcal{C}$ is $2$ if and only if $\mathrm{gcd}(n,q-1)> 1$.
		\item[(2)] If $\mathrm{gcd}(n,q-1)=1$ and $\omega =2$, then the Hamming distance of $\mathcal{C}$ is $3$.
	\end{description}
\end{corollary}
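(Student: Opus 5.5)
For (1) we argue directly with binomials in $\mathcal{C}$, which is how the comparison with the BCH bound in Corollary \ref{coro 5} specializes for weight $2$. For (2) we combine (1) with Theorem \ref{thm 5}. Throughout, $f(X)$ is irreducible of order $n$ and its defining set is $c_{n/q}(\gamma)$ with $\gcd(\gamma,n)=1$.

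\emph{Lower bound.} First we note $d_{H}(\mathcal{C})\geq 2$. A weight-one word $cX^{i}$ is a unit in $\mathbb{F}_{q}[X]/(X^{n}-1)$. Since $f(X)$ is a nonconstant proper divisor of $X^{n}-1$, the ideal $(f(X))$ contains no units.

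\emph{Reduction for a weight-two word.} Multiplying by a suitable power of $X$, which is a unit, a weight-two codeword can be normalized to $X^{k}-c$ with $1\leq k\leq n-1$ and $c\in\mathbb{F}_{q}^{\ast}$. Such a polynomial lies in $\mathcal{C}$ if and only if $f(X)\mid X^{k}-c$. Since $f$ is irreducible with root $\zeta_{n}^{\gamma}$, this holds if and only if $\zeta_{n}^{\gamma k}=c$ for some $c\in\mathbb{F}_{q}^{\ast}$. Equivalently, $\zeta_{n}^{\gamma k}$ is a $(q-1)$-th root of unity, i.e. $n\mid \gamma k(q-1)$. As $\gcd(\gamma,n)=1$, this is $n\mid k(q-1)$, i.e. $\frac{n}{\gcd(n,q-1)}\mid k$.

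\emph{Proof of (1).} Some $k$ with $1\leq k\leq n-1$ satisfies $\frac{n}{\gcd(n,q-1)}\mid k$ precisely when $\gcd(n,q-1)>1$. In that case take $k=n/\gcd(n,q-1)$ and $c=\zeta_{n}^{\gamma k}=\zeta_{\gcd(n,q-1)}^{\gamma}$. This $c$ lies in $\mathbb{F}_{q}$ because $\gcd(n,q-1)\mid q-1$, so $X^{k}-c\in\mathcal{C}$ is a nonzero word of weight $2$. Together with the lower bound, $d_{H}(\mathcal{C})=2$ if and only if $\gcd(n,q-1)>1$. This step is routine; the only care needed is that $k<n$, so that $X^{k}-c$ is nonzero modulo $X^{n}-1$.

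\emph{Proof of (2).} If $\gcd(n,q-1)=1$, part (1) gives $d_{H}(\mathcal{C})\geq 3$. Theorem \ref{thm 5}(2), applied with $\lambda=1$ and length $n$ (so $f$ is an irreducible generator of order $n$), gives $d_{H}(\mathcal{C})\leq b_{\mathrm{AS}}(\mathcal{C})=\omega+1=3$. Hence $d_{H}(\mathcal{C})=3$.

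The only point to check is that $\omega$ in Theorem \ref{thm 5} is the same $\omega$ as in the corollary. This holds because both are computed from $n=\mathrm{ord}(f)$, which is an assumption of this section.
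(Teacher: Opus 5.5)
Your proof is correct, but it takes a different route from the paper. The paper gives no argument beyond asserting that both parts follow directly from Corollary \ref{coro 5}, that is, from the arithmetic Singleton bound coinciding with the multiplier-twisted BCH bound.

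\textbf{Part (2).} In the paper's framework the argument runs as follows. The reduced set $\widetilde{T}_{f}=\{\gamma,\gamma q\}\pmod{d_{0}}$ has $\omega=2$ elements. Its step $\gamma(q-1)$ is a unit modulo $n$ exactly when $\gcd(n,q-1)=1$. So after the multiplier of Lemma \ref{lem 7} and Lemma \ref{lem 6}, the BCH bound equals $3=b_{\mathrm{AS}}(\mathcal{C})$. You get the same lower bound $d_{H}(\mathcal{C})\geq 3$ differently, from your explicit description of weight-two codewords: $X^{k}-c\in\mathcal{C}$ iff $\frac{n}{\gcd(n,q-1)}\mid k$ and $c=\zeta_{n}^{\gamma k}$. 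You then take the upper bound from Theorem \ref{thm 5}, as the paper does. Your version is more elementary and avoids the multiplier/BCH machinery. The paper's version explains the hypothesis $\gcd(n,q-1)=1$ as exactly the condition that puts the code in the regime where the two bounds coincide.

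\textbf{Part (1).} Here your route does more than the cited one, so your opening remark that the binomial argument is how Corollary \ref{coro 5} ``specializes'' to weight two overstates the connection. Corollary \ref{coro 5} can only yield $d_{H}(\mathcal{C})=2$ when $|\widetilde{T}_{f}|=\omega=1$, that is, when $f$ is itself a binomial. The ``if'' direction must also cover cases like $q=3$, $n=10$, $f=\Phi_{10}(X)$:
\begin{itemize}
\item $\omega=4$, so $b_{\mathrm{AS}}(\mathcal{C})=5$;
\item all differences in $T_{f}=\{1,3,7,9\}$ are even, so no two elements form a progression with unit step, and the BCH bound is only $2$;
\item yet $X^{5}+1=(X+1)\Phi_{10}(X)$ is a weight-two codeword.
\end{itemize}
Your binomial $X^{n/g}-\zeta_{g}^{\gamma}$ with $g=\gcd(n,q-1)$ is exactly what handles such cases. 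So your proof of (1) is self-contained, whereas the paper's one-line justification does not cover this direction.
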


Finally, we exhibit Corollary \ref{coro 5} and \ref{coro 6} with the next two examples.

\begin{example}
	\begin{itemize}
		\item[(1)] Let $q = 5$ and $n = 2^{e_{1}}3^{e_{2}}$ for $e_{1},e_{2} \in \mathbb{N}$. Let 
		$$\mathcal{C} = (f(X)) \subseteq \mathbb{F}_{q}[X]/(X^{n}-1)$$
		be an irreducible cyclic code, with $\mathrm{ord}(f) = n$. Then the Hamming distance of $\mathcal{C}$ is 
		\begin{equation*}
			d_{H}(\mathcal{C}) = \left\{
			\begin{array}{lcl}
				2, \ \mathrm{if} \ e_{1}\geq 1;\\
				3, \ \mathrm{if} \ e_{1}=0.
			\end{array} \right.
		\end{equation*}
		
		\item[(2)] Let $q = 11$ and $n = 35$. Write 
		$$H = 5\mathbb{Z}/35\mathbb{Z} \subseteq \mathbb{Z}/35\mathbb{Z}.$$
		The subset 
		$$T = (1+H)\sqcup(2+H)\sqcup(3+H)\sqcup(4+H)$$
		gives rise to a polynomial $f_{T}(X)$ over $\mathbb{F}_{q}$. Then the cyclic code
		$$\mathcal{C}_{T} = (f(X)) \subseteq \mathbb{F}_{q}[X]/(X^{n}-1)$$
		has arithmetic Singleton bound and BCH bound both equaling to $5$, hence
		$$d_{H}(\mathcal{C}) = 5.$$
		On the other hand, the Singleton bound for $\mathcal{C}_{T}$ is $29$.
	\end{itemize}
\end{example}

\section{Conclusion}
In this paper, we investigated how the arithmetic structure of a simple-root constacyclic code constrains its Hamming distance and introduced an explicit quantitative measure of this phenomenon. Our main contributions can be summarized as follows.

We generalize the notion of a MED representation of a cyclotomic coset, which is first introduced in \cite{Zhu3}, to the case of the defining set of any squarefree polynomial over $\mathbb{F}_{q}$, and show that the results on the structure and basic properties of the space of MED representations of a cyclotomic coset have their natural counterparts in the general case. Given a simple-root constacyclic code
$$\mathcal{C} = (f(X)) \subseteq \mathbb{F}_{q}[X]/(X^{m}-\lambda),$$
where $f(X) \mid X^{m}-\lambda$ is the generator polynomial of $\mathcal{C}$, with order $n$ and defining set $T_{f}$, the MED representations of $T_{f}$ leads to a family of upper bounds on the Hamming distance of $\mathcal{C}$, among which the weakest one coincides with the Singleton bound, while the strongest one is defined to be the arithmetic Singleton bound for this code. The arithmetic Singleton bound to some extent measures the restriction on the Hamming distance of $\mathcal{C}$ imposed by its arithmetic structure.

Clearly the arithmetic Singleton bound is always stronger than the Singleton bound, and their difference is governed by the additive stabilizer $\overline{\Sigma}_{f}$ of the defining set of $f(X)$. In particular, the arithmetic Singleton bound is strictly stronger than the Singleton bound if and only if $\overline{\Sigma}_{f}$ is nontrivial, which happens for infinitely many families of constacyclic codes. For instance, as the MED representations of a cyclotomic coset can be completed determined, the arithmetic Singleton bound for an irreducible constacyclic code, together with the gap between it and the classical Singleton bound, can be computed concretely. If the order $n$ of its generator polynomial has the prime factorization
$$n = p_{1}^{e_{1}}\cdots p_{r}^{e_{r}},$$
then the difference between the arithmetic Singleton bound and the Singleton bound grows dramatically as the $e_{i}$'s grow. In fact, if any $e_{i}$ tends to infinity then so does the difference.

We propose several problems for future research:

\begin{itemize}
	\item[(1)] The additive stabilizer $\overline{\Sigma}_{f}$, or equivalently the common difference set $\Sigma_{f}$, associated to the defining set $T_{f}$ of $f(X)$ determines the MED representations of $T_{f}$, and consequently governs the arithmetic Singleton bound for the constacyclic code generated by $f(X)$. Can we give an explicit formula to compute $\overline{\Sigma}_{f}$ and $\Sigma_{f}$?
	\item[(2)] For an irreducible constacyclic code $\mathcal{C}$ with generator polynomial $f(X)$, we see that the arithmetic Singleton bound for $\mathcal{C}$ is determined by $\mathrm{ord}_{\mathrm{rad}(n)}(q)$ up to a constant factor, where $n$ is the order of $f(X)$. Therefore if fixing the prime divisors of $n$, the arithmetic Singleton bound is bounded when $n$ grows. To what extent is this fact true in the general case?
	\item[(3)] Extend the MED representation framework to repeated-root constacyclic codes.
\end{itemize}

\section*{Acknowledgment}
The first author was supported by Basic Research Program Young Scientists Guidance Project of Guizhou Province (QN[2025]186).

\section*{Data availability}
Data sharing not applicable to this article as no datasets were generated or analysed during the current study.

\section*{Declaration of competing interest}
The authors declare that we have no known competing financial interests or personal relationships 
that could be perceived to influence the work reported in this paper.


\begin{thebibliography}{99}
\bibitem{Alderson} T. Alderson, On the weights of general MDS codes, IEEE Transactions on Information Theory, vol. 66, no. 9, 2020.

\bibitem{Dinh} Hai Q. Dinh, Xiaoqiang Wang, Hongwei Liu, Woraphon Yamaka, Hamming distances of constacyclic codes of length 3ps and optimal codes with respect to the Griesmer and Singleton bounds, Finite Fields and Their Applications, 70(2021), 101794.

\bibitem{Huffman} W. Huffman, V. Pless, Fundamentals of Error-Correcting Codes, Cambridge: Cambridge University Press, 2003.
	
\bibitem{MacWilliams} F. MacWilliams, N. Sloane, The Theory of Error-Correcting Codes, North-Holland, 1977.
	  
\bibitem{Nezami} S. Nezami, Leme Do Khat (in English: Lifting The Exponent Lemma), pulished on Oct., 2009. 
	
\bibitem{Wan} Z. Wan, Lectures on Finite Fields and Galois Rings. World Scientific, Singapore, 2003.  
	
\bibitem{Zhu} L. Zhu, J. Liu, H. Wu, Explicit representatives and sizes of cyclotomic cosets and their application to cyclic codes over finite fields, Finite Fields and Their Applications, 111(2026), 102761.
	
\bibitem{Zhu3} L. Zhu, J. Zhou, J. Liu, H. Wu, The Multiple Equal-Difference Structure of Cyclotomic Cosets, arXiv: 2501.03516.	
	
\bibitem{Zhu2} L. Zhu, H. Wu, The Minimal Binomial Multiples of Polynomials over Finite Fields, arXiv: 2510. 18624v1.
\end{thebibliography}
\end{document}